\documentclass[reqno]{amsart}
\renewcommand{\scshape}{\normalfont\scshape} 
\usepackage{enumitem}
\usepackage{amsmath, amsthm, amssymb}
\usepackage{comment}

\newtheorem{theorem}{Theorem}[section]
\newtheorem{lemma}[theorem]{Lemma}
\newtheorem{proposition}[theorem]{Proposition}

\theoremstyle{definition}
\newtheorem{definition}[theorem]{Definition}

\newtheorem{assumption}[theorem]{Assumption}

\theoremstyle{remark}
\newtheorem{remark}[theorem]{Remark}
\numberwithin{equation}{section}

\allowdisplaybreaks

\begin{document}

\title{Energy Diffusion in a Pinned Harmonic Chain with Free Boundary Condition and Interactions of Increasing Range}

\author{Yipeng Lu}
\address{School of Mathematics, Nanjing University, Nanjing 210093, China}
\email{yipenglu@smail.nju.edu.cn}

\author{Wei Wang}
\address{School of Mathematics, Nanjing University, Nanjing 210093, China}
\email{wangweinju@nju.edu.cn}
\thanks{This work is supported by NSFC  No.12371243.}

\subjclass[2020]{Primary 60H10; Secondary 70F45, 82C22}

\date{\today}


\keywords{Energy diffusion, heat equation, wave function, Wigner transform, vague topology}

\begin{abstract}
We consider a finite two-dimensional pinned harmonic chain with an interaction range that diverges with the system size and with conservative rotational noise. Under a joint scaling in which the interaction range diverges microscopically but vanishes macroscopically, we prove that the expected empirical energy measures converge to the unique measure-valued weak solution of the heat equation.
\end{abstract}

\maketitle

\markboth{ENERGY DIFFUSION IN A PINNED HARMONIC CHAIN}{YI PENG LU AND WEI WANG}


\section{Introduction}

A fundamental problem in nonequilibrium statistical mechanics is to derive
irreversible macroscopic transport equations from microscopic particle
dynamics. In hydrodynamic limits, the size of the microscopic system tends to infinity while space and time are suitably rescaled. Conserved quantities such as energy, momentum, and particle density are then expected to evolve to deterministic partial differential equations. In particular, when energy is the only relevant macroscopic conserved field and its transport is diffusive, the empirical energy distribution is expected to evolve according to the heat equation. General references on hydrodynamic limits include \cite{spohnLargeScaleDynamics1991,kipnisScalingLimitsInteracting1999};
for Hamiltonian systems with weak conservative noise, see
\cite{ollaHydrodynamicalLimitHamiltonian1993}.

The study of heat transport in harmonic chains goes back at least to the
work of Rieder, Lebowitz, and Lieb \cite{riederPropertiesHarmonicCrystal1967}. Their analysis of a harmonic chain coupled to heat reservoirs showed that energy transport is ballistic and that Fourier’s law does not generally hold in a purely harmonic system. This is closely related to the integrability of harmonic dynamics and the resulting absence of effective scattering mechanisms. In a broad class of one-dimensional unpinned acoustic chains, momentum conservation is associated with weakly scattered long-wavelength modes, which may lead to anomalous thermal conductivity and energy superdiffusion; see \cite{lepriThermalConductionClassical2003,dharHeatTransportLowdimensional2008}. In contrast, pinning removes momentum
conservation and the zero-frequency acoustic mode. Nevertheless, in the purely harmonic setting, pinning alone does not generate diffusive transport, because the dynamics remains linear and integrable. A standard way to produce mixing while preserving the relevant conservation laws is to perturb the Hamiltonian dynamics by conservative noise. For harmonic chains with conservative stochastic perturbations, hydrodynamic limits, results concerning Fourier's law, Green--Kubo conductivity results, and equilibrium energy fluctuations were studied, respectively, in \cite{bernardinHydrodynamicsSystemHarmonic2007}, \cite{bernardinFouriersLawMicroscopic2005},\cite{basileMomentumConservingModel2006,basileThermalConductivityMomentum2009}, and \cite{basileEnergyDiffusionHarmonic2014}. For nonlinear chains, the corresponding problem of energy diffusion in nonequilibrium systems remains substantially more difficult; see \cite{bernardinTransportPropertiesChain2011,ollaMacroscopicEnergyDiffusion2013}.

For one-dimensional unpinned harmonic chains, the macroscopic behavior is
typically superdiffusive rather than diffusive. Basile, Olla, and Spohn
derived a phonon Boltzmann equation from stochastically perturbed lattice
dynamics by means of the Wigner distribution \cite{basileEnergyTransportStochastically2010}.
Jara, Komorowski, and Olla subsequently proved a direct fractional
superdiffusive limit for an unpinned harmonic chain with conservative noise,
while recovering the usual heat equation in the pinned case
\cite{jaraSuperdiffusionEnergyChain2015}. Related long-time asymptotics for
Wigner distributions and stochastic lattice wave equations were studied in
\cite{komorowskiLongTimeLarge2012,komorowskiAsymptoticsSolutionsStochastic2013}.

The Wigner transform, originally introduced in quantum statistical mechanics~\cite{wignerQuantumCorrectionThermodynamic1932}, is particularly useful for linear lattice systems because it describes how the energy is distributed over macroscopic positions and microscopic wave numbers. Although the wave function itself contains rapidly oscillating phases, its two-point correlations admit a meaningful phase-space description. Moreover, for linear harmonic dynamics, the evolution of the Wigner distribution and of the associated auxiliary quadratic quantities closes at the level of second moments.

Many existing results concern infinite lattices, periodic systems, or interaction kernels that are fixed independently of the system size. Finite systems with nonperiodic boundary conditions require additional arguments because translation invariance is lost. Hydrodynamic limits with boundary conditions and conservative noise were studied in \cite{braxmeier-evenHydrodynamicLimitHamiltonian2014}; see also \cite{komorowskiHeatFlowPeriodically2023b} for related models with boundary driving and thermostats. Weakly coupled oscillator systems and related Green--Kubo problems were considered in \cite{liveraniFourierLawWeakly2011,bernardinGreenKuboFormulaWeakly2015}.
Finally, Canestrari, Liverani, and Olla showed that a rapidly evolving
deterministic chaotic force acting as a magnetic field may produce an
effective stochastic rotational noise and yield the heat equation for the
expected energy density \cite{canestrariHeatEquationDeterministic2026}.

\par In the present paper, we study a finite two-dimensional pinned harmonic chain with an interaction range that diverges with the system size and with conservative rotational noise. The lattice sites are indexed by $\{-N,\ldots,N\}$, and the position and momentum at each site take values in $\mathbb R^2$. The interaction is generated by a smooth, compactly supported, even, and nonpositive profile $\alpha$, and its microscopic range is of order
$\delta_N(2N+1)$.

\par The stochastic perturbation is the Brownian rotation in the
two-dimensional momentum space that appears in the effective stochastic
dynamics discussed in
\cite{canestrariHeatEquationDeterministic2026}. Owing to the
antisymmetry of the rotation matrix $J$ and the corresponding It\^o
correction, the noise preserves the kinetic energy. Together with the
harmonic dynamics, it yields exact conservation of the total energy. Since
the pinning frequency satisfies $\omega_0>0$, the system has no
zero-frequency acoustic mode of the type present in unpinned chains, and
normal energy diffusion is therefore expected on the macroscopic scale.

\par We impose a pinned chain with free boundary condition as referenced in \cite{chaudhuriHeatTransportPhonon2010}. In contrast to the harmonic chain with a fixed interaction range considered in \cite{jaraSuperdiffusionEnergyChain2015}, the interaction kernel in the present model depends on $N$. Its microscopic range
$\delta_N(2N+1)$ tends to infinity, whereas its macroscopic length
$\varepsilon_N\delta_N(2N+1)$ tends to zero. We therefore consider a joint
diffusive scaling in which, as $N\to\infty$, the rescaled spatial domain
expands to $\mathbb R$, the interaction remains local on the macroscopic
scale, and a nontrivial limit emerges on the time scale
$a_N=\bigl(\delta_N(2N+1)\varepsilon_N^2\bigr)^{-1}.$
Compared with the usual diffusive time scale $\varepsilon_N^{-2}$ for
fixed-range models, the time scale $a_N$ contains the additional factor
$\left(\delta_N(2N+1)\right)^{-1}$. This modification is caused by the growth of the interaction range, since $\int_{\mathbb T}|\omega_N'(k)|^2\,dk$ is of order $\delta_NN$. Hence, the time scale $\bigl(\delta_N(2N+1)\varepsilon_N^2\bigr)^{-1}$ is the natural one on which a finite and nontrivial macroscopic diffusion limit can be observed.

\par We allow the initial macroscopic energy profile to be a nonnegative Radon
measure $\mu_0$ on $\mathbb R$. We do not require $\mu_0$ to possess a
density or to have finite total mass; only a suitable exponentially
weighted integrability condition is imposed. Our main result states that,
under the scaling and initial assumptions specified in Section~2, the
expected empirical energy measures converge, uniformly on every compact
time interval in the vague topology, to the unique measure-valued weak
solution of the heat equation. The effective diffusivity is
$\frac{\widehat c}{4\pi^2\gamma}$, where $\widehat c$ is defined in \eqref{hatc const}.
The coefficient $\widehat c$ has a natural interpretation in terms of the
dispersion relation. Up to the normalization factor determined by our
Fourier-transform convention, it represents the limiting normalized
contribution of the squared group velocities of the microscopic
wave-number modes. It is determined by the interaction profile $\alpha$
and the pinning frequency $\omega_0$, while the factor $\gamma^{-1}$
reflects the mixing time of the rotational noise. Thus, the limiting
diffusivity incorporates the effects of the microscopic interaction, the
pinning strength, and the intensity of the stochastic rotation.
\par The proof is based on the Wigner transform. We first prove that the local
energy and the corresponding wave-function energy are asymptotically
equivalent after testing on the macroscopic scale. We then derive a closed
system of evolution equations for the Wigner transform and its auxiliary
quadratic quantities. By identifying the limiting quadratic Fourier symbol
at small macroscopic wave numbers, we obtain the weak formulation of the
heat equation. Finally, local mass estimates and time equicontinuity are
used to establish compactness and identify the limit in the space of
nonnegative Radon measures endowed with the vague topology.
\par A significant additional difficulty comes from the free boundary condition. Unlike an infinite-volume or periodic boundary condition, zero
extension destroys translation invariance in the finite system. As a
consequence, the quantity $\eta_x^N$ defined in \eqref{def:etaN} does not
vanish near the boundary and produces the boundary remainders
$\widehat{\mathcal R}_N^{(i)}$ in the Wigner equations. These remainders
cannot be shown to vanish by estimating their untested norms directly.
Instead, in Lemma~\ref{lem:yuxiang}, we first pair them with a test function
and then use integration by parts together with the rapid decay of Schwartz
functions and their Fourier transforms. This allows us to prove that the
boundary contributions vanish in the macroscopic weak limit.
\par Another difficulty is that, under the assumptions of the present paper, we
only have an estimate of the form
$\sup_{t,p}
\int_{\mathbb T}
\left|\widehat W_N(t,p,k)\right|\,dk
\leq C N\varepsilon_N,$
and the right-hand side is not uniformly bounded in $N$. This differs from
the finite-macroscopic-energy framework of
\cite{jaraSuperdiffusionEnergyChain2015}, where a uniform bound on the
Wigner distributions is available, and from the periodic system with
normalized total energy considered in~\cite{canestrariHeatEquationDeterministic2026}. Consequently, compactness cannot be deduced directly from a uniform bound on the total mass. Instead, we establish uniform local mass estimates for the empirical energy measures on $\mathbb R$ and study their convergence in the vague topology.

\par Finally, the assumptions on both the initial data and the scaling sequences
are relatively general. The initial profile $\mu_0$ is only required to be
a locally finite nonnegative Radon measure that acts finitely on a prescribed
exponentially decaying weight, and it may therefore have infinite total
mass. Moreover, the scaling assumptions are formulated in terms of
asymptotic relations and do not require $\delta_N$ and $\varepsilon_N$ to
be powers of $N$. The result consequently applies to a broader class of
mesoscopic scaling regimes than those described by polynomial sequences
alone.

\subsection{Notations}
Throughout this paper, we adopt the following standard notations and conventions.
\begin{enumerate}[label=(\roman*)]
    \item Unless otherwise specified, all limits (denoted by $\to$) and asymptotic notations (such as $O,o, \ll$) are tacitly understood in the limit as $N \to \infty$.
    \item We denote the set of nonnegative integers by $\mathbb{N} = \{0, 1, 2, \dots\}$, and the set of positive integers by $\mathbb{N_+} = \{1, 2, \dots\}$.
    \item We denote by $\mathcal{M}_+(\mathbb{R})$ the set of all nonnegative Radon measures on $\mathbb{R}$.
    
    \item We denote by $f^*$ the complex conjugate of $f$.
    \item For vectors $q_x, p_x \in \mathbb{R}^2$, we use the simplified notation $q_x p_x$ to denote their standard Euclidean inner product $q_x \cdot p_x$. Accordingly, $p_x^2$ denotes the squared Euclidean norm $|p_x|^2$.
    \item For discrete sequences $a, b \in \ell^2(\mathbb{Z}; \mathbb{R}^2)$, we denote their inner product by $\langle a, b \rangle_{\ell^2(\mathbb{Z})} := \sum_{x \in \mathbb{Z}} a_x b_x^*$. Similarly, for functions $f, g\in L^2(\Omega)$ (where $\Omega$ is $\mathbb{R}$ or $\mathbb{T}$), the $L^2$-inner product is defined as $\langle f, g \rangle_{L^2(\Omega)} := \int_{\Omega} f(x) g^*(x) \, dx$.
    \item We use $A \lesssim B$ to mean that $A \le C B$ for a generic constant $C > 0$ independent of $N$ and the test function $\varphi$. Furthermore, $A \asymp B$ indicates that $A \lesssim B$ and $B \lesssim A$.\label{note:lesssim}
    \item  $O(A)$ denotes any quantity B satisfying $|B|\lesssim A$. \label{note:O}

\end{enumerate}

\section{The model and main result}
We first introduce the interaction profile.
\begin{assumption}
Let $\alpha \in C_c^\infty(\mathbb{R})$ satisfy the following assumptions:
\begin{enumerate}[label=(a\arabic*)]
    \item \label{a1} The support of the function is compact, $\operatorname{supp}(\alpha) \subset [-M, M]$ for some $M > 0$,
\item \label{a2} $\alpha$ is an even function, which means $\alpha(u) = \alpha(-u)$ for all $u \in \mathbb{R}$,
\item \label{a3} $\alpha$ is non-positive, namely $\alpha(u) \le 0$ for all $u \in \mathbb{R}$.
\end{enumerate}
\end{assumption}
Based on the profile $\alpha$, the interaction coefficients $\alpha_x^N$ are defined as
\begin{align} \label{def:alphaN}
    \alpha_x^N :=
    \begin{cases} 
    \frac{1}{\delta_N (2N+1)} \alpha\left( \frac{x}{\delta_N (2N+1)} \right),  & x \neq 0, \\
    \omega_0^2 - \sum_{y \neq 0} \alpha_y^N, & x = 0.
    \end{cases}
\end{align}
A direct consequence of \eqref{def:alphaN} is that the sum of all coefficients yields the squared pinning frequency. That is, the zero-th Fourier coefficient satisfies
\begin{align}
    \widehat{\alpha^N}(0) = \sum_{x \in \mathbb{Z}} \alpha_x^N = \omega_0^2 > 0.
\end{align}

We consider a finite system of interacting oscillators. The dynamics of the position $q_x(t)$ and the momentum $p_x(t)$ is governed by the following system of stochastic differential equations
\begin{align} \label{eq:main_SDE}
    \begin{cases}
    dq_x(t) = p_x(t) dt, \\
    dp_x(t) = \left( Gq_x(t) - \omega_0^2 q_x(t) \right) dt - \gamma p_x(t) dt + \sqrt{2\gamma} J p_x(t) dw_x(t).
    \end{cases}
\end{align}
Here, $x \in \{-N,...N\},$ for notational convenience in discrete convolutions and Fourier transforms, we extend the position and momentum variables to the whole lattice by setting
$q_x(t)=p_x(t)=0,$ for $|x|>N.$ For each site $x$, the position and momentum are two-dimensional vectors, i.e., $q_x(t) = (q_{x,1}(t), q_{x,2}(t)) \in \mathbb{R}^2$ and $p_x(t) = (p_{x,1}(t), p_{x,2}(t)) \in \mathbb{R}^2$. For simplicity, we denote the squared Euclidean norm by $p_x^2 := |p_x|^2 = p_{x,1}^2 + p_{x,2}^2$.

The parameters $\omega_0 > 0$ and $\gamma > 0$ represent the pinning frequency and the friction coefficient, respectively. The noise is driven by a family of independent standard one-dimensional Brownian motions $\{w_x(t)\}_{x\in\mathbb{Z}}$. The matrix $J$ generates a rotation in $\mathbb{R}^2$, which is given by
\begin{align*}
    J = \begin{pmatrix} 0 & 1 \\ -1 & 0 \end{pmatrix}.
\end{align*}
The operator $G$ represents the interaction between oscillators, defined as
\begin{align*}
    Gq_x(t) = -\sum_{y=-N}^{N} \alpha_{x-y}^N (q_y(t) - q_x(t)),\qquad |x|\leq N.
\end{align*}
We define the single particle energy
\begin{equation}
    e_x= \frac{1}{2}p_x^2  - \frac{1}{4}  \sum_{y=-N}^{N}  \alpha_{x-y}^N (q_y - q_x)^2+\frac{\omega_0^2}{2} q_x^2,\qquad |x|\leq N.
\end{equation} 
For notational convenience, we extend the energy to the whole lattice by setting
$e_x=0$ for $|x|>N$\,,  and the total energy 
\begin{align*} \mathcal{E}_N &= H_N(p, q) = \sum_{x=-N}^{N} e_x \\ &= \sum_{x=-N}^{N} \frac{p_x^2}{2} - \frac{1}{4} \sum_{x=-N}^{N} \sum_{y=-N}^{N} \alpha_{x-y}^N (q_y - q_x)^2 + \frac{\omega_0^2}{2} \sum_{x=-N}^{N} q_x^2 \\ &= \frac{1}{2} \sum_{x=-N}^{N} p_x^2 + \frac{1}{2} \sum_{x=-N}^{N} \sum_{y=-N}^{N} \alpha_{x-y}^N q_x q_y  - \frac{1}{2} \sum_{x=-N}^{N} \left( \sum_{y=-N}^{N} \alpha_{x-y}^N - \omega_0^2 \right) q_x^2. \end{align*}
Set 
\begin{align} \label{def:etaN}
    \eta_x^N:= \sum_{y=-N}^{N}   \alpha_{x-y}^N-\omega_0^2,
\end{align}
then
\begin{equation}
    \mathcal E_N=\frac{1}{2}\sum_{x=-N}^{N}p_x^2+\frac{1}{2} \sum_{x=-N}^{N} \sum_{y=-N}^{N} 
    \alpha_{x-y}^Nq_xq_y-\frac{1}{2}\sum_{x=-N}^{N}\eta_x^Nq_x^2.
\end{equation}

Let
\begin{align} 
    &\omega_N(k):=\sqrt{\widehat{\alpha^N }(k)}>0,  \label{def:omega}\\
    &a_N:=\left(\delta_N(2N+1)\varepsilon_N^2 \right)^{-1}\,,\label{def:aN}
\end{align}
and for any test function $ \varphi \in C(\mathbb{R}) $,  
\begin{equation}
    I_{N}(t,\varphi) := \varepsilon_N \sum_{x=-N}^{N} \varphi(\varepsilon_N x)\mathbb{E} \left[  e_x(a_Nt) \right].
\end{equation}
\begin{assumption}\label{assum:main_conditions}
We assume the following conditions hold throughout the paper.
 \begin{enumerate}[label=(H\arabic*)]
   \item \label{H1} There exists a constant $\mathcal{E}_* > 0$ such that
$$\mathbb{E}[\mathcal{E}_N(0)] \le \mathcal{E}_* N.$$

    \item \label{H2} 
    There exist $b_0>0$ and an integer $m_0\geq 3$ such that, as $N\to\infty$, the scaling parameters satisfy
\begin{align*}
     &\delta_N N \to \infty, \ \ N \varepsilon_N^{1+1/m_0} \to \infty, \ \delta_N^2 N^3\varepsilon_N^2 \to 0,
     \\ &\delta_N^{m_0} N\to 0,\ \frac{e^{-b_0N\varepsilon_N}}{\delta_N\varepsilon_N} \to 0.
\end{align*}

\item \label{H3}
For every  test function $ \varphi \in C_c^\infty(\mathbb{R}),$ 
\begin{equation}
    \lim_{N \to \infty} I_{N}(0,\varphi) = \int_{\mathbb{R}} \varphi(u) \mu_0(du),
\end{equation}
 for some  $\mu_0 \in \mathcal{M_+}(\mathbb{R}).$
Moreover, the measure $\mu_0$ satisfies
\begin{equation} \label{H3,2}
    \int_\mathbb{R} e^{-b_0|x|}\mu_0(dx) < \infty,
\end{equation}
where $b_0$ is the same constant as in \ref{H2}.

\item \label{H4}
With $b_0$ as in \ref{H2}, there exists a constant $b_1\in(0,b_0)$
such that
\begin{align}
    \sup_N \varepsilon_N\sum_{x=-N}^{N} e^{-b_1\varepsilon_N |x|}\mathbb{E}\left[  e_x(0) \right] <\infty.
\end{align}
\end{enumerate}
\end{assumption}

\begin{remark}
The class of scaling parameters satisfying \ref{H2} is nonempty.
Indeed, regarding \ref{H2}, set $\delta_N = N^{-b}$ and $\varepsilon_N = N^{-c}$, 
with 
\[
    \frac12<b<1,\quad\frac12<c<1,\quad b+c>\frac32
\]
then \ref{H2} holds for a sufficiently large integer $m_0\geq3$.
\end{remark}

\begin{remark}
Since $\alpha$ is compactly supported, the interaction coefficients defined
in \eqref{def:alphaN} vanish whenever $|x|>M\delta_N(2N+1)$. By \ref{H2}, the microscopic interaction range is of order $\delta_N(2N+1) \to \infty$. Thus, on the microscopic lattice scale, each particle interacts with an increasing number of particles. On the other hand, $\delta_N(2N+1)/(2N+1)=\delta_N \to 0$, so the interaction range is negligible compared with the total number of particles in the system. Moreover, its macroscopic length satisfies $\varepsilon_N\delta_N(2N+1)\to 0.$ Therefore, the interaction is long-range on the microscopic scale, but remains local on the macroscopic scale.
\end{remark}

We define the class of real-valued functions
\begin{equation}
    \mathcal{C} := \left\{ \varphi \in C^\infty(\mathbb{R}) : \text{for every } m \in \mathbb{N},\ 
    \sup_{x \in \mathbb{R}}  e^{b_0|x|} |\varphi^{(m)}(x)| < \infty \right\}.
\end{equation}
Then $C_c^\infty(\mathbb{R})\subset\mathcal{C}\subset\mathcal{S}(\mathbb{R})$,
and for every $\varphi\in\mathcal{C}$ and $m\in\mathbb{N}$, there exists a constant $C_{m,\varphi}>0$ such that
\begin{equation} \label{phi_exp——decay}
    \left|\varphi^{(m)}(x)\right|
    \leq C_{m,\varphi}e^{-b_0|x|},
    \qquad x\in\mathbb{R}.
\end{equation}

\begin{remark}
It follows from Lemma \ref{lem:t=0 H2 C_function} below that the convergence in \ref{H3} extends to every $\varphi\in\mathcal{C}$.
\end{remark}

In order to formulate our main result, we introduce the measure-valued heat equation
with initial condition $\mu_0\in \mathcal{M_+}(\mathbb{R})$.
\begin{equation}\label{measure heat equation}
    \begin{cases}
\partial_t \mu(t,y) = \frac{\hat c}{4\pi^2\gamma} \partial_y^2 \mu(t,y), \quad y \in \mathbb{R},\\
\mu(0, dy) = \mu_0(dy),
\end{cases}
\end{equation} 
where  
   \begin{align} \label{hatc const}
      \hat c= \frac{1}{4} \int_{\mathbb{R}} \frac{|\hat{\alpha}'(p)|^2}{\omega_0^2 + \|\alpha\|_{L^1} + \hat{\alpha}(p)} \, dp.
   \end{align}

\begin{definition}
    We say that a map $\mu : [0, +\infty) \to \mathcal{M_+}(\mathbb{R})$ is a (weak, measure-valued) solution of \eqref{measure heat equation} if it belongs to $C([0, +\infty), \mathcal{M_+}(\mathbb{R}))$ and for every  test function $\varphi \in C_c^2(\mathbb{R})$,
\begin{equation} \label{def:slove}
    \int_{\mathbb{R}} \varphi(y) \mu(t, dy) - \int_{\mathbb{R}} \varphi(y) \mu_0(dy) = \frac{\hat c}{4\pi^2\gamma} \int_0^t ds \int_{\mathbb{R}} \varphi''(y) \mu(s, dy).
\end{equation}
\end{definition} 
\begin{theorem} \label{mainthm}
    Under the Assumptions \ref{H1}--\ref{H4}, for every  $T>0,$ test function $\varphi \in C_c(\mathbb{R})$, 
    \begin{align}
    \lim_{N \to \infty} \sup_{0 \le t \le T} \left| \varepsilon_N \sum_{x \in \mathbb{Z}} \varphi(\varepsilon_N x) \mathbb{E}[e_x(a_N t)] - \int_{\mathbb{R}} \varphi(y) \mu(t, dy) \right| = 0,
    \end{align}
    where $\mu(t,dy)$ is a weak solution of  \eqref{measure heat equation}.
\end{theorem}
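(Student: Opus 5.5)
The plan follows the Wigner-transform strategy outlined in the introduction, organized as a compactness-plus-uniqueness argument in the vague topology on $\mathcal M_+(\mathbb R)$.

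\textbf{Step 1: local energy versus wave-function energy.} Introduce the wave function $\psi_x=\omega_N*q_x+ip_x$, with $\omega_N$ acting as a Fourier multiplier on the zero-extended configuration, and its Wigner transform $\widehat W_N(t,p,k)$. The first goal is to show that, for $\varphi\in\mathcal C$,
\[
I_N(t,\varphi)-\tfrac12\varepsilon_N\sum_x\varphi(\varepsilon_N x)\mathbb E|\psi_x(a_Nt)|^2\to0
\]
uniformly in $t\le T$. The discrepancy has two sources. One is a commutator of $\varphi(\varepsilon_N\cdot)$ with convolution by $\alpha^N$, whose range is $\delta_N(2N+1)$; this is controlled by $\varepsilon_N\delta_N N\|\varphi'\|$ times a weighted energy. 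The other is the boundary term $\eta^N_x q_x^2$, which is supported where $|x|\ge N-M\delta_N(2N+1)$. There the exponential decay of $\varphi$, together with the condition $e^{-b_0N\varepsilon_N}/(\delta_N\varepsilon_N)\to0$ in \ref{H2}, kills the contribution. Both bounds need a weighted a priori estimate of the form
\[
\sup_{t\le T}\varepsilon_N\sum_x e^{-b\varepsilon_N|x|}\mathbb E e_x(a_Nt)\lesssim 1,\qquad b\in(b_1,b_0).
\]
I would obtain this by a Gronwall argument, computing $\frac{d}{dt}$ of the weighted energy through the energy currents. The weight derivative contributes a factor $a_N\varepsilon_N\cdot(\text{current}\sim\delta_N N)\,\varepsilon_N=O(1)$. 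The starting value is bounded by \ref{H4}, and total energy is conserved, with bound \ref{H1}.

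\textbf{Step 2: closed Wigner equations and the limit symbol.} Write the evolution of $\langle\widehat W_N,\widehat\varphi\rangle$ together with the anti-Wigner term $\widehat Y_N$. The generator gives a transport term $a_N(\omega_N(k+\varepsilon_Np/2)-\omega_N(k-\varepsilon_Np/2))$, damping of $\widehat Y_N$ at rate $\gamma a_N$, a scattering term from the noise, and the boundary remainders $\widehat{\mathcal R}_N^{(i)}$ coming from $\eta^N$.

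I would then apply the standard two-step expansion. First, solve the equation for $\widehat W_N$ at the next order (the current $\varepsilon_N p\,\omega_N'\widehat W_N$ relaxes at rate $\gamma$). Substituting this back gives
\[
\frac{d}{dt}\langle \widehat W_N,\widehat\varphi\rangle\approx -\frac{a_N\varepsilon_N^2}{\gamma}\int p^2\,\widehat\varphi(p)\,|\omega_N'(k)|^2\,\widehat W_N\,dk\,dp .
\]
Since $\omega_N'(k)=\widehat{\alpha^N}'(k)/(2\omega_N)$, and $\widehat{\alpha^N}(k)=\omega_0^2+\|\alpha\|_{L^1}+\widehat\alpha(\delta_N(2N+1)k)+o(1)$ (using $\alpha\le0$ and the Riemann sum), rescaling $k$ by $\delta_N(2N+1)$ shows that $a_N\varepsilon_N^2\int|\omega_N'|^2\,dk\to\widehat c$. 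One must still justify replacing $|\omega_N'(k)|^2$ by its $k$-average. The noise equidistributes in $k$ only partially, so this is where the conditions $N\varepsilon_N^{1+1/m_0}\to\infty$ and $\delta_N^{m_0}N\to0$ enter, through an $m_0$-fold Taylor or iteration control. Converting the Fourier factor $p^2$ into $-\varphi''/(4\pi^2)$ then yields the generator $\frac{\hat c}{4\pi^2\gamma}\partial_y^2$. All error terms are bounded using $\int|\widehat W_N|\,dk\lesssim N\varepsilon_N$, paired against the fast decay of $\widehat\varphi$. The boundary remainders are handled by Lemma~\ref{lem:yuxiang}.

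\textbf{Step 3: compactness and identification.} The weighted bound from Step 1 gives uniform local mass, hence vague relative compactness of $\mu_N(t)$ for each $t$. The weak equation from Step 2 gives $|I_N(t,\varphi)-I_N(s,\varphi)|\lesssim |t-s|\,\|\varphi''\|+o(1)$ for $\varphi\in C_c^\infty$. By an Arzelà--Ascoli argument along a countable dense family of test functions, subsequential limits exist in $C([0,T],\mathcal M_+)$ and satisfy \eqref{def:slove}, with initial datum $\mu_0$ by \ref{H3}. The limits also inherit the exponential moment bound \eqref{H3,2}.

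Uniqueness of solutions in the class $\int e^{-b|x|}\mu(t,dx)<\infty$ follows by testing against backward heat kernels $P_{t-s}\varphi$, which lie in $\mathcal C$. This identifies the full limit. Density of $C_c^\infty$ in $C_c$, combined with the local mass bounds, then extends the convergence to every $\varphi\in C_c$, uniformly in $t\le T$.

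\textbf{Main obstacle.} I expect the main obstacle to be Step 2: showing that the averaged symbol replaces $|\omega_N'(k)|^2$ while only the non-uniform bound $N\varepsilon_N$ on $\int|\widehat W_N|$ is available. The boundary terms make this harder. My first attempt would be to integrate by parts in $k$ against the Schwartz decay of $\widehat\varphi$, gaining factors of $(\delta_N N)^{-1}$ that compensate the growth $N\varepsilon_N$.
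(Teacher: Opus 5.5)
Your overall architecture (wave-function replacement, Wigner equations with boundary remainders handled by Lemma~\ref{lem:yuxiang}, vague compactness plus uniqueness) matches the paper's. However, the weighted a priori bound on which your Steps~1 and~3 rest cannot be obtained the way you propose. Take $\Phi(t)=\varepsilon_N\sum_x e^{-b\varepsilon_N|x|}\mathbb{E}[e_x(a_Nt)]$. Across an interaction bond the weight changes by $\lesssim\varepsilon_N\delta_NN\,e^{-b\varepsilon_N|x|}$, so the current estimate gives $|\Phi'(t)|\lesssim a_N\varepsilon_N\delta_NN\,\Phi(t)\asymp\varepsilon_N^{-1}\Phi(t)$. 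Your factor $a_N\varepsilon_N\cdot\delta_NN\cdot\varepsilon_N=O(1)$ multiplies $\sum_x e^{-b\varepsilon_N|x|}\mathbb{E}[e_x]=\Phi/\varepsilon_N$, not $\Phi$. This is just ballistic transport at speed $\sup_k|\omega_N'(k)|\sim\delta_NN$ over microscopic time $a_N$, a macroscopic displacement of order $\varepsilon_N^{-1}$. A bound that uses only the size of the currents ignores the noise-induced cancellation and cannot close on the diffusive scale.

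Hence the uniform local mass bound you need in Step~3 is unsupported. You need it both for pointwise relative compactness and for the equicontinuity estimate $|I_N(t,\varphi)-I_N(s,\varphi)|\lesssim|t-s|$. The paper proceeds in the opposite order:
\begin{enumerate}
\item Step~1 needs no weighted bound: $\mathbb{E}[\mathcal{E}_N]\le\mathcal{E}_*N$ already yields errors $O(N\varepsilon_Ne^{-b_0N\varepsilon_N})+O(\delta_N^2N^3\varepsilon_N^3)$ (Proposition~\ref{thm:wavebijin}).
\item The weak equation \eqref{eq:star2} is derived for every $\varphi\in\mathcal{C}$ using only $\sup_{t,p}\int_{\mathbb{T}}|\widehat W_N|\,dk\lesssim N\varepsilon_N$. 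Assumption \ref{H4} enters only at $t=0$ (Lemma~\ref{lem:t=0 H2 C_function}).
\item The local mass bound then follows: test \eqref{eq:star2} with $\phi(x)=e^{-b_0\sqrt{1+x^2}}\in\mathcal{C}$, use $|\phi''|\le C_\phi\phi$ and $\mu_N(t)\ge0$, and apply Gronwall in macroscopic time.
\end{enumerate}
The same bound supplies the exponential moment of the limit that your duality argument for uniqueness requires. That argument is a worthwhile addition, since the paper only asserts uniqueness.

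Your Step~2 misplaces the difficulty. The noise acts through $\mathcal{L}f=\int_{\mathbb{T}}f\,dk'-f$, i.e.\ it resamples $k$ uniformly. Multiplying \eqref{piandao wN} by $\gamma/D_N$, with $D_N=\gamma+i\Delta_N\omega$, and integrating in $k$ gives the exact identity \eqref{eq:wn}. There only $\int_{\mathbb{T}}\bar w_N\,dk$ appears, multiplied by the $k$-independent symbol $d_N(p)=a_N\gamma\bigl(1-\int_{\mathbb{T}}\gamma/D_N\,dk\bigr)$, which tends to $\hat c\,p^2/\gamma$ by Proposition~\ref{prop:main dNp} and Lemma~\ref{lem:aN limit}. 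A correct two-step expansion likewise produces $\int_{\mathbb{T}}|\omega_N'|^2\,dk$ times the $k$-average of $\widehat W_N$, not $\int p^2\hat\varphi\,|\omega_N'(k)|^2\widehat W_N$. So there is no averaging problem. The conditions involving $m_0$ are actually used in Lemma~\ref{lem:yuxiang}, where they make $\delta_N^{m}N\varepsilon_N^{-1}$ and $\delta_N^{1/2}NP_N(N\varepsilon_N)^{-n}\varepsilon_N^{-1}$ vanish.

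What your plan leaves unaddressed is the coupling to the auxiliary field, the term $a_N\gamma\int\hat\varphi^*\,\bar u_N\,\mathcal{L}(\gamma/D_N)\,dk\,dp$ in \eqref{eq:wn}. Damping does not dispose of it. By \eqref{eq:ubar-expanded}, the relevant combination is only $O(N\varepsilon_N+TN|p|)$ in $L^1(\mathbb{T})$. Against $\mathcal{L}(\gamma/D_N)=O(\delta_NN\varepsilon_N|p|)$, a direct bound gives $O(T\delta_NN^2\varepsilon_N)$. This diverges, for instance, when $\delta_N=\varepsilon_N=N^{-4/5}$, a choice satisfying \ref{H2} with $m_0=5$. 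The paper gains the missing factor from parity: that combination is even in $k$, whereas the leading term $i\varepsilon_Np\,\omega_N'(k)/\gamma$ of $\mathcal{L}(\gamma/D_N)$ is odd. This leaves $O(T\delta_N^2N^3\varepsilon_N^2)$.
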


\section{Asymptotics of the wave function}

\begin{lemma} \label{lem:conservation of energy}
For all $t>0$, $\mathcal{E}_N(t)=\mathcal{E}_N(0)$.
\end{lemma}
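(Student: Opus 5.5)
The plan is to apply Itô's formula to $\mathcal{E}_N(t)=H_N(p(t),q(t))$, viewed as a smooth function of the finitely many variables $(q_x,p_x)_{|x|\le N}$, and to check that the drift and the martingale part both vanish identically. This gives pathwise conservation: $\mathcal{E}_N(t)=\mathcal{E}_N(0)$ almost surely for all $t$.

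\textbf{Step 1 (gradients).} Use the representation
\[
\mathcal E_N=\tfrac12\sum_x p_x^2+\tfrac12\sum_{x,y}\alpha^N_{x-y}q_xq_y-\tfrac12\sum_x\eta^N_x q_x^2 .
\]
Since $\alpha^N$ is even by \ref{a2}, the interaction matrix is symmetric, so
\[
\partial_{q_x}\mathcal E_N=\sum_{y}\alpha^N_{x-y}q_y-\eta^N_x q_x,\qquad \partial_{p_x}\mathcal E_N=p_x .
\]
Next compare with the force in \eqref{eq:main_SDE}. Using \eqref{def:etaN},
\[
Gq_x-\omega_0^2q_x=-\sum_{y=-N}^N\alpha^N_{x-y}q_y+\Big(\sum_{y=-N}^N\alpha^N_{x-y}-\omega_0^2\Big)q_x=-\partial_{q_x}\mathcal E_N .
\]
So the deterministic part of the dynamics is exactly Hamiltonian: $dq_x=\partial_{p_x}\mathcal E_N\,dt$ and the force equals $-\partial_{q_x}\mathcal E_N$.

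\textbf{Step 2 (Itô's formula).} Since $\mathcal E_N$ is quadratic, Itô's formula gives
\[
d\mathcal E_N=\sum_x\Big(\partial_{q_x}\mathcal E_N\cdot p_x+p_x\cdot(-\partial_{q_x}\mathcal E_N)\Big)dt-\gamma\sum_x p_x^2\,dt+\sqrt{2\gamma}\sum_x p_x\cdot Jp_x\,dw_x+\tfrac12\sum_x 2\gamma\,|Jp_x|^2\,dt .
\]
The Hamiltonian terms cancel. Because $J$ is antisymmetric, $p_x\cdot Jp_x=0$, so the martingale term vanishes. Because $J$ is a rotation, $|Jp_x|^2=p_x^2$, so the Itô correction $\gamma\sum_x p_x^2\,dt$ exactly cancels the friction term $-\gamma\sum_x p_x^2\,dt$. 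Hence $d\mathcal E_N=0$.

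\textbf{Main obstacle.} The only delicate point is the bookkeeping of the boundary term $\eta^N_x$ under the free boundary condition. The pinning of $q_x$ in the force is $\omega_0^2$, while the energy contains $-\eta^N_x$, and these must match exactly when sums are restricted to $|y|\le N$. The identity $\sum_y\alpha^N_{x-y}-\eta^N_x=\omega_0^2$ in \eqref{def:etaN} handles this. One also needs global existence of the solution, which holds because the SDE is linear with finitely many components.
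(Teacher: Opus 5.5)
Your proof is correct and is essentially the paper's argument. You apply It\^o's formula to $\mathcal{E}_N$, use $p_x\cdot Jp_x=0$ to kill the martingale term, and use $|Jp_x|^2=p_x^2$ so that the It\^o correction cancels the friction. The only difference is bookkeeping in the Hamiltonian part: you identify the force as exactly $-\partial_{q_x}\mathcal{E}_N$ via the $\eta^N$ representation, so the drift cancels identically. The paper instead expands the drift in pair form and kills $-\frac12\sum_{x,y}\alpha^N_{x-y}(q_y-q_x)(p_x+p_y)$ by antisymmetry under $x\leftrightarrow y$.
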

\begin{proof}
    For any sufficiently smooth function $f(q,p)$, Itô's formula yields
\begin{align}\label{eq:itof}
     df(q,p) &= \sum_{x=-N}^N \left( \frac{\partial f}{\partial q_x} \cdot dq_x + \frac{\partial f}{\partial p_x} \cdot dp_x \right) + \gamma \sum_{x=-N}^N Jp_x \cdot \frac{\partial^2 f}{\partial p_x^2} Jp_x dt
 \notag \\&=\sum_{x=-N}^N \Bigg[ p_x \cdot \frac{\partial f}{\partial q_x} dt + \frac{\partial f}{\partial p_x} \cdot \bigg( -(\alpha^N * q)_x dt + \eta_x^N q_x dt \notag \\&\quad  - \gamma p_x dt  + \sqrt{2\gamma} Jp_x dw_x(t) \bigg) + \gamma Jp_x \cdot \frac{\partial^2 f}{\partial p_x^2} Jp_x dt \Bigg]
 \notag \\&=\sum_{x=-N}^N \left( p_x \cdot \frac{\partial f}{\partial q_x} - (\alpha^N * q)_x \cdot\frac{\partial f}{\partial p_x} \right)dt + \sum_{x=-N}^N \eta_x^N q_x \cdot\frac{\partial f}{\partial p_x}dt
  \notag  \\&\quad + \gamma \sum_{x=-N}^N \left( Jp_x \cdot \frac{\partial^2 f}{\partial p_x^2} Jp_x - p_x \cdot\frac{\partial f}{\partial p_x} \right)dt
  +\sqrt{2\gamma} \sum_{x=-N}^N Jp_x\cdot\frac{\partial f}{\partial p_x} dw_x(t) .
\end{align}
Choose $f=\mathcal{E}_N$. Since $p_x\cdot Jp_x=0,\ Jp_x\cdot Jp_x=p_x^2$, we have
\begin{align}
    d\mathcal{E}_N(t)&=\sum_{x=-N}^N \bigg[ \frac{1}{2}\sum_{y=-N}^N \alpha_{x-y}^N(q_y-q_x) p_x+\omega_0^2 q_xp_x-\frac{1}{2}\sum_{y=-N}^N \alpha_{x-y}^N(q_y-q_x) p_y
    \notag \\& \quad -  \sum_{y=-N}^N \alpha_{x-y}^Nq_y p_x  \bigg]dt  
    + \sum_{x=-N}^N \eta_x^N q_x p_xdt  
\notag \\& \quad  + \gamma \sum_{x=-N}^N \left( Jp_x \cdot  Jp_x - p_x \cdot p_x \right)dt+\sqrt{2\gamma} Jp_x\cdot p_x dw_x(t) 
  \notag \\&=-\frac{1}{2}\sum_{x=-N}^N  \sum_{y=-N}^N \alpha_{x-y}^N
  \left[(q_y+q_x) p_x+ (q_y-q_x) p_y\right] dt 
  \notag \\& \quad+ \sum_{x=-N}^N ( \eta_x^N q_x p_x+\omega_0^2 q_xp_x)dt  
  \notag \\&=-\frac{1}{2}\sum_{x=-N}^N  \sum_{y=-N}^N \alpha_{x-y}^N(q_y-q_x) (p_x+p_y)  dt.
\end{align}
Since $\alpha^N$ is even, interchanging $x$ and $y$ yields $d\mathcal{E}_N(t) = -d\mathcal{E}_N(t)$. Consequently, 
$\frac{d\mathcal{E}_N(t)}{dt} = 0,$ 
which implies $\mathcal{E}_N(t) = \mathcal{E}_N(0)$ for all $t > 0$.
\end{proof}

\begin{remark} \label{remark:no time}
   The preceding lemma establishes the conservation of the total energy. Henceforth, whenever the subsequent proofs solely involve estimates of the total energy, the time variable $t$ will be omitted for brevity.
\end{remark}
To approximate the energy, we define the wave function and its Fourier transform as (see Appendix A)
\begin{align} 
   &\psi_x=(\check{\omega}_N*q)_x+ip_x,\ x\in \mathbb{Z},\\
    &\hat{\psi_k}=\hat{\psi}(k):=\omega_N(k)\hat{q}_k+i\hat{p}_k,\ k\in \mathbb{T}.
\end{align}
Consequently,
$\left| \psi_x \right|^2=|(\check{\omega}_N*q)_x|^2+\left| p_x \right|^2.$
By the Plancherel theorem, 
\begin{align}
    \sum_{x \in \mathbb{Z}}  \left[  \frac{1}{2}\left| \psi_x \right|^2 \right]
     & = \frac{1}{2} \lVert\check{\omega}_N \ast q\rVert_{\ell^2(\mathbb{Z})}^2+ \frac{1}{2} \lVert p \rVert_{\ell^2(\mathbb{Z})}^2 \notag\\
    &= \frac{1}{2} \lVert\omega_N \hat q \rVert_{L^2(\mathbb{T})}^2+ \frac{1}{2} \lVert \hat p \rVert_{L^2(\mathbb{T})}^2 \notag\\
    &= \frac{1}{2}\int_\mathbb{T}\left(|\omega_N(k) \hat q_k|^2
    +|\hat p_k|^2\right)\,dk.
\end{align}
By Lemma \ref{lem:omega_N_properties}~(iii), and \ref{H1}
\begin{align} \label{leq:wave function energy}
    \sum_{x \in \mathbb{Z}} \mathbb{E} \left[  \frac{1}{2}\left| \psi_x \right|^2 \right] &\lesssim \mathbb{E} \left[ \frac{1}{2}\int_\mathbb{T}\left(| \hat q_k|^2
    +|\hat p_k|^2\right)\,dk\right] \notag \\
    &\lesssim \mathbb{E}[\mathcal{E}_N]\leq \mathcal{E}_* N.
\end{align}
In contrast to the periodic system considered in \cite{canestrariHeatEquationDeterministic2026} and the infinite chain on $\mathbb{Z}$ studied in \cite{jaraSuperdiffusionEnergyChain2015}, we consider a finite pinned chain with free boundary conditions. The energy $e_x$ in the present finite system is set to zero for $|x|>N$, whereas the wave function $\psi_x$ is defined on the whole lattice $\mathbb{Z}$ and need not vanish outside $[-N,N]$. Consequently, the identity
\begin{equation}
\sum_{x=-N}^{N} e_x
=
\frac{1}{2}\sum_{x\in\mathbb{Z}}|\psi_x|^2
\end{equation}
does not hold in the present setting. In fact we have 
\begin{align}
    \sum_{x=-N}^{N} e_x& =\mathcal{E}_N
    =\frac{1}{2}\sum_{x=-N}^{N}p_x^2+\frac{1}{2} \sum_{x=-N}^{N} \sum_{y=-N}^{N} 
    \alpha_{x-y}^Nq_xq_y-\frac{1}{2}\sum_{x=-N}^{N}\eta_x^Nq_x^2 \notag\\
    &=\frac{1}{2} \lVert p \rVert_{\ell^2(\mathbb{Z})}^2
    +  \frac{1}{2} \langle\alpha^N*q,q \rangle_{\ell^2(\mathbb{Z})}
    -\frac{1}{2}\sum_{x=-N}^{N}\eta_x^Nq_x^2\notag\\
    &=  \frac{1}{2}\int_\mathbb{T}\left(|\omega_N(k) \hat q_k|^2
    +|\hat p_k|^2\right)\,dk
    -\frac{1}{2}\sum_{x=-N}^{N}\eta_x^Nq_x^2.
\end{align}
Nevertheless, the extra term vanishes in the limit when tested against a suitable test function. A precise justification is given in the proof of the following proposition.

\begin{proposition} 
\label{thm:wavebijin}
For every $\varphi \in  \mathcal{C}$
, there exists $C_\varphi>0$, such that
\begin{eqnarray*}
    &&\sup_{t\geq 0}\left|\varepsilon_N \sum_{x \in \mathbb{Z}} \varphi(\varepsilon_N x) \mathbb{E} \left[  e_x(a_Nt) -\frac{1}{2}\left| \psi_x(a_Nt) \right|^2\right]\right|
   \\&=& C_\varphi O(N \varepsilon_N e^{-b_0N\varepsilon_N}) + C_\varphi O(\delta_N^2 N^3 \varepsilon_N^3).
\end{eqnarray*}
Here, the constants implicit in the $O(\cdot)$ terms are independent of both $N$ and $\varphi$. See Notations \ref{note:lesssim} and \ref{note:O} for details.
\end{proposition}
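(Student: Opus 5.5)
We compare the tested local energy with the tested wave-function energy by writing both as quadratic forms in $(q,p)$. The plan has three parts: cancel the kinetic parts, reduce the potential parts to a commutator of the test function with the operators $\alpha^N$ and $\check\omega_N$, and treat the boundary term $\eta_x^N$ separately.

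\emph{Kinetic part and reduction.} The kinetic contributions cancel exactly, since $|\psi_x|^2=|(\check\omega_N*q)_x|^2+p_x^2$ and $e_x$ contains $\tfrac12 p_x^2$. Write $\varphi_x:=\varphi(\varepsilon_N x)$. We need to compare
\[
A:=\sum_{|x|\le N}\varphi_x\Bigl(-\tfrac14\sum_y\alpha^N_{x-y}(q_y-q_x)^2+\tfrac{\omega_0^2}{2}q_x^2\Bigr),\qquad
B:=\tfrac12\sum_{x\in\mathbb Z}\varphi_x|(\check\omega_N*q)_x|^2 .
\]
Expanding $(q_y-q_x)^2$ and using the definition of $\alpha^N_0$, we rewrite $A$ as
\[
A=\tfrac12\sum_{x,y}\tfrac{\varphi_x+\varphi_y}{2}\,\alpha^N_{x-y}q_xq_y-\tfrac12\sum_x\varphi_x\eta_x^Nq_x^2 .
\]
On the other side,
\[
B=\tfrac12\sum_{y,y'}q_y q_{y'}\sum_x\varphi_x\check\omega_N(x-y)\check\omega_N(x-y'),
\]
and $\check\omega_N*\check\omega_N=\alpha^N$. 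Hence
\[
A-B=\text{(commutator term)}-\tfrac12\sum_x\varphi_x\eta_x^Nq_x^2,
\]
where the commutator term has kernel
\[
K(y,y')=\sum_x\Bigl(\tfrac{\varphi_y+\varphi_{y'}}{2}-\varphi_x\Bigr)\check\omega_N(x-y)\check\omega_N(x-y').
\]

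\emph{Commutator term.} We Taylor expand $\varphi$ around $\varepsilon_N y$ and $\varepsilon_N y'$ to second order. The first-order terms cancel by the evenness of $\check\omega_N$ (symmetrizing in $x-y$ and $x-y'$). The second-order remainder is of size $\varepsilon_N^2\|\varphi''\|_\infty\sum_{z}|z|^2|\check\omega_N(z)|\,|\check\omega_N(z+y-y')|$, up to the harmless variable shift. Bounding the operator norm via Fourier (Schur), this is controlled by $\varepsilon_N^2\|\partial_k\omega_N\|$-type quantities. Lemma \ref{lem:omega_N_properties} gives bounds of order $(\delta_N N)^2$ for second moments of $\check\omega_N$, since $\omega_N$ varies on scale $(\delta_N N)^{-1}$ in $k$. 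Combined with $\sum\mathbb E q_x^2\lesssim \mathcal E_* N$ (energy conservation, Lemma \ref{lem:conservation of energy}, and \ref{H1}) and the prefactor $\varepsilon_N$, this yields $C_\varphi\,\delta_N^2N^3\varepsilon_N^3$. This holds uniformly in $t$ because only the total energy enters (Remark \ref{remark:no time}). The constant $C_\varphi$ uses the derivatives of $\varphi$ weighted by $e^{b_0|x|}$, so that it also handles the tails of $\check\omega_N$, which are not compactly supported.

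\emph{Boundary term and outside-box contribution.} Two pieces remain. The first is $\tfrac12\sum\varphi_x\eta^N_xq_x^2$. Here $\eta^N_x=-\sum_{|y|>N}\alpha^N_{x-y}$ vanishes unless $|x|\ge N-M\delta_N(2N+1)$, i.e.\ $\varepsilon_N|x|\gtrsim N\varepsilon_N$ since $\delta_N\to0$, and $|\eta^N_x|\lesssim1$. By \eqref{phi_exp——decay}, $|\varphi_x|\lesssim C_\varphi e^{-b_0N\varepsilon_N(1-o(1))}$ there. Multiplying by $\varepsilon_N\sum\mathbb Eq_x^2\lesssim\varepsilon_N N$ gives $C_\varphi N\varepsilon_Ne^{-b_0N\varepsilon_N}$; the $(1-o(1))$ loss in the exponent is absorbed by adjusting constants or by using that the support shift is $o(N)$. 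The second piece is $B$ restricted to $|x|>N$, which is already contained in the commutator computation above since the sums run over $\mathbb Z$; no separate term arises.

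\emph{Main obstacle.} The delicate step is the commutator estimate. Because $\check\omega_N$ has infinite support, we need quantitative decay: $\sum_z z^2|\check\omega_N(z)|\lesssim (\delta_N N)^2$ in a form compatible with a Schur bound. We would get this from the smoothness of $\omega_N=\sqrt{\widehat{\alpha^N}}$, using $\widehat{\alpha^N}\ge\omega_0^2>0$ and $\widehat{\alpha^N}(k)\approx\omega_0^2+\|\alpha\|_{L^1}+\hat\alpha(\delta_N(2N+1)k)$, by integrating by parts in $k$.
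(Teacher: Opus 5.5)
Your overall plan matches the paper's decomposition, but the rewriting of $A$ is incorrect for non-constant $\varphi$. The term it drops is of the same order as the bound you are trying to prove. Symmetrizing $-\tfrac14\sum_x\varphi_x\sum_y\alpha^N_{x-y}(q_y-q_x)^2$ in $(x,y)$ before expanding the square gives
\[
A=\tfrac12\sum_{x,y}\tfrac{\varphi_x+\varphi_y}{2}\,\alpha^N_{x-y}q_xq_y-\tfrac12\sum_{x}\varphi_x\eta^N_xq_x^2+\tfrac14\sum_{|x|\le N}q_x^2\sum_{|y|\le N}\alpha^N_{x-y}\bigl(\varphi_x-\varphi_y\bigr),
\]
and your identity omits the last term. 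On an index set $\{0,1\}$ that term equals $\tfrac14\alpha^N_1(\varphi_0-\varphi_1)(q_0^2-q_1^2)\neq0$, so your formula holds only for constant $\varphi$.

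The omitted term cannot be neglected. For interior $x$, $\sum_z z\alpha^N_z=0$ and Taylor expansion give $\sum_y\alpha^N_{x-y}(\varphi_x-\varphi_y)=\tfrac{\varepsilon_N^2}{2}\varphi''(\varepsilon_Nx)\sum_z z^2|\alpha^N_z|+O\bigl(\|\varphi'''\|_\infty\varepsilon_N^3(\delta_NN)^3\bigr)$. With the prefactor $\varepsilon_N$ and $\sum_x\mathbb E q_x^2\lesssim N$, it is of size $\delta_N^2N^3\varepsilon_N^3$, exactly the order in the statement. This is the paper's $I_2^N$, and your argument never estimates it. The fix is the paper's: Taylor expand $\varphi$ against $\alpha^N$, use evenness, and apply Lemma~\ref{lem:alpha_properties}~(iii) for the second and third moments. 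One extra point arises in your bookkeeping. Because your $y$-sum is truncated to $|y|\le N$, the first moment $\sum_{|y|\le N}(y-x)\alpha^N_{x-y}$ does not vanish for $x\in J_N$. That piece must be absorbed into the $C_\varphi N\varepsilon_Ne^{-b_0N\varepsilon_N}$ error, using $|\varphi'|\lesssim C_\varphi e^{-b_0N\varepsilon_N(1-o(1))}$ there. The paper sidesteps this by summing over $x\in\mathbb Z$ and removing the excess through $I_4^N$.

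Your commutator step is exactly the paper's $I_3^N$ written in physical space, and it needs two smaller corrections.
\begin{enumerate}
\item The first-order terms do not cancel. The reflection $x\mapsto y+y'-x$ and evenness of $\check\omega_N$ give $\sum_x(x-y)\check\omega_N(x-y)\check\omega_N(x-y')=\tfrac{y'-y}{2}\alpha^N_{y-y'}$. Hence the first-order part of $K(y,y')$ is $\tfrac{\varepsilon_N}{4}(y-y')\bigl(\varphi'(\varepsilon_Ny)-\varphi'(\varepsilon_Ny')\bigr)\alpha^N_{y-y'}$. It is second order and harmless, but it must be kept.
\item The sup bounds of Lemma~\ref{lem:omega_N_properties}~(iii) are not enough. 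Integrating by parts with them alone yields only $|\check\omega_N(z)|\lesssim\min\{1,(\delta_NN/|z|)^n\}$. That gives $\|\check\omega_N\|_{\ell^1}\lesssim\delta_NN$ and $\sum_z z^2|\check\omega_N(z)|\lesssim(\delta_NN)^3$, which loses a factor $(\delta_NN)^2$ in your Schur bound. You need the finer structure you point to at the end: $\omega_N$ is a constant plus a bump of width $(\delta_NN)^{-1}$ in $k$. This gives $\|\check\omega_N\|_{\ell^1}\lesssim1$ and $\sum_z z^2|\check\omega_N(z)|\lesssim(\delta_NN)^2$.
\end{enumerate}

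The paper's Fourier route avoids both issues. Averaging the two representations of $\sum_x\varphi_x(\alpha^N*q)_xq_x$ makes the symbol $\tfrac12\bigl(\omega_N(k+\tfrac{\varepsilon_Np}{2})-\omega_N(k-\tfrac{\varepsilon_Np}{2})\bigr)^2\le\tfrac12\sup_k|\omega_N'(k)|^2\varepsilon_N^2p^2$. Then only $\sup|\omega_N'|\lesssim\delta_NN$ and $\int|\hat\varphi(p)|p^2\,dp<\infty$ are needed.

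Your treatment of the $\eta^N$ term is fine. So is your observation that the $|x|>N$ part of $B$ is already contained in the kernel.
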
 

\begin{proof}
By Lemma \ref{lem:omega_N_properties}~(i), $\omega_N(k)$ is a real-valued even function. Applying the properties of the Fourier transform for discrete convolutions (see Appendix A), we obtain
\begin{align}
    (\check{\omega}_N*q)_x^*
    &=\int_\mathbb{T}\omega_N(k)\hat{q}_k^*e^{-2\pi ikx}dk
     = \int_\mathbb{T}\omega_N(k)\hat{q}(-k)e^{-2\pi ikx}dk 
     \notag\\ &=\int_\mathbb{T}\omega_N(-k)\hat{q}(k)e^{2\pi ikx}dk
  =  (\check{\omega}_N*q)_x.
\end{align} 
Hence, $(\check{\omega}_N*q)_x \in \mathbb{R}$, which implies $\left| \psi_x \right|^2 = \left|(\check{\omega}_N*q)_x\right|^2 + p_x^2$. 
We decompose the difference as
\begin{eqnarray*}
    &&\varepsilon_N \sum_{x \in \mathbb{Z}} \varphi(\varepsilon_N x) \mathbb{E} \left[  e_x(a_Nt) -\frac{1}{2}\left| \psi_x(a_Nt) \right|^2\right] \\
&=&\varepsilon_N \sum_{x \in \mathbb{Z}} \varphi(\varepsilon_N x) \mathbb{E} \Bigg[  \frac{1}{2}p_x^2 - \frac{1}{4} \sum_{y=-N}^{N} \alpha_{x-y}^N (q_y - q_x)^2 + \frac{\omega_0^2}{2} q_x^2 
- \frac{1}{2}\left|(\check{\omega}_N*q)_x\right|^2 \\
&&\qquad\qquad\qquad- \frac{1}{2} p_x^2 \Bigg] -\varepsilon_N \sum_{|x|> N} \varphi(\varepsilon_N x) \mathbb{E} \Bigg[  - \frac{1}{4}  \sum_{y=-N}^{N}  \alpha_{x-y}^N (q_y - q_x)^2\Bigg]   \\
& =& \varepsilon_N \sum_{x \in \mathbb{Z}} \varphi(\varepsilon_N x) \mathbb{E} \Bigg[ 
\frac{1}{4}\left( \omega_0^2 - \sum_{y=-N}^{N}\alpha_{x-y}^N \right)q_x^2 + \frac{1}{4}\left(\omega_0^2q_x^2 - \sum_{y=-N}^{N}\alpha_{x-y}^Nq_y^2\right) \\
&&\qquad\qquad\qquad\qquad\qquad\qquad\qquad\qquad+ \frac{1}{2}\left( \sum_{y=-N}^{N}\alpha_{x-y}^Nq_xq_y - \left|(\check{\omega}_N*q)_x\right|^2\right) \Bigg] 
\\
&&{} +\varepsilon_N \sum_{|x|> N} \varphi(\varepsilon_N x) 
\mathbb{E} \Bigg[ \frac{1}{4} \sum_{y=-N}^{N}\alpha_{x-y}^Nq_y^2 \Bigg]
\notag\\ &=:&I_1^N+I_2^N+I_3^N+I_4^N.
\end{eqnarray*}


We now estimate these four terms separately.
\par First, we consider $I_1^N$. By Lemma \ref{lem:eta}~(ii), it suffices to consider those $x$ satisfying $|x| > N - M \delta_N (2N+1)$.
Applying Lemma \ref{lem:eta}(i) and \eqref{phi_exp——decay}, we have
\begin{align*}
    |4I_1^N| &\lesssim C_\varphi\varepsilon_N \sum_{x=-N}^{N} e^{-b_0\varepsilon_N|x|} |\eta_x^N| \mathbb{E}[q_x^2] 
    \lesssim C_\varphi N \varepsilon_N e^{-b_0N\varepsilon_N} \frac{1}{N}\sum_{x=-N}^{N} \mathbb{E}[e_x]
    \\ &\lesssim C_\varphi N \varepsilon_N e^{-b_0N\varepsilon_N}.
\end{align*}
\par Next, we estimate $I_2^N$. For any $\varphi \in \mathcal{C}$ and $m \in \mathbb{N}$, we have $\left\|\frac{d^m}{dx^m}\varphi\right\|_\infty < \infty$.  Utilizing the fact that $\alpha^N$ is an even sequence and interchanging the roles of $x$ and~$y$, we obtain
\begin{align*} 
    4I_2^N \quad&=\quad \varepsilon_N \sum_{x \in \mathbb{Z}} \varphi(\varepsilon_N x) \mathbb{E}\left[ \omega_0^2 q_x^2 - \sum_{y \in \mathbb{Z}} \alpha_{x-y}^N q_y^2 \right] \\ 
    &=\quad \varepsilon_N \sum_{x \in \mathbb{Z}} \varphi(\varepsilon_N x) \mathbb{E}[\omega_0^2 q_x^2] - \varepsilon_N \sum_{x,y \in \mathbb{Z}} \varphi(\varepsilon_N y) \alpha_{x-y}^N \mathbb{E}[q_x^2] \\ 
    &=\quad \varepsilon_N \sum_{x=-N}^{N} \mathbb{E}[q_x^2] \left( \omega_0^2 \varphi(\varepsilon_N x) - \sum_{y \in \mathbb{Z}} \alpha_y^N \varphi(\varepsilon_N(x+y)) \right) \\ 
    &=\quad \varepsilon_N \sum_{x=-N}^{N} \mathbb{E}[q_x^2] \sum_{y \in \mathbb{Z}} \alpha_y^N \left( \varphi(\varepsilon_N x) - \varphi(\varepsilon_N(x+y)) \right). 
\end{align*} 
Notice that $\alpha_y^N = 0$ and  $|\varepsilon_N y| \lesssim \delta_N N \varepsilon_N \to 0$ as $N \to \infty$  for $|y| > M \delta_N(2N+1)$\,.  This justifies the Taylor expansion
$$
    \varphi(\varepsilon_N(x+y)) - \varphi(\varepsilon_N x) = \varphi'(\varepsilon_N x)\varepsilon_N y + \frac{1}{2}\varphi''(\varepsilon_N x)(\varepsilon_N y)^2 + O\left(\left\|\varphi'''\right\|_\infty \varepsilon_N^3 |y|^3\right). 
$$
Since $\alpha^N$ is an even sequence, $\sum_{y \in \mathbb{Z}} \alpha_y^N y = 0$. By Lemma~\ref{lem:alpha_properties}~(iii),
 $$\sum_{y \in \mathbb{Z}} |y|^m |\alpha_y^N| = C_m(\delta_N N)^m + O\left((\delta_N N)^{m-1}\right).$$
  Substituting this back into the expression for $4I_2^N$ and applying \ref{H1}, we get
\begin{align*} 
    |-4I_2^N| &= \varepsilon_N \left|\sum_{x=-N}^{N} \mathbb{E}[q_x^2] \left( \varphi''(\varepsilon_N x) \frac{\varepsilon_N^2}{2} \left( C(\delta_N N)^2 + O(\delta_N N) \right) \right)\right| \\ 
    &\lesssim \delta_N^2 N^2 \varepsilon_N^3 \sum_{x=-N}^{N} \mathbb{E}[q_x^2] \left|\varphi''(\varepsilon_N x)\right| \\
    &\lesssim C_\varphi\delta_N^2 N^3 \varepsilon_N^3 \cdot \frac{1}{N} \mathbb{E}[\mathcal{E}_N] 
    = C_\varphi O(\delta_N^2 N^3 \varepsilon_N^3). 
\end{align*} 
\par We next estimate $I_3^N$. Rewrite the term $I_3^N$\,,  by using discrete convolutions,  as
\begin{align*}
    I_3^N &= \varepsilon_N \sum_{x \in \mathbb{Z}} \varphi(\varepsilon_N x) \mathbb{E} \left[ \frac{1}{2} (\alpha^N * q)_x q_x - \frac{1}{2} |(\check{\omega}_N*q)_x|^2 \right] \\
    &=: I_{3,1}^N + I_{3,2}^N
\end{align*}
with  
\begin{eqnarray*}
    I_{3,1}^N &=& \frac{\varepsilon_N}{2} \int_{\mathbb{R} \times \mathbb{T}} \hat{\varphi}^*(p) \widehat{\alpha^N}\left(k + \frac{\varepsilon_N p}{2}\right) \mathbb{E} \left[ \hat{q}_{k+\frac{\varepsilon_N p}{2}} \hat{q}^*_{k-\frac{\varepsilon_N p}{2}} \right] \, dp \, dk, \\
   &=& \frac{\varepsilon_N}{2} \int_{\mathbb{R} \times \mathbb{T}} \hat{\varphi}^*(p) \widehat{\alpha^N}\left(k - \frac{\varepsilon_N p}{2}\right) \mathbb{E} \left[ \hat{q}_{k+\frac{\varepsilon_N p}{2}} \hat{q}^*_{k-\frac{\varepsilon_N p}{2}} \right] \, dp \, dk
\end{eqnarray*}
 and 
 \begin{align*}
    I_{3,2}^N &= - \frac{\varepsilon_N}{2} \int_{\mathbb{R} \times \mathbb{T}} \hat{\varphi}^*(p) \omega_N\left(k + \frac{\varepsilon_N p}{2}\right) \omega_N\left(k - \frac{\varepsilon_N p}{2}\right) \mathbb{E} \left[ \hat{q}_{k+\frac{\varepsilon_N p}{2}} \hat{q}^*_{k-\frac{\varepsilon_N p}{2}} \right] \, dp \, dk.
\end{align*}
Here we use the Lemma \ref{lem:Fourier Transform}\,.

By Lemma \ref{lem:omega_N_properties}~(iii), we obtain
\begin{eqnarray*}
    &&\widehat{\alpha^N}\left(k + \frac{\varepsilon_N p}{2}\right) + \widehat{\alpha^N}\left(k - \frac{\varepsilon_N p}{2}\right) - 2 \omega_N\left(k + \frac{\varepsilon_N p}{2}\right) \omega_N\left(k - \frac{\varepsilon_N p}{2}\right) \\
    &=& \left( \omega_N\left(k + \frac{\varepsilon_N p}{2}\right) - \omega_N\left(k - \frac{\varepsilon_N p}{2}\right) \right)^2 \\
   & \leq& \sup_{k\in\mathbb{T}}\left|\omega_N'(k)\varepsilon_N p\right|^2 
    = O\left((\delta_N N \varepsilon_N p)^2\right).
\end{eqnarray*}
Then Utilizing the relation above yields the final bound
\begin{align*}
    |2 I_3^N| \quad&=\quad |2I_{3,1}^N + 2I_{3,2}^N| \\
    &= \quad\frac{\varepsilon_N}{2} \int_{\mathbb{R} \times \mathbb{T}} |\hat{\varphi}^*(p)| O\left((\delta_N N \varepsilon_N p)^2\right) \mathbb{E} \left[ \left|\hat{q}_{k+\frac{\varepsilon_N p}{2}} \hat{q}^*_{k-\frac{\varepsilon_N p}{2}}\right| \right] \, dp \, dk \\
    &\lesssim \quad\delta_N^2 N^3 \varepsilon_N^3 \left( \frac{1}{N} \mathbb{E}[\mathcal{E}_N] \right) \int_{\mathbb{R}} |\hat{\varphi}^*(p)| p^2 \, dp.
\end{align*}
Since $\hat\varphi \in \mathcal{S}(\mathbb{R})$, the integral $\int_{\mathbb{R}} |\hat{\varphi}^*(p)| p^2 \, dp < \infty$. It follows that  
$$I_3^N = C_\varphi O(\delta_N^2 N^3 \varepsilon_N^3).$$
 Finally, we consider $I_4^N$. By Lemma \ref{lem:alpha_properties}~(iii) and \eqref{phi_exp——decay},  
\begin{eqnarray*}
    |4I_4^N|&=&\left|\varepsilon_N \sum_{|x|> N} \varphi(\varepsilon_N x) \mathbb{E} \Bigg[ \sum_{y=-N}^{N}\alpha_{x-y}^Nq_y^2 \Bigg] \right|
 \\  &\lesssim& C_\varphi\varepsilon_N e^{-b_0N\varepsilon_N}\sum_{x\in\mathbb{Z}}
    |\alpha_{x}^N|\sum_{y=-N}^{N}\mathbb{E}[q_y^2]
    \lesssim  C_\varphi N \varepsilon_N e^{-b_0N\varepsilon_N}.
\end{eqnarray*}

Combining the estimates above, we conclude that
\begin{eqnarray*}
    &&\sup_{t\geq 0}\left|\varepsilon_N \sum_{x\in\mathbb{Z}} \varphi(\varepsilon_N x) \mathbb{E} \left[  e_x(a_Nt) -\frac{1}{2}\left| \psi_x(a_Nt) \right|^2\right]\right| \\
    &=& I_1^N + I_2^N + I_3^N + I_4^N\\
    &=& C_\varphi O(N \varepsilon_N e^{-b_0N\varepsilon_N}) + C_\varphi O(\delta_N^2 N^3 \varepsilon_N^3).
\end{eqnarray*}
\end{proof}
By \ref{H2}, this error vanishes. We thus approximate the energy via the wave function and focus on its dynamics hereafter.

\section{ Wigner Transform}
The preceding proposition reduces the evaluation of $$\varepsilon_N \sum_{x=-N}^{N} \varphi(\varepsilon_N x) \mathbb{E} \left[  e_x(a_Nt) \right]$$ to estimating $$\varepsilon_N \sum_{x\in\mathbb{Z}} \varphi(\varepsilon_N x) \mathbb{E} \left[  \frac{1}{2}| \psi_x(a_Nt) |^2\right].$$  Writing \[|\psi_x(a_Nt) |^2=\psi_x(a_Nt) \psi_x^*(a_Nt)\]  and applying Lemma \ref{lem:Fourier Transform}, we obtain
\begin{eqnarray}
&&\varepsilon_N \sum_{x\in\mathbb{Z}} \varphi(\varepsilon_N x) \mathbb{E} \left[  \frac{1}{2}\left| \psi_x(a_Nt) \right|^2\right] \notag\\
&=&\frac{\varepsilon_N}{2}\int_{\mathbb{R} \times \mathbb{T}}
\hat{\varphi}^*(p)
    \mathbb{E} \left[\hat{\psi}(a_{N}t,k+\frac{\varepsilon_N p}{2})\hat{\psi}^{*}(a_{N}t,k-\frac{\varepsilon_N p}{2})\right] \, dp \, dk.
\end{eqnarray}

Now define the Wigner transform as
\begin{equation}
    \widehat W_{N}(t,p,k):=\frac{\varepsilon_N}{2}
    \mathbb{E} \left[\hat{\psi}(a_{N}t,k+\frac{\varepsilon_N p}{2})\hat{\psi}^{*}(a_{N}t,k-\frac{\varepsilon_N p}{2})\right].
\end{equation}
and the action on  $\varphi \in C_c^\infty(\mathbb{R}\times \mathbb{T})$ of  the Wigner transform is  
\begin{align*}
\langle W_{N}(t),\varphi \rangle
:&=\langle \widehat W_{N}(t),\hat\varphi \rangle_{L^2(\mathbb{R}\times \mathbb{T})}\notag
\\& = \frac{\varepsilon_N}{2} \int_{\mathbb{R} \times \mathbb{T}} \hat{\varphi}^*(p, k)\mathbb{E} \left[ \hat{\psi}^* \left( a_Nt, k - \frac{\varepsilon_N p}{2} \right) \hat{\psi}^{} \left( a_Nt, k + \frac{\varepsilon_N p}{2} \right) \right]  \, dp \, dk.
\end{align*}
Here, $\hat{\varphi}$ denotes the Fourier transform of $\varphi$ with respect to the $p$-variable. Now for  $\varphi(y,k) \equiv \varphi(y)\in \mathcal{C}$\,, 
\begin{equation} \label{eq:wigner}
    \langle W_{N}(t),\varphi \rangle=  \varepsilon_N \sum_{x \in \mathbb{Z}} \varphi(\varepsilon_N x) \mathbb{E} \left[  \frac{1}{2}\left| \psi_x(a_Nt) \right|^2\right].
\end{equation}
In the following we give a simple estimate.
\begin{lemma}\label{lem:WNsimple estimates}
For all $t \ge 0$, $p \in \mathbb{R}$ and $k \in \mathbb{T}$,  
\begin{equation}
        \sup_{t,p}\int_\mathbb{T} \left| \widehat W_{N}(t,p,k)\right|dk=O(N\varepsilon_N)\,.
    \end{equation}
\end{lemma}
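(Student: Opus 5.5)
The bound will follow from Cauchy--Schwarz in $k$ combined with the energy bound \eqref{leq:wave function energy} and conservation of energy (Lemma~\ref{lem:conservation of energy}). We fix $t\ge 0$ and $p\in\mathbb{R}$. Using $|\mathbb{E}[XY^*]|\le \mathbb{E}|X||Y|$ and then Cauchy--Schwarz, first in $\mathbb{E}$ and then in $k$, we get
\begin{align*}
\int_{\mathbb{T}}\bigl|\widehat W_N(t,p,k)\bigr|\,dk
&\le \frac{\varepsilon_N}{2}\,\mathbb{E}\int_{\mathbb{T}}\Bigl|\hat\psi\Bigl(a_Nt,k+\tfrac{\varepsilon_Np}{2}\Bigr)\Bigr|\,\Bigl|\hat\psi\Bigl(a_Nt,k-\tfrac{\varepsilon_Np}{2}\Bigr)\Bigr|\,dk\\
&\le \frac{\varepsilon_N}{2}\,\mathbb{E}\Bigl[\Bigl(\int_{\mathbb{T}}\bigl|\hat\psi(a_Nt,k+\tfrac{\varepsilon_Np}{2})\bigr|^2dk\Bigr)^{1/2}\Bigl(\int_{\mathbb{T}}\bigl|\hat\psi(a_Nt,k-\tfrac{\varepsilon_Np}{2})\bigr|^2dk\Bigr)^{1/2}\Bigr].
\end{align*}
Both factors equal $\|\hat\psi(a_Nt)\|_{L^2(\mathbb{T})}$, because the shift by $\pm\varepsilon_Np/2$ is a translation on the torus $\mathbb{T}$ and Lebesgue measure is translation invariant. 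The right-hand side is therefore $\frac{\varepsilon_N}{2}\mathbb{E}\|\hat\psi(a_Nt)\|_{L^2(\mathbb{T})}^2$, which does not depend on $p$.

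Next, by Plancherel, $\|\hat\psi\|^2_{L^2(\mathbb{T})}=\int_{\mathbb{T}}(\omega_N^2|\hat q_k|^2+|\hat p_k|^2)\,dk=\sum_{x\in\mathbb{Z}}|\psi_x|^2$. The estimate \eqref{leq:wave function energy}, which relies on Lemma~\ref{lem:omega_N_properties}~(iii) (the uniform upper bound on $\omega_N$) together with \ref{H1}, gives $\mathbb{E}\sum_x\frac12|\psi_x(s)|^2\lesssim \mathbb{E}[\mathcal{E}_N(s)]$. By Lemma~\ref{lem:conservation of energy} (see Remark~\ref{remark:no time}), $\mathbb{E}[\mathcal{E}_N(s)]=\mathbb{E}[\mathcal{E}_N(0)]\le\mathcal{E}_*N$ for every $s$, and this is uniform in $s=a_Nt$. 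Combining the two steps yields $\int_{\mathbb{T}}|\widehat W_N(t,p,k)|\,dk\lesssim \varepsilon_N\cdot N$, uniformly in $t$ and $p$, which is the claimed bound.

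The main point requiring care is the comparison $\mathbb{E}\int_{\mathbb{T}}|\omega_N\hat q|^2\lesssim\mathbb{E}[\mathcal{E}_N]$. The free boundary term $-\frac12\sum_x\eta_x^Nq_x^2$ appears in $\mathcal{E}_N$, so one cannot simply identify $\mathcal{E}_N$ with the wave-function energy. I would take this step directly from \eqref{leq:wave function energy}, which already packages it. If one wanted to verify it independently, one would need a lower bound $\mathcal{E}_N\gtrsim\sum_x(p_x^2+q_x^2)$. This comes from $\alpha\le 0$ (the interaction term $-\frac14\sum\alpha^N_{x-y}(q_y-q_x)^2\ge0$) and the pinning term $\frac{\omega_0^2}{2}\sum q_x^2$, combined with $\omega_N^2\le \omega_0^2+2\sum_{y\ne0}|\alpha^N_y|\lesssim1$.
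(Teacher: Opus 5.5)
Your proof is correct and follows essentially the paper's argument. The paper uses $|ab|\le\frac12(|a|^2+|b|^2)$ where you use Cauchy--Schwarz in $k$, but both reduce, via translation invariance on $\mathbb{T}$, to $\frac{\varepsilon_N}{2}\sup_t\mathbb{E}\|\hat\psi(a_Nt)\|_{L^2(\mathbb{T})}^2$, which is then bounded using Plancherel, \eqref{leq:wave function energy} and energy conservation. Your justification of $\mathcal{E}_N\gtrsim\sum_x(p_x^2+q_x^2)$ from $\alpha\le 0$ and the pinning term is also correct; it spells out a step the paper leaves implicit in \eqref{leq:wave function energy}.
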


\begin{proof}
By the triangle inequality and the translation invariance of the integral over $\mathbb{T}$, we bound the integral as
\begin{eqnarray*}
    \sup_{t, p} \int_{\mathbb{T}} |\widehat{W}_N(t, p, k)| \, dk 
     &\le& \frac{\varepsilon_N}{4} \sup_{t, p} \mathbb{E} \bigg[ \int_{\mathbb{T}} \Big| \hat{\psi}\Big(a_N t, k + \frac{\varepsilon_N p}{2}\Big) \Big|^2 \, dk 
     \\ &&{}+ \int_{\mathbb{T}} \Big| \hat{\psi}^*\Big(a_N t, k - \frac{\varepsilon_N p}{2}\Big) \Big|^2 \, dk \bigg] \\
    &=& \frac{\varepsilon_N}{2} \sup_{t} \mathbb{E} \bigg[ \int_{\mathbb{T}} |\hat{\psi}(a_N t, k)|^2 \, dk \bigg].
\end{eqnarray*}
Applying the Plancherel theorem and \eqref{leq:wave function energy}, we have
$$
\sup_{t} \mathbb{E} \bigg[ \int_{\mathbb{T}} |\hat{\psi}(a_N t, k)|^2 \, dk \bigg] = \sup_{t}\mathbb{E} \bigg[ \sum_{x \in \mathbb{Z}} |\psi_x(a_N t)|^2 \bigg] \lesssim N.
$$
Therefore, we conclude that $\sup_{t, p} \int_{\mathbb{T}} |\widehat{W}_N(t, p, k)| \, dk = O(N\varepsilon_N)$.
\end{proof}

\subsection{Infinitesimal generator and Wigner equations} 
In order to compute $\partial_{t}\widehat W_{N}(t,p,k)$, we consider the infinitesimal generator $\mathcal{G}$ associated with the system. For any sufficiently smooth function $f(q,p)$, taking expectations on both sides of \eqref{eq:itof}, we obtain the infinitesimal generator 
\begin{align}
\mathcal G f
&= \sum_{x=-N}^N \left(
p_x \cdot \frac{\partial f}{\partial q_x}
-(\alpha^N*q)_x \cdot \frac{\partial f}{\partial p_x}
\right)
+\sum_{x=-N}^N \eta_x^N q_x \cdot \frac{\partial f}{\partial p_x} \notag \\
&\quad
+\gamma\sum_{x=-N}^N
\left(
Jp_x \cdot \frac{\partial^2 f}{\partial p_x^2}Jp_x
-p_x \cdot \frac{\partial f}{\partial p_x}
\right).
\end{align}
Now introduce the Wigner-type transforms
\begin{align}
\widehat W_N(t,p,k)
&:=\frac{\varepsilon_N}{2}
\mathbb E\left[
\widehat{\psi}\left(a_Nt,k+\frac{\varepsilon_N p}{2}\right)
\widehat{\psi}^{\,*}\left(a_Nt,k-\frac{\varepsilon_N p}{2}\right)
\right], \label{def:WN}\\
\widehat Y_N(t,p,k)
&:=\frac{\varepsilon_N}{2}
\mathbb E\left[
\widehat{\psi}\left(a_Nt,k+\frac{\varepsilon_N p}{2}\right)
\widehat{\psi}\left(a_Nt,-k+\frac{\varepsilon_N p}{2}\right)
\right], \label{def:YN}
\end{align}
and set
\begin{align}
\widehat Y_{N,-}(t,p,k)
&:=\widehat Y_N^*(t,-p,k),\\
\widehat W_{N,-}(t,p,k)
&:=\widehat W_N(t,p,-k),\\
\widehat U_N(t,p,k)
&:=\frac12\left(\widehat Y_N(t,p,k)+\widehat Y_{N,-}(t,p,k)\right), \label{def:UN}\\
\widehat U_{N,-}(t,p,k)
&:=\frac{1}{2i}\left(\widehat Y_N(t,p,k)-\widehat Y_{N,-}(t,p,k)\right). \label{def:UNminus}
\end{align}
For later purposes, define
\begin{align}
\Delta_N\omega
&:=\omega_N\left(k+\frac{\varepsilon_N p}{2}\right)
-\omega_N\left(k-\frac{\varepsilon_N p}{2}\right), \label{Def:Delta N omega}\\
\bar\omega_N
&:=\omega_N\left(k+\frac{\varepsilon_N p}{2}\right)
+\omega_N\left(k-\frac{\varepsilon_N p}{2}\right), \label{Def:bar omega N}\\
\mathcal Lf(k)
&:=\langle f,1\rangle_{L^2(\mathbb T)}-f(k)
=\int_{\mathbb T}f(k')\,dk'-f(k). \label{def:花体L}
\end{align}
\par Differentiating $\widehat W_N$, $\widehat W_{N,-}$, $\widehat U_N$, and $\widehat U_{N,-}$ with respect to \(t\), we obtain the closed system
\begin{align}
\partial_t\widehat W_N
&=-ia_N\Delta_N\omega\,\widehat W_N
+a_N\gamma\mathcal L(\widehat W_N-\widehat U_N)
+\widehat{\mathcal R}^{(1)}_N, \label{piandao WN}\\
\partial_t\widehat W_{N,-}
&=ia_N\Delta_N\omega\,\widehat W_{N,-}
+a_N\gamma\mathcal L(\widehat W_{N,-}-\widehat U_N)
+\widehat{\mathcal R}^{(2)}_N, \label{piandao WN-}\\
\partial_t\widehat U_N
&=a_N\bar\omega_N\,\widehat U_{N,-}
+a_N\gamma\mathcal L\left(
\widehat U_N-\frac12(\widehat W_N+\widehat W_{N,-})
\right)
+\widehat{\mathcal R}^{(3)}_N, \label{piandao UN}\\
\partial_t\widehat U_{N,-}
&=-a_N\bar\omega_N\,\widehat U_N
-a_N\gamma\widehat U_{N,-}
+\widehat{\mathcal R}^{(4)}_N, \label{piandao UN-}
\end{align}
where the remainder term associated with $\eta^N$ reads
\begin{align}
\widehat{\mathcal R}^{(1)}_N
&=\frac{i a_N\varepsilon_N}{2}
\mathbb E\left[
-\widehat{\psi}_{k+\frac{\varepsilon_N p}{2}}
\widehat{\eta^N q}^{\,*}\left(k-\frac{\varepsilon_N p}{2}\right)
+\widehat{\psi}_{k-\frac{\varepsilon_N p}{2}}^*
\widehat{\eta^N q}\left(k+\frac{\varepsilon_N p}{2}\right)
\right],
\label{eq:R1}\\
\widehat{\mathcal R}^{(2)}_N
&=\frac{i a_N\varepsilon_N}{2}
\mathbb E\left[
-\widehat{\psi}_{-k+\frac{\varepsilon_N p}{2}}
\widehat{\eta^N q}^{\,*}\left(-k-\frac{\varepsilon_N p}{2}\right)
+\widehat{\psi}_{-k-\frac{\varepsilon_N p}{2}}^*
\widehat{\eta^N q}\left(-k+\frac{\varepsilon_N p}{2}\right)
\right],
\label{eq:R2}\\
\widehat{\mathcal R}^{(3)}_N
&=\frac{i a_N\varepsilon_N}{4}
\mathbb E\Bigg[
\widehat{\psi}_{-k+\frac{\varepsilon_N p}{2}}
\widehat{\eta^N q}\left(k+\frac{\varepsilon_N p}{2}\right)
+\widehat{\psi}_{k+\frac{\varepsilon_N p}{2}}
\widehat{\eta^N q}\left(-k+\frac{\varepsilon_N p}{2}\right) \notag\\
&\qquad\qquad
-\widehat{\psi}_{-k-\frac{\varepsilon_N p}{2}}^*
\widehat{\eta^N q}^{\,*}\left(k-\frac{\varepsilon_N p}{2}\right)
-\widehat{\psi}_{k-\frac{\varepsilon_N p}{2}}^*
\widehat{\eta^N q}^{\,*}\left(-k-\frac{\varepsilon_N p}{2}\right)
\Bigg],
\label{eq:R3}\\
\widehat{\mathcal R}^{(4)}_N
&=\frac{a_N\varepsilon_N}{4}
\mathbb E\Bigg[
\widehat{\psi}_{-k+\frac{\varepsilon_N p}{2}}
\widehat{\eta^N q}\left(k+\frac{\varepsilon_N p}{2}\right)
+\widehat{\psi}_{k+\frac{\varepsilon_N p}{2}}
\widehat{\eta^N q}\left(-k+\frac{\varepsilon_N p}{2}\right) \notag\\
&\qquad\qquad
+\widehat{\psi}_{-k-\frac{\varepsilon_N p}{2}}^*
\widehat{\eta^N q}^{\,*}\left(k-\frac{\varepsilon_N p}{2}\right)
+\widehat{\psi}_{k-\frac{\varepsilon_N p}{2}}^*
\widehat{\eta^N q}^{\,*}\left(-k-\frac{\varepsilon_N p}{2}\right)
\Bigg].
\label{eq:R4}
\end{align}

\begin{remark} \label{remark:粗略估计}
Proceeding as in the proof of Lemma \ref{lem:WNsimple estimates}, we can derive rough bounds for quantities with a structure similar to that of $\widehat W_N$. For example,    
\begin{equation}
    \sup_{t\ge0,p\in\mathbb{R}} \left(\frac{1}{a_N}\left\|\widehat{\mathcal R}^{(1)}_N(t,p,\cdot)\right\|_{L^1(\mathbb{T})}
    +
    \left\|\widehat U_N(t,p,\cdot)\right\|_{L^1(\mathbb{T})}\right)
    =O(N\varepsilon_N).
\end{equation}
\end{remark}

For $t>0,\ i\in\{1,2,3,4\}$, define the time-integrated quantities
\begin{align*}
\bar w_N(t,p,k)
&:=\int_0^t\widehat W_N(u,p,k)\,du,
&
\bar w_{N,-}(t,p,k)
&:=\int_0^t\widehat W_{N,-}(u,p,k)\,du,\\
\bar u_N(t,p,k)
&:=\int_0^t\widehat U_N(u,p,k)\,du,
&
\bar u_{N,-}(t,p,k)
&:=\int_0^t\widehat U_{N,-}(u,p,k)\,du,\\
\bar r_N^{(i)}(t,p,k)
&:=\int_0^t\widehat{\mathcal R}_N^{(i)}(u,p,k)\,du\,.
\end{align*}
Integrating \eqref{piandao WN}--\eqref{piandao UN-} over $[0,t]$, we obtain

\begin{align}
&ia_N\Delta_N\omega\,\bar w_N
=\widehat W_N^{(0)}-\widehat W_N
+a_N\gamma\mathcal L(\bar w_N-\bar u_N)
+\bar r_N^{(1)}, \label{piandao wN}\\
&-ia_N\Delta_N\omega\,\bar w_{N,-}
=\widehat W_{N,-}^{(0)}-\widehat W_{N,-}
+a_N\gamma\mathcal L(\bar w_{N,-}-\bar u_N)
+\bar r_N^{(2)}, \label{piandao wN-}\\
&\widehat U_N
=\widehat U_N^{(0)}
+a_N\bar\omega_N\,\bar u_{N,-}
+a_N\gamma\mathcal L\left(
\bar u_N-\frac12(\bar w_N+\bar w_{N,-})
\right)
+\bar r_N^{(3)}, \label{piandao uN}\\
&\widehat U_{N,-}
=\widehat U_{N,-}^{(0)}
-a_N\bar\omega_N\,\bar u_N
-a_N\gamma\bar u_{N,-}
+\bar r_N^{(4)}. \label{piandao uN-}
\end{align}
Here $\widehat W_N^{(0)}(p,k)=\widehat W_N(0,p,k)$, and the same convention is used for $\widehat W_{N,-}^{(0)}$, $\widehat U_N^{(0)}$, and $\widehat U_{N,-}^{(0)}$.

\section{Some key estimates}
We intend to pass the limit $N\rightarrow\infty$ in equation \eqref{piandao wN}.  For this we introduce 
\begin{align} \label{def:D_N}
    D_{N}(p,k):=\gamma+i\Delta_N\omega.
\end{align}
Then, by Lemma \ref{lem:omega_N_properties}~(iii),  
\begin{align} \label{le:DNbounded}
    1\lesssim\sup_{p,k}\left|\frac{\gamma}{D_N(p,k)}\right|\le1.
\end{align}
Multiplying both sides of \eqref{piandao wN} by $\frac{\gamma}{D_{N}}$, and integrating with respect to $k$, we obtain
\begin{align} \label{eq:wn}
 &a_N\gamma\left(1-\int_{\mathbb T}\frac{\gamma}{D_{N}(p,k)}\,dk\right)
 \int_{\mathbb T}\bar w_N(t,p,k)\,dk
\notag \\=&\; \gamma\int_{T}\frac{\widehat{W}_N^{(0)}(p,k)-\widehat{W}_N(t,p,k)}{D_{N}(p,k)}\,dk \notag \\
&-a_N\gamma\int_{\mathbb T}\bar u_{N}(t,p,k) \mathcal L\left(\frac{\gamma}{D_N}\right)(p,k)\,dk
+\gamma\int_{\mathbb T}
\frac{\bar r_{N}^{(1)}(t,p,k)}
{D_{N}(p,k)}\,dk.
\end{align}

\par We  proceed to estimate each term. 

\par To perform the Taylor expansion, set $P_N=(\delta_N N\varepsilon_N)^{-\frac{1}{2}}.$ By \ref{H2},
$P_N \to \infty, \ \delta_N N\varepsilon_N P_N \to 0$.  Then there exists an integer $m_1 \ge 3$ such that
\begin{align*}
    \frac{P_N^{-m_1}}{\delta_N\varepsilon_N} \lesssim \delta_N^2 N^3 \varepsilon_N^2 \quad \text{and} \quad N\varepsilon_N P_N^{-m_1} \to 0.
\end{align*}
By Lemma \ref{lem:omega_N_properties}~(iii), for any $m\in\mathbb{N}$, 
$$\sup_{k\in\mathbb{T}}\left| \frac{d^m}{dk^m} \omega_N(k) \right| = C_mO\left( (\delta_N N)^m \right)\,.$$
 In the regime $|p|\le P_N$, the condition $\delta_N N\varepsilon_N P_N \to 0$ justifies the following  expansion of~$\Delta_N\omega$, 
\begin{align}
    \Delta_N\omega &= \omega_{N}\left(k +\frac{\varepsilon_N p}{2}\right) - \omega_{N}\left(k-\frac{\varepsilon_N p}{2}\right) 
    = \varepsilon_N p\omega_N^{\prime}(k) + O\left((\delta_N N\varepsilon_N |p|)^3\right).
\end{align}
Furthermore,  for $|p|\le P_N$,  
\begin{align} \label{eq:DNtalor}
    \frac{\gamma}{D_N(p,k)} &= \left( 1  + \frac{i \varepsilon_N p \omega'_N(k)}{\gamma}+O\left((\delta_N N\varepsilon_N |p|)^3\right)\right)^{-1} \notag  \\
    &= 1  - \frac{i \varepsilon_N p \omega'_N(k)}{\gamma}  - \frac{\varepsilon_N^2 p^2 (\omega'_N(k))^2}{\gamma^2} + O\left((\delta_N N\varepsilon_N |p|)^3\right).
\end{align}
The coefficient on the left-hand side of \eqref{eq:wn} determines the effective diffusion term in the limit. Accordingly, define
\begin{align}
    d_N(p) := a_N\gamma\left(1-\int_{\mathbb T}\frac{\gamma}{D_{N}(p,k)}\,dk\right).
\end{align}
The next proposition gives the limiting form of $d_N(p)$.
\begin{proposition} \label{prop:main dNp}
For any $\varphi \in \mathcal{S}(\mathbb{R}),m \in \mathbb{N}_+,$ 
\begin{eqnarray} \label{guji1}
   &&\sup_{t\geq0}\left|  \int_{\mathbb{R}\times\mathbb T}\hat{\varphi}^*(p)\widehat{W}_N(t,p,k)\left( d_N(p)-\frac{\hat c}{\gamma}p^2\right)  \,dp\,dk \right|
   \notag \\&=& C_\varphi O\left(N\varepsilon_N(\delta_N N)^{-m}\right)+C_\varphi O(\delta_N^2 N^3 \varepsilon_N^2).
\end{eqnarray}
\end{proposition}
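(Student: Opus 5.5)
The plan is to split the $p$-integral into the low-frequency region $|p|\le P_N$ and the high-frequency region $|p|>P_N$, with $P_N=(\delta_N N\varepsilon_N)^{-1/2}$. We then estimate $d_N(p)-\frac{\hat c}{\gamma}p^2$ pointwise on each region. Finally we integrate against $\hat\varphi^*(p)\widehat W_N$, using Lemma~\ref{lem:WNsimple estimates}, $\sup_{t,p}\|\widehat W_N(t,p,\cdot)\|_{L^1(\mathbb T)}=O(N\varepsilon_N)$. Since $d_N$ does not depend on $k$, the $k$-integral only ever produces this $L^1$ norm.

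\textbf{Low-frequency region} $|p|\le P_N$. Here we use the expansion \eqref{eq:DNtalor}, integrated over $\mathbb T$. Because $\omega_N$ is even (Lemma~\ref{lem:omega_N_properties}(i)), $\omega_N'$ is odd, so $\int_{\mathbb T}\omega_N'\,dk=0$. This gives
\[
d_N(p)=a_N\gamma\Big(\frac{\varepsilon_N^2p^2}{\gamma^2}\int_{\mathbb T}(\omega_N')^2dk+O\big((\delta_N N\varepsilon_N|p|)^3\big)\Big).
\]
Since $a_N\varepsilon_N^2=(\delta_N(2N+1))^{-1}$, the main term equals $\frac{p^2}{\gamma}\cdot\frac{1}{\delta_N(2N+1)}\int_{\mathbb T}(\omega_N')^2dk$.

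The next step is to identify $\frac{1}{\delta_N(2N+1)}\int_{\mathbb T}(\omega_N'(k))^2dk$ with $\hat c$ up to a small error. Write $L=\delta_N(2N+1)$. By Poisson summation, for $k\in\mathbb T$,
\[
\widehat{\alpha^N}(k)=\omega_0^2-\sum_{y\ne0}\alpha^N_y+\sum_{y\ne0}\alpha^N_y e^{-2\pi iky}=\omega_0^2+\|\alpha\|_{L^1}+\sum_{j}\hat\alpha(L(k+j))+O(L^{-m}).
\]
The Riemann-sum errors here are $O(L^{-m})$ for every $m$, because $\alpha\in C_c^\infty$. Also $-\sum_{y\neq 0}\alpha_y^N\to\|\alpha\|_{L^1}$ by \ref{a3}, and the term $-\alpha(0)/L$ is absorbed into the error.

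Differentiating gives $\omega_N'=\frac{L\hat\alpha'(Lk)}{2\omega_N}$ plus terms that are rapidly small, from the periodization and the decay of $\hat\alpha$. Substituting $p=Lk$ then yields
\[
\frac1L\int_{\mathbb T}(\omega_N')^2dk=\frac14\int_{\mathbb R}\frac{|\hat\alpha'(p)|^2}{\omega_0^2+\|\alpha\|_{L^1}+\hat\alpha(p)}dp+O(L^{-m}).
\]
This is $\hat c+O((\delta_NN)^{-m})$. Multiplying by $p^2|\hat\varphi(p)|$, integrating, and applying the $L^1$ bound gives the error $C_\varphi O(N\varepsilon_N(\delta_NN)^{-m})$.

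The cubic remainder contributes
\[
a_N(\delta_NN\varepsilon_N)^3|p|^3\lesssim \delta_N^2N^2\varepsilon_N|p|^3.
\]
Integrating against $|\hat\varphi|$ and multiplying by $N\varepsilon_N$ gives $C_\varphi O(\delta_N^2N^3\varepsilon_N^2)$, as required.

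\textbf{High-frequency region} $|p|>P_N$. By \eqref{le:DNbounded}, $|d_N(p)|\lesssim a_N$. Since $\hat\varphi$ is Schwartz, $\int_{|p|>P_N}|\hat\varphi|(a_N+p^2)\,dp\lesssim C_\varphi(a_N+1)P_N^{-m_1}$. After multiplying by $N\varepsilon_N$, this is
\[
N\varepsilon_N a_NP_N^{-m_1}\lesssim \frac{P_N^{-m_1}}{\delta_N\varepsilon_N}\lesssim\delta_N^2N^3\varepsilon_N^2,
\]
by the choice of $m_1$. The $p^2$ part, $N\varepsilon_NP_N^{-m_1}$, tends to zero and is absorbed into the same bound; if needed, $m_1$ can be enlarged.

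The main obstacle is the identification of $\hat c$. It requires controlling the Poisson-summation and periodization errors in $\omega_N'$ uniformly in $k$, and the finite-sum correction in $\alpha_0^N$, all to order $(\delta_NN)^{-m}$. The expansion \eqref{eq:DNtalor} on $|p|\le P_N$ is already justified earlier in the paper, so it poses no additional difficulty.
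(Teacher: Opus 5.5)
Your proposal is correct and follows the paper's proof essentially step for step: the split at $P_N$, integrating \eqref{eq:DNtalor} over $\mathbb T$ with $\int_{\mathbb T}\omega_N'\,dk=0$, identifying $a_N\varepsilon_N^2\int_{\mathbb T}|\omega_N'|^2\,dk$ with $\hat c$ through the Poisson-summation argument of Lemma~\ref{lem:aN limit}, and bounding the tail using $|d_N|\le 2\gamma a_N$ and the choice of $m_1$. One correction to your justification: the $\alpha(0)/L$ contributions are not absorbed into an $O(L^{-m})$ error, since they are only $O(L^{-1})$ and would leave an $O(\varepsilon_N/\delta_N)$ error that need not vanish under \ref{H2}; instead they cancel exactly, because $\widehat{\alpha^N}(k)=\omega_0^2+\sum_{y\in\mathbb Z}\frac1L\alpha(\frac yL)\bigl(e^{-2\pi iky}-1\bigr)$ with $L=\delta_N(2N+1)$, and the $y=0$ term vanishes before you apply Poisson summation.
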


\begin{proof}

Since $\omega_N$ is  even, its derivative $\omega_N'$ is odd, which implies $\int_{\mathbb{T}} \omega_N'(k) \, dk = 0$. So for $|p| \leq P_N,$ integrating both sides of \eqref{eq:DNtalor} with respect to $k$,
\begin{align*}
    &1 - \int_{\mathbb{T}} \frac{\gamma}{D_N(p,k)} \, dk 
     =  \frac{\varepsilon_N^2 p^2}{\gamma^2} \int_{\mathbb{T}} |\omega_N'(k)|^2 \, dk 
     +O\left((\delta_N N\varepsilon_N |p|)^3\right).
\end{align*}

Recall that $a_N$\,, defined in \eqref{def:aN}\,, satisfies $a_N \asymp (\delta_N N \varepsilon_N^2)^{-1}$, by Lemma~\ref{lem:aN limit},  
\begin{eqnarray*}
    \left| a_N\varepsilon_N^2\int_{\mathbb{T}} |\omega_N'(k)|^2 \, dk - \hat c \right| = O\left((\delta_N N)^{-m}\right).
\end{eqnarray*}
Consequently, for $|p| \leq P_N$, we have
\begin{eqnarray*}
    &&\left| d_N(p) - \frac{\hat c}{\gamma} p^2 \right| \\
    &=& \left| \frac{a_N \varepsilon_N^2 p^2}{\gamma} \int_{\mathbb{T}} |\omega_N'(k)|^2 \, dk - \frac{\hat c}{\gamma} p^2 + O\left(a_N(\delta_N N\varepsilon_N |p|)^3\right) \right| \\ 
    &=& O\left((\delta_N N)^{-m}|p|^2\right) + O\left(\delta_N^2N^2\varepsilon_N |p|^3\right).
\end{eqnarray*}
By Lemma \ref{lem:WNsimple estimates}, combining this with the previous estimate yields
\begin{eqnarray*}
    && \int_{\{|p|\leq P_N\} }\int_{ \mathbb{T}} \Big|\hat{\varphi}^*(p) \widehat{W}_N(t,p,k)\Big|\cdot \Big| d_N(p) -  \frac{\hat c}{\gamma} p^2 \Big| \, dp \, dk \\
    &\lesssim& \int_{\mathbb{R}} |\hat{\varphi}^*(p)| (1+|p|^2+|p|^3) \, dp \cdot \left(O\left(N\varepsilon_N(\delta_N N)^{-m}\right)+ \delta_N^2 N^3 \varepsilon_N^2 \right).
\end{eqnarray*}
Since $\varphi \in \mathcal{S}(\mathbb{R})$, the integral $\int_{\mathbb{R}} |\hat{\varphi}^*(p)| (1+|p|^2+|p|^3) \, dp$ is finite.  
\par
The tail regime $|p|>P_N$ will subsequently be controlled utilizing the boundedness of $d_N$ and the rapid decay of the Schwartz test functions. In fact,  
by \eqref{def:D_N},  $|d_N(p)|\leq 2\gamma a_N$\,, and 
\begin{align*}
    \left| d_N(p)-\frac{\hat c}{\gamma}p^2\right| \lesssim a_N+p^2\,. 
\end{align*}
 Then  \begin{eqnarray*}
    &&\int_{\{|p|\geq P_N\} }\int_{ \mathbb{T}} \hat{\varphi}^*(p) \widehat{W}_N(t,p,k) \left| d_N(p) - \frac{\hat c}{\gamma} p^2 \right| \, dp \, dk \notag \\
    &\lesssim& N\varepsilon_N \int_{\{|p|\geq P_N\} } |\hat{\varphi}^*(p)|(a_N+p^2) \, dp \\
   & \lesssim& C_\varphi \frac{P_N^{-m_1}}{\delta_N\varepsilon_N} 
    \lesssim C_\varphi\delta_N^2 N^3 \varepsilon_N^2.
\end{eqnarray*}
Combining the estimates above completes the proof.
\end{proof}


Next, we identify 
the limit of the first term on the right-hand side of \eqref{eq:wn}.
\begin{lemma} 
For any $\varphi \in \mathcal{C}$,  
\begin{eqnarray}
    &&\sup_{t\ge 0} \left|\int_{\mathbb{R} \times \mathbb{T}} \widehat{\varphi}^*(p) 
  \left( \widehat{W}_N^{(0)}(p,k)-\widehat{W}_N(t,p,k) \right) 
  \left( \frac{\gamma}{D_N(p,k)} - 1 \right) \, dp \, dk\right| 
  \notag \\
  &=& C_\varphi O(\delta_N N^2\varepsilon_N^2) + C_\varphi O(N\varepsilon_N P_N^{-m_1}).
\end{eqnarray}
\end{lemma}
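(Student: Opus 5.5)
Split the $p$-integral into the regimes $|p|\le P_N$ and $|p|>P_N$, as in the proof of Proposition~\ref{prop:main dNp}, with $P_N=(\delta_N N\varepsilon_N)^{-1/2}$ and $m_1$ as fixed there.

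\emph{Small $|p|$.} For $|p|\le P_N$, the expansion \eqref{eq:DNtalor} gives
\begin{align*}
\frac{\gamma}{D_N(p,k)}-1=-\frac{i\varepsilon_N p\,\omega_N'(k)}{\gamma}+O\left((\delta_N N\varepsilon_N|p|)^2\right).
\end{align*}
Alternatively, one can use the exact bound
\begin{align*}
\left|\frac{\gamma}{D_N}-1\right|=\frac{|\Delta_N\omega|}{|D_N|}\le \frac{|\Delta_N\omega|}{\gamma}\lesssim \sup_k|\omega_N'(k)|\,\varepsilon_N|p|\lesssim \delta_N N\varepsilon_N|p|,
\end{align*}
which follows from Lemma~\ref{lem:omega_N_properties}~(iii) and is valid for all $p$. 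I would use this crude first-order bound directly. Lemma~\ref{lem:WNsimple estimates} bounds both $\int_{\mathbb T}|\widehat W_N^{(0)}|\,dk$ and $\int_{\mathbb T}|\widehat W_N(t)|\,dk$ by $O(N\varepsilon_N)$, uniformly in $t$ and $p$. Therefore the contribution from $|p|\le P_N$ is bounded by
\begin{align*}
2\,O(N\varepsilon_N)\,\delta_N N\varepsilon_N\int_{\mathbb R}|\hat\varphi(p)|\,|p|\,dp = C_\varphi O(\delta_N N^2\varepsilon_N^2),
\end{align*}
uniformly in $t\ge0$. This is exactly the first error term, and $\int|\hat\varphi||p|\,dp<\infty$ because $\varphi\in\mathcal C\subset\mathcal S(\mathbb R)$.

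\emph{Large $|p|$.} For $|p|>P_N$, I would use $|\gamma/D_N-1|\le 2$ from \eqref{le:DNbounded} together with the same $L^1(\mathbb T)$ bound from Lemma~\ref{lem:WNsimple estimates}. This gives a bound of
\begin{align*}
N\varepsilon_N\int_{|p|>P_N}|\hat\varphi(p)|\,dp\le N\varepsilon_N P_N^{-m_1}\int_{\mathbb R}|p|^{m_1}|\hat\varphi(p)|\,dp = C_\varphi O(N\varepsilon_N P_N^{-m_1}),
\end{align*}
using the Schwartz decay of $\hat\varphi$.

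\emph{Main point to check.} The only real subtlety is confirming that the first-order bound on $|\gamma/D_N-1|$ holds uniformly in $k\in\mathbb T$ and for all $p$. This amounts to $|\Delta_N\omega|\le \varepsilon_N|p|\sup|\omega_N'|$ by the mean value theorem, together with $\sup|\omega_N'|=O(\delta_N N)$ from Lemma~\ref{lem:omega_N_properties}~(iii). Once this is in place, the splitting is needed only to produce the stated form of the errors. In fact the crude bound works for all $p$, so the $P_N^{-m_1}$ term is harmless, and the two estimates combine to give the claim.
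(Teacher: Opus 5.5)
Your proposal is correct and is essentially the paper's proof. You split at $|p|=P_N$. On the small-$|p|$ region you bound $|\gamma/D_N-1|\lesssim \delta_N N\varepsilon_N|p|$ and use the uniform $L^1(\mathbb T)$ bound of Lemma~\ref{lem:WNsimple estimates}; on the tail you use $|\gamma/D_N-1|\le 2$ and the Schwartz decay of $\hat\varphi$. Your extra remark is also right: the bound $|\gamma/D_N-1|=|\Delta_N\omega|/|D_N|\le|\Delta_N\omega|/\gamma$ holds for all $p$ (the paper uses the same argument in Lemma~\ref{lem:DN higher-order derivative}~(i)), so the split is dispensable and the $P_N^{-m_1}$ term could be dropped.
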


\begin{proof}
    The discussion  is similar to that for Proposition \ref{prop:main dNp}. By \eqref{eq:DNtalor} and Lemma~\ref{lem:WNsimple estimates}
\begin{align*}
    &\sup_{t\ge 0}\left| \int_{\{|p| \le P_N\}} \int_{\mathbb{T}} 
    \hat{\varphi}^*(p)  
    \left( \widehat{W}_N^{(0)}(p,k)-\widehat{W}_N(t,p,k) \right) 
    \bigg( \frac{\gamma}{D_N(p,k)} - 1 \bigg) \, dk \, dp \right| \\
     \le\quad& \sup_{p \in \mathbb{R}} \int_{\mathbb{T}} 
     \left( |\widehat{W}_N^{(0)}(p,k)|+\sup_{t\ge 0}|\widehat{W}_N(t,p,k)|\right) \, dk
    \int_{\mathbb{R}} |\hat{\varphi}^*(p)| \delta_N N\varepsilon_N |p|  \, dp \\
    \lesssim\quad& \bigg( \int_{\mathbb{R}} |\hat{\varphi}^*(p)|(1+|p|) \, dp \bigg)
      \delta_N N^2\varepsilon_N^2 
    = C_\varphi O(\delta_N N^2\varepsilon_N^2)
\end{align*}
which vanishes  by \ref{H2}.  

For the tail part, using \eqref{le:DNbounded} and Lemma \ref{lem:WNsimple estimates}, we get
\begin{align*}
    &\left| \int_{\{|p| > P_N\}} \int_{\mathbb{T}} 
    \hat{\varphi}^*(p)  
    \left( \widehat{W}_N^{(0)}(p,k)-\widehat{W}_N(t,p,k) \right) 
    \bigg( \frac{\gamma}{D_N(p,k)} - 1 \bigg) \, dk \, dp \right| \\
     \le\quad& 4N\varepsilon_N \int_{\{|p| > P_N\}} |\hat{\varphi}^*(p)| \, dp
    \lesssim C_\varphi O( N\varepsilon_N P_N^{-m_1} ).
\end{align*}
This completes the proof.
\end{proof}
Therefore, by \ref{H2},
\begin{align} \label{guji2}
    \left|
\left\langle
\gamma \int_{\mathbb{T}} \frac{\widehat{W}_N^{(0)}(p,k)-\widehat{W}_N(t,p,k)}{D_N(p,k)} \, dk,
\hat{\varphi}
\right\rangle_{L^2(\mathbb{R})}
-
\left\langle \widehat{W}_N^{(0)}-\widehat{W}_N(t), \hat{\varphi} \right\rangle_{L^2(\mathbb{R}\times \mathbb{T})}
\right|
\to 0.
\end{align}
By  Proposition \ref{thm:wavebijin},
$$
\left| \left\langle \widehat{W}_N^{(0)}, \hat{\varphi} \right\rangle_{L^2(\mathbb{R}\times \mathbb{T})}- I_N(0,\varphi) 
\right|
\to 0\,.
$$
However,   by \ref{H3},    
$$
I_N(0,\varphi) \to \langle \mu_0, \varphi\rangle
$$
just holds for $\varphi \in C_c^\infty(\mathbb{R})$\,. 
The following lemma extends this convergence from $C_c^\infty(\mathbb R)$ to $\mathcal C$.

\begin{lemma} \label{lem:t=0 H2 C_function}
For $\varphi \in \mathcal{C}$,
\begin{equation}
    \lim_{\substack{ N\to \infty}} \varepsilon_N \sum_{x \in \mathbb{Z}} \varphi( \varepsilon_N x) \mathbb{E}\left[ e_x \left( 0 \right)\right] = \int_{\mathbb{R}} \varphi(u) \mu_0(du).
\end{equation}
\end{lemma}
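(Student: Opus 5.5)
We extend \ref{H3} from $C_c^\infty(\mathbb{R})$ to $\mathcal C$ by a truncation argument, using \ref{H4} and \eqref{H3,2} to control the tails uniformly in $N$. Fix $\varphi\in\mathcal C$ and a cutoff $\chi\in C_c^\infty(\mathbb{R})$ with $0\le\chi\le1$, $\chi=1$ on $[-1,1]$, $\operatorname{supp}\chi\subset[-2,2]$. For $R>0$ set $\chi_R(u)=\chi(u/R)$. We then split
\[
\varphi=\varphi\chi_R+\varphi(1-\chi_R).
\]
Since $\varphi\chi_R\in C_c^\infty(\mathbb{R})$, \ref{H3} gives
\[
I_N(0,\varphi\chi_R)\to\int_{\mathbb{R}}\varphi\chi_R\,d\mu_0
\]
for each fixed $R$. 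It therefore suffices to show that
\[
\limsup_{N}\bigl|I_N(0,\varphi(1-\chi_R))\bigr|
\qquad\text{and}\qquad
\Bigl|\int_{\mathbb{R}}\varphi(1-\chi_R)\,d\mu_0\Bigr|
\]
both tend to $0$ as $R\to\infty$.

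For the measure side, \eqref{phi_exp——decay} gives $|\varphi(u)(1-\chi_R(u))|\le C_{0,\varphi}e^{-b_0|u|}\mathbf 1_{\{|u|\ge R\}}$. By \eqref{H3,2} and dominated convergence, the integral against $\mu_0$ tends to $0$ as $R\to\infty$.

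For the discrete side, the difficulty is that $e_x$ is not pointwise nonnegative: the interaction term $-\frac14\sum_y\alpha^N_{x-y}(q_y-q_x)^2$ is nonnegative by \ref{a3}, and $\frac12p_x^2$, $\frac{\omega_0^2}{2}q_x^2$ are nonnegative too. So in fact $e_x\ge0$, since $\alpha\le0$ makes every summand nonnegative. Hence we may estimate with absolute values:
\[
\bigl|I_N(0,\varphi(1-\chi_R))\bigr|\le C_{0,\varphi}\,\varepsilon_N\sum_{|x|\ge R/\varepsilon_N}e^{-b_0\varepsilon_N|x|}\mathbb E[e_x(0)].
\]
Writing $e^{-b_0\varepsilon_N|x|}=e^{-(b_0-b_1)\varepsilon_N|x|}e^{-b_1\varepsilon_N|x|}$ and using $\varepsilon_N|x|\ge R$, this is bounded by
\[
C_{0,\varphi}\,e^{-(b_0-b_1)R}\,\sup_N\varepsilon_N\sum_{x}e^{-b_1\varepsilon_N|x|}\mathbb E[e_x(0)].
\]
The supremum is finite by \ref{H4}, and $b_1<b_0$, so this bound vanishes as $R\to\infty$ uniformly in $N$.

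A standard $\epsilon/3$ argument now concludes: choose $R$ so that both tails are below $\epsilon$, then let $N\to\infty$ in the compactly supported part. The step I expected to be the main obstacle was controlling the discrete tail uniformly in $N$. It reduces to the nonnegativity of $e_x$ (from \ref{a3}) together with the gap $b_1<b_0$ in \ref{H4}; without the gap, only boundedness rather than smallness of the tail would be available. Note that the sum over $x\in\mathbb Z$ in the statement equals the sum over $|x|\le N$, since $e_x=0$ for $|x|>N$, so no boundary issue arises here.
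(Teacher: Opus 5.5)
Your proof is correct and follows the paper's argument: truncate with $\varphi\chi(\cdot/R)$, apply \ref{H3} to the compactly supported piece, remove the $\mu_0$-tail by dominated convergence using \eqref{H3,2}, and bound the discrete tail uniformly in $N$ by $C_\varphi e^{-(b_0-b_1)R}$ via \ref{H4}. Your explicit check that $e_x\ge 0$ (each interaction summand is nonnegative by \ref{a3}) is a useful addition, since the paper uses this positivity tacitly when it moves the absolute value inside the sum.
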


\begin{proof}
When $\varphi \in C_c^\infty(\mathbb{R}),$ the conclusion follows under \ref{H3}. For any $\varphi \in \mathcal{C}$, it suffices to show that $I_N(0, \varphi) \to \langle \mu_0, \varphi \rangle$ as $N \to \infty$.
Let $\chi\in C_c^\infty(\mathbb R)$ be a smooth cutoff function such that
$0\le\chi\le1$, $\chi(p)=1$ for $|p|\le1$, $\chi(p)=0$ for $|p|\ge2$.
For any $m \ge 1$, we define
$$
\varphi_m(x) = \varphi(x) \chi\left(\frac{x}{m}\right).
$$
Then $\varphi_m \in C_c^\infty(\mathbb{R})$, and $\varphi_m \to \varphi$ as $m \to \infty$. By the triangle inequality, we have
\begin{align} \label{*1}
\left| I_N(0, \varphi) - \langle \mu_0, \varphi \rangle \right| \le &\left| I_N(0, \varphi) - I_N(0, \varphi_m) \right| \nonumber \\
&+ \left| I_N(0, \varphi_m) - \langle \mu_0, \varphi_m \rangle \right| + \left| \langle \mu_0, \varphi_m \rangle - \langle \mu_0, \varphi \rangle \right|.
\end{align}

By \ref{H3}, $I_N(0, \varphi_m) \to \langle \mu_0, \varphi_m \rangle$ as $N \to \infty$. Furthermore, \ref{H3} also implies $\varphi \in L^1(\mu_0)$, which yields the uniform bound $\left| \varphi_m - \varphi \right| \le 2|\varphi| \in L^1(\mu_0)$. Applying the dominated convergence theorem, we obtain $\langle \mu_0, \varphi_m \rangle \to \langle \mu_0, \varphi \rangle$ as $m \to \infty$.

For the first term, using the support property of the cutoff function, we obtain
\begin{align*}
\left| I_N(0, \varphi) - I_N(0, \varphi_m) \right| \quad&=\quad \left| \varepsilon_N \sum_{|x| \le N} \varphi(\varepsilon_N x) \left( 1 - \chi\left(\frac{\varepsilon_N x}{m}\right) \right) \mathbb{E}[e_x(0)] \right| \\
&\le\quad \varepsilon_N \sum_{m/\varepsilon_N \le | x| \le N} \left| \varphi(\varepsilon_N x) \right| \mathbb{E}[e_x(0)] \\
&\lesssim\quad C_\varphi\varepsilon_N \sum_{m/\varepsilon_N \le | x| \le N} e^{-b_1\varepsilon_N|x|} e^{-(b_0-b_1)m} \mathbb{E}[e_x(0)] \\
&\le\quad C_\varphi e^{-(b_0-b_1)m} \sup_N \varepsilon_N \sum_{x=-N}^N e^{-b_1\varepsilon_N|x|} \mathbb{E}[e_x(0)] \\
&\lesssim\quad C_\varphi e^{-(b_0-b_1)m},
\end{align*}
where $b_1$ is the constant appearing in \ref{H4}.
Therefore, taking $N \to \infty$ first and then $m \to \infty$ in \eqref{*1}, the conclusion holds.
\end{proof}

Then by Lemma \ref{lem:t=0 H2 C_function} and \ref{H2},   for $\varphi\in \mathcal{C}$
\begin{align} \label{guji3}
    \left\langle
\gamma \int_{\mathbb{T}} \frac{\widehat{W}_N^{(0)}(p,k)}
{D_N(p,k)} \, dk,
\hat{\varphi} 
\right\rangle
\to  \langle \mu_0, \varphi\rangle.
\end{align}
\par Next, we proceed to estimate the remainder terms $R_N^{(i)}, i\in\{1,2,3,4\}$, arising from the free boundary condition. It is worth noting that without the presence of the test function, the asymptotic order of these terms would blow up. Therefore, the test function plays a crucial role in the analysis, and its regularizing effect must be exploited through integration by parts. To facilitate this procedure, we first establish the necessary estimates for the higher-order derivatives in the following lemma.
\begin{lemma} \label{lem:DN higher-order derivative}
For every $n\ge1$, there exists $C_n>0$, such that
\begin{enumerate}[label=\textup{(\roman*)}, align=left]
    \item \begin{equation*}
        \sup_{\substack{|p|\le2P_N \\ k\in\mathbb{T}}}\left|\frac{\gamma}{D_N(p,k)}-1\right|
=O(\delta_N N\varepsilon_N P_N).
    \end{equation*}
    \item \begin{align*}
\sup_{\substack{p\in\mathbb{R} \\ k\in\mathbb{T}}}
\left| \partial_p^n\left( \frac{\gamma}{D_N(p,k)}-1 \right) \right|
= C_nO\left((\delta_N N\varepsilon_N)^n\right).
\end{align*}
   \item
\begin{align*}
\sup_{\substack{p\in\mathbb{R} \\ k\in\mathbb{T}}}
\left| \partial_k^n\left( \frac{\gamma}{D_N(p,k)}-1 \right) \right|
=C_n O((\delta_N N)^{n} ).
\end{align*}
\end{enumerate}
\end{lemma}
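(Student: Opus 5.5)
The three bounds all come from writing $\gamma/D_N-1=-i\Delta_N\omega/(\gamma+i\Delta_N\omega)$ and using two facts: $\operatorname{Re}D_N=\gamma>0$, and the derivative bounds on $\omega_N$ from Lemma~\ref{lem:omega_N_properties}~(iii), namely $\sup_k|\omega_N^{(m)}(k)|=C_mO((\delta_NN)^m)$. The function $z\mapsto \gamma/(\gamma+iz)-1$ is smooth and bounded on $\mathbb R$, with all derivatives bounded: its $j$-th derivative is a constant times $(\gamma+iz)^{-j-1}$, and $|\gamma+iz|\ge\gamma$. So each estimate reduces to estimating derivatives of the real function $\Delta_N\omega(p,k)$ and then applying Fa\`a di Bruno's formula.

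For (i), use the mean value theorem: $|\Delta_N\omega|\le \varepsilon_N|p|\sup_k|\omega_N'|\lesssim \delta_NN\varepsilon_N|p|$. Combined with $|\gamma/D_N-1|\le|\Delta_N\omega|/\gamma$, this gives $O(\delta_NN\varepsilon_NP_N)$ uniformly for $|p|\le 2P_N$ and $k\in\mathbb T$.

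For (ii), compute
\[
\partial_p^j\Delta_N\omega=(\varepsilon_N/2)^j\bigl(\omega_N^{(j)}(k+\tfrac{\varepsilon_Np}{2})-(-1)^j\omega_N^{(j)}(k-\tfrac{\varepsilon_Np}{2})\bigr),
\]
so that $\sup|\partial_p^j\Delta_N\omega|\lesssim C_j(\delta_NN\varepsilon_N)^j$ uniformly in $p,k$. Fa\`a di Bruno then expresses $\partial_p^n(\gamma/D_N)$ as a finite sum over partitions $j_1+\dots+j_\ell=n$ of terms $c\,(\gamma+i\Delta_N\omega)^{-\ell-1}\prod_r\partial_p^{j_r}\Delta_N\omega$. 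Each such term is bounded by $\gamma^{-\ell-1}\prod_r(\delta_NN\varepsilon_N)^{j_r}=O((\delta_NN\varepsilon_N)^n)$. No smallness of $\delta_NN\varepsilon_N$ is needed here, since the bound holds for every $p$, and the constant $-1$ disappears on differentiation.

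Part (iii) is identical with $\partial_k$ in place of $\partial_p$. Here $\partial_k^j\Delta_N\omega=\omega_N^{(j)}(k+\frac{\varepsilon_Np}{2})-\omega_N^{(j)}(k-\frac{\varepsilon_Np}{2})$, which is bounded by $2\sup|\omega_N^{(j)}|=O((\delta_NN)^j)$. The same partition sum then gives $C_nO((\delta_NN)^n)$. A product of bounds $(\delta_NN)^{j_r}$ gives exactly $(\delta_NN)^n$, and because $\delta_NN\to\infty$ we do not need lower-order terms to dominate.

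The only step that needs care is making sure that Lemma~\ref{lem:omega_N_properties}~(iii) indeed provides uniform bounds on all higher derivatives of $\omega_N=\sqrt{\widehat{\alpha^N}}$, which the paper already uses above. This relies on $\widehat{\alpha^N}\ge\omega_0^2>0$, so the square root is smooth with derivatives controlled by those of $\widehat{\alpha^N}$, and $|\partial_k^m\widehat{\alpha^N}|\lesssim\sum_y|y|^m|\alpha_y^N|\lesssim(\delta_NN)^m$ by Lemma~\ref{lem:alpha_properties}~(iii). Given that, the rest is bookkeeping of the constants $C_n$, which depend only on $n$, $\gamma$, and the constants in those lemmas.
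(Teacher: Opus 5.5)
Your proposal is correct and is essentially the paper's argument. Part (i) is the same mean-value bound $|\gamma/D_N-1|\le|\Delta_N\omega|/\gamma$. Parts (ii)--(iii) rest on the same two inputs: the bounds on $\partial_p^j\Delta_N\omega$ and $\partial_k^j\Delta_N\omega$ from Lemma~\ref{lem:omega_N_properties}~(iii), and $|D_N|\ge\gamma$. The only difference is bookkeeping. You expand $\partial^n$ of $z\mapsto\gamma/(\gamma+iz)$ composed with $\Delta_N\omega$ via Fa\`a di Bruno, whereas the paper differentiates $F_ND_N=\gamma$ (with $F_N=\gamma/D_N$) by the Leibniz rule and inducts on $n$. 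Both give exactly the $(\delta_NN\varepsilon_N)^n$ and $(\delta_NN)^n$ bounds.
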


\begin{proof}
\par
\noindent(i). 
Proceeding directly from \eqref{def:D_N}, we observe that for $|p| \le 2|P_N|$,
\begin{eqnarray*}
|\Delta_N\omega(p,k)|&=&\left|\omega_{N}\left(k +\frac{\varepsilon_N p}{2}\right)-\omega_{N}\left(k -\frac{\varepsilon_N p}{2}\right)\right|
\le \sup_k|\omega_N'(k)|\varepsilon_N |p| \\
&\lesssim &\delta_N N\varepsilon_N P_N.
\end{eqnarray*}
Therefore,
\begin{align}
&\left|\frac{\gamma}{D_N(p,k)}-1\right|
=\left|\frac{-i\Delta_N\omega(p,k)}{\gamma+i\Delta_N\omega(p,k)}\right|
\leq \frac{1}{\gamma}\left|i\Delta_N\omega(p,k)\right|
=O(\delta_N N\varepsilon_N P_N).
\end{align}
\par
\noindent(ii). 
For every $n\ge1$, by Lemma \ref{lem:omega_N_properties}~(iii),
\begin{align*}
\partial_p^n\Delta_N\omega(p,k)
\quad=\quad&
\left(\frac{\varepsilon_N}{2}\right)^n
\left[
\omega_N^{(n)}\left(k+\frac{\varepsilon_N p}{2}\right)
-
(-1)^n\omega_N^{(n)}\left(k-\frac{\varepsilon_N p}{2}\right)\right] \\
=\quad&C_nO\left((\delta_N N\varepsilon_N)^n\right).
\end{align*}
So
\begin{align*}
\partial_p^nD_N(p,k)=C_nO\left((\delta_N N\varepsilon_N)^n\right).
\end{align*}
Recall that $|D_N(p,k)| \ge \gamma.$ Let $F_N(p,k) = \frac{\gamma}{D_N(p,k)}$. Then $F_N(p,k) D_N(p,k) = \gamma$ and $|F_N(p,k)| \le 1$. 

Taking the $n$-th derivative with respect to $p$ on both sides yields
$$
\sum_{j=0}^{n} \binom{n}{j} \partial_p^j F_N(p,k) \partial_p^{n-j} D_N(p,k) = 0.
$$
We proceed by induction. For $n=1$, we have
$$
\partial_p F_N(p,k) D_N(p,k) + F_N(p,k) \partial_p D_N(p,k) = 0.
$$
Thus, the basic case holds since
\begin{align*}
|\partial_p F_N(p,k)| &= \left| \frac{F_N(p,k) \partial_p D_N(p,k)}{D_N(p,k)} \right| 
\le \frac{1}{\gamma} |\partial_p D_N(p,k)| 
= O(\delta_N N \varepsilon_N).
\end{align*}

Assume the estimate holds for all orders up to $n-1$. By the inductive hypothesis, isolating the $n$-th derivative term gives
\begin{align*}
|\partial_p^n F_N(p,k)| &= \left| \frac{\sum_{j=0}^{n-1} \binom{n}{j} \partial_p^j F_N(p,k) \partial_p^{n-j} D_N(p,k)}{D_N(p,k)} \right| \\
&\lesssim C_n \frac{1}{\gamma} \sum_{j=0}^{n-1} \binom{n}{j} (\delta_N N \varepsilon_N)^j \cdot (\delta_N N \varepsilon_N)^{n-j} 
= C_nO\left( (\delta_N N \varepsilon_N)^n \right).
\end{align*}
\par
\noindent(iii). 
Furthermore, noting that
\begin{align*}
\partial_k^n\Delta_N\omega(p,k)
=\left[\omega_N^{(n)}\left(k+\frac{\varepsilon_N p}{2}\right)
-\omega_N^{(n)}\left(k-\frac{\varepsilon_N p}{2}\right)\right]
=C_n O\left((\delta_NN)^n\right),
\end{align*}
a similar argument yields (iii).
\end{proof}
We now turn to the remainder terms arising from the free boundary condition. Although only $\bar r_{N}^{(1)}$ appears explicitly in \eqref{eq:wn}, the remaining remainder terms arise in the subsequent estimate of the term involving $\bar u_N$. It is therefore convenient to treat all of them simultaneously here.
\begin{lemma} \label{lem:yuxiang}
For any $\varphi \in \mathcal{C},\ i\in\{1,2,3,4\}$
\begin{enumerate}[label=\textup{(\roman*)}, align=left]
    \item \begin{equation} 
    \lim_{N\to \infty}\sup_{t\geq0}\left|    \int_{\mathbb{R}}\int_{\mathbb T}\hat{\varphi}^*(p) \widehat{\mathcal R}_N^{(i)}(t,p,k) \,dk\,dp  \right|= 0, 
    \end{equation}
    \item \begin{equation} \label{guji4}
      \lim_{N\to \infty}\sup_{t\geq0}\left| \int_{\mathbb{R}}\int_{\mathbb T}\hat{\varphi}^*(p) \widehat{\mathcal R}_N^{(i)} (t,p,k)
       \frac{\gamma}{D_N(p,k)}
       \,dk\,dp \right|= 0.
    \end{equation}
    \item \begin{equation}
     \lim_{N\to \infty}\sup_{t\geq0}\left|  \int_{\mathbb{R}}\int_{\mathbb T}\hat{\varphi}^*(p) \widehat{\mathcal R}_N^{(i)} (t,p,k)
       \int_{\mathbb T}\frac{\gamma}{D_N(p,k')}\,dk'
       \,dk\,dp \right|= 0.
    \end{equation}
\end{enumerate}

\end{lemma}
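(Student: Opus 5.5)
The plan is to return the remainders $\widehat{\mathcal R}_N^{(i)}$ to physical space, where the boundary structure is visible. Each remainder is a bilinear expression $\mathbb E[\hat\psi(\cdot)\,\widehat{\eta^N q}(\cdot)]$ evaluated at $k\pm\varepsilon_N p/2$ or $-k\pm\varepsilon_N p/2$. Take $i=1$ and part (i) first. Pairing with $\hat\varphi^*(p)$ and integrating in $k$ reverses the Wigner construction, exactly as in Lemma \ref{lem:Fourier Transform} and \eqref{eq:wigner}. This gives
\[
\int_{\mathbb R}\int_{\mathbb T}\hat\varphi^*(p)\widehat{\mathcal R}_N^{(1)}\,dk\,dp
= i a_N\varepsilon_N\sum_{x}\varphi(\varepsilon_N x)\,\mathbb E\bigl[\mathrm{Im}\text{-type combination of }\psi_x\cdot\eta_x^N q_x\bigr],
\]
up to constants. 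The terms with $-k$ (in $\widehat{\mathcal R}^{(2)},\widehat{\mathcal R}^{(3)},\widehat{\mathcal R}^{(4)}$) are handled by the change of variables $k\mapsto -k$. Where both factors carry $-k$, this produces analogous sums $\sum_x\varphi(\varepsilon_N x)\psi_x\,\eta_x^N q_x$, possibly with $x\mapsto -x$.

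By Lemma \ref{lem:eta}(ii), $\eta_x^N$ vanishes unless $N-M\delta_N(2N+1)<|x|\le N$, and by Lemma \ref{lem:eta}(i) it is bounded. On that boundary layer, \eqref{phi_exp——decay} gives $|\varphi(\varepsilon_N x)|\lesssim C_\varphi e^{-b_0N\varepsilon_N}e^{2b_0M\delta_N N\varepsilon_N}\lesssim C_\varphi e^{-b_0N\varepsilon_N}$. Cauchy--Schwarz together with \eqref{leq:wave function energy} and energy conservation (Lemma \ref{lem:conservation of energy}) then bounds the sum by $C_\varphi a_N\varepsilon_N N e^{-b_0N\varepsilon_N}$. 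The prefactor is
\[
a_N\varepsilon_N N=\frac{N}{\delta_N(2N+1)\varepsilon_N}\asymp\frac{1}{\delta_N\varepsilon_N},
\]
so the bound is $\frac{e^{-b_0N\varepsilon_N}}{\delta_N\varepsilon_N}$, which tends to $0$ by \ref{H2}. This is exactly why that condition appears. The bound is uniform in $t$, which gives (i).

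For (ii) and (iii) the multiplier $\gamma/D_N(p,k)$ depends on $k$ and $p$, so the return to physical space is no longer exact. In (iii) the factor $\int_{\mathbb T}\gamma/D_N(p,k')\,dk'$ depends only on $p$, so I can absorb it into the test function. I would set $\hat\varphi_N(p):=\hat\varphi(p)\overline{\int\gamma/D_N(p,k')dk'}$ and repeat (i). This requires exponential decay of the physical-space function $\varphi_N$ at scale $|y|\sim N\varepsilon_N$. That decay should come from integration by parts in $p$: by Lemma \ref{lem:DN higher-order derivative}(ii), $p$-derivatives of the multiplier are $O((\delta_N N\varepsilon_N)^n)$, hence small, so $y^n\varphi_N(y)$ is controlled by $\int|\partial_p^n\hat\varphi_N|$.

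For (ii) I would write $\gamma/D_N=1+(\gamma/D_N-1)$. The $1$ part is (i). For the correction, I would split $|p|\le 2P_N$ and $|p|>2P_N$ with a smooth cutoff. The tail is bounded by Remark \ref{remark:粗略估计} and Schwartz decay: $a_N N\varepsilon_N P_N^{-m_1}\lesssim \delta_N^2N^3\varepsilon_N^2\to0$ by the choice of $m_1$. On the bulk, I would expand the product $\hat\psi\,\widehat{\eta^Nq}$ back into its spatial sums over $x,x'$. The kernel $\int\int\hat\varphi^*(p)(\gamma/D_N-1)e^{2\pi i(\dots)}\,dk\,dp$ is then estimated by integrating by parts in $k$ and $p$, using Lemma \ref{lem:DN higher-order derivative}(ii)--(iii). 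Each $k$-derivative costs $\delta_N N$ but gains a factor $|x-x'|^{-1}$. Because $\eta^Nq$ is supported in the boundary layer, the relevant sums again carry the factor $\varphi$ evaluated at distance about $N\varepsilon_N$.

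\textbf{Main obstacle.} The hardest step is the $k$-dependent multiplier in (ii), which destroys the clean diagonal structure. I would try to show that the off-diagonal kernel decays like $(\delta_N N/|x-x'|)^n$. That decay keeps $x$ effectively within $O(\delta_N N)$ of the boundary layer, so $\varphi$ contributes the factor $e^{-b_0N\varepsilon_N}$ up to $e^{O(\delta_N N\varepsilon_N)}=O(1)$. The remaining polynomial losses must then be absorbed by the super-polynomial smallness of $e^{-b_0N\varepsilon_N}/(\delta_N\varepsilon_N)$ implied by \ref{H2}.
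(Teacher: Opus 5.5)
Your outline follows the paper's proof. Part (i) goes by undoing the Wigner pairing, using that $\eta^N$ is supported on $J_N$, and obtaining $e^{-b_0N\varepsilon_N}/(\delta_N\varepsilon_N)$. Part (ii) writes $\gamma/D_N=1+(\gamma/D_N-1)$, cuts off at scale $P_N$, bounds the tail by $P_N^{-m_1}/(\delta_N\varepsilon_N)$, and expands the bulk as a double sum over $x\in J_N$, $x'\in\mathbb Z$ against a kernel $K_N(x,x')$ handled by integration by parts. Part (iii) moves the $p$-only multiplier onto the test function. Part (i) is fine.

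The gap is the source of smallness you invoke in (ii) and (iii). Once $(\gamma/D_N(p,k)-1)\chi(p/P_N)$ sits inside the $p$-integral, that integral is no longer $\varphi$ evaluated at $\varepsilon_N(x+x')/2$. Nothing you cite then produces a factor $e^{-b_0N\varepsilon_N}$:
\begin{itemize}
\item Integration by parts in $p$ with Lemma~\ref{lem:DN higher-order derivative}(i),(ii) only gives $|K_N(x,x')|\lesssim C_{\varphi,n}\,\delta_NN\varepsilon_NP_N\,(\varepsilon_N|x+x'|)^{-n}$.
\item In (iii) your own argument gives $|\varphi_N(y)|\lesssim C_{\varphi,n}|y|^{-n}$, not exponential decay. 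Indeed, after the cutoff the $p$-integrand in (ii) is compactly supported, so its inverse transform cannot decay exponentially unless it vanishes.
\item Your final step, absorbing all losses into the super-polynomial smallness of $e^{-b_0N\varepsilon_N}/(\delta_N\varepsilon_N)$, is therefore unsupported.
\item A near-diagonal window $|x-x'|=O(\delta_NN)$ gains nothing from the bound $(\delta_NN/|x-x'|)^m$. The threshold must be polynomially larger than $\delta_NN$.
\end{itemize}

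What you are missing is that polynomial decay of arbitrary order already suffices. By \ref{H2}, $N\varepsilon_N\ge N^{1/(m_0+1)}$ for large $N$, while every loss ($a_N\varepsilon_N$, $|J_N|^{1/2}$, the energy bound $\lesssim N$, $P_N$) is at most a fixed power of $N$. The paper organizes this as a dichotomy:
\begin{itemize}
\item If $|x+x'|>N-M\delta_N(2N+1)$, integrate by parts $n$ times in $p$. Then sum over $x'\in\mathbb Z$ by Cauchy--Schwarz, noting that $\psi_{x'}$ is not supported in $[-N,N]$. This gains $(N\varepsilon_N)^{-n}$.
\item Otherwise, since $|x|>N-M\delta_N(2N+1)$ for $x\in J_N$, one has $|x-x'|\ge 2|x|-|x+x'|\ge N-M\delta_N(2N+1)$. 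Then $m$ integrations by parts in $k$ via Lemma~\ref{lem:DN higher-order derivative}(iii) give $(\delta_NN)^m|x-x'|^{-m}$. This is a gain of $\delta_N^m$ against polynomial losses, which vanishes for large $m$ because $\delta_N^{m_0}N\to0$.
\end{itemize}
For (iii), the same $p$-integration by parts yields $O\bigl((N\varepsilon_N)^{-n}/(\delta_N\varepsilon_N)\bigr)$. Your cutoff-free version of (iii) works too, since the $p$-derivative bounds in Lemma~\ref{lem:DN higher-order derivative}(ii) hold for all $p$. With these corrections your plan becomes the paper's proof.
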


\begin{proof}
According to Remark \ref{remark:no time}, we omit the independent variable $t$ and write 
\begin{equation}
    \widehat{\psi}_{k_1} \widehat{\eta^N q}(k_2) =\sum_{x=-N}^{N}\sum_{x'\in \mathbb{Z}} \eta_{x}^N q_{x} \psi_{x'} e^{-2\pi i k_2 x} e^{-2\pi i k_1x'}.
\end{equation}
  Observe that the terms in equations \eqref{eq:R1}--\eqref{eq:R4} are all finite linear combinations of the elementary cross-moments of the form $\mathbb{E}[\widehat{\psi}_{k_1} \widehat{\eta^N q}(k_2)]$ or their complex conjugates. Moreover, since $\widehat{\eta^N q}^*(k)=\widehat{\eta^N q}(-k)$~(as $\eta_{x}^N q_{x}\in \mathbb{R}$), the complex conjugate terms  are all  subject to the condition $k_1 + k_2 = \varepsilon_N p$. 
  So it suffices to consider a representative term, 
  for instance, we can parameterize the frequencies as $k_1 = k + \frac{\varepsilon_N p}{2}$ and $k_2 = -k + \frac{\varepsilon_N p}{2}$ and  set
\begin{align} \label{eq:R0}
    \widehat{\mathcal{R}}^{(0)}_{N}
    &= a_{N} \varepsilon_N\mathbb{E} \left[ \hat{\psi}_{k+\frac{\varepsilon_N p}{2}} 
    \widehat{\eta^N q}(-k+\frac{\varepsilon_N p}{2})\right].
\end{align}
\noindent(i). 
By Lemma \ref{lem:Fourier Transform},
\begin{align}
    \int_{\mathbb{R}}\int_{\mathbb T}\hat{\varphi}^*(p) \widehat{\mathcal R}_N^{(0)} \,dk\,dp
    &=a_{N} \varepsilon_N\int_{\mathbb{R}}\int_{\mathbb T}\hat{\varphi}^*(p) 
\mathbb{E} \left[ \hat{\psi}_{k+\frac{\varepsilon_N p}{2}} \widehat{\eta^N q}(-k+\frac{\varepsilon_N p}{2})\right]
\,dk\,dp \notag \\
&=a_{N} \varepsilon_N \sum_{x=-N}^{N}{\varphi}\left(\varepsilon_N x\right) \mathbb{E} \left[\eta_{x}^N q_{x} \psi_{x}\right].
\end{align}
By Lemma \ref{lem:eta} and \eqref{def:JN}, we have $\eta_x^N=0$ whenever $|x|\leq N-M\delta_NN$. Hence, $\eta_x^N\neq 0$ only if $x\in J_N$, in which case $|x|\geq N-M\delta_NN$. It follows that
$$e^{-b_0\varepsilon_N|x|} \leq e^{-b_0\varepsilon_N(N-M\delta_N N)} \lesssim e^{-b_0 N\varepsilon_N}\,.$$
We  then obtain
\begin{align}
&a_{N} \varepsilon_N \sum_{x=-N}^{N}{\varphi}\left(\varepsilon_N x\right) \mathbb{E} \left[\eta_{x}^N q_{x} \psi_{x}\right]
\quad\lesssim\quad C_\varphi a_{N} \varepsilon_N \sum_{x=-N}^{N}e^{-b_0\varepsilon_N|x|}\mathbb{E} \left[\eta_{x}^N q_{x} \psi_{x}\right]
 \notag\\
\lesssim\quad& C_\varphi a_{N} N\varepsilon_N e^{-b_0N\varepsilon_N}\frac{\mathbb{E}[\mathcal{E}_N(t)]}{N} .
\end{align}
By \ref{H1} and Lemma \ref{lem:conservation of energy},
\begin{align}
  \left| \int_{\mathbb{R}}\int_{\mathbb T}\hat{\varphi}^*(p) \widehat{\mathcal R}_N^{(0)} \,dk\,dp \right|
   \lesssim C_\varphi a_{N} N\varepsilon_N e^{-b_0N\varepsilon_N}
   \asymp C_\varphi \frac{e^{-b_0N\varepsilon_N}}{\delta_N\varepsilon_N}
\end{align}
which  tends to zero by \ref{H2}.

\noindent(ii). 
By part (i), 
\begin{eqnarray*}
    &&\int_{\mathbb R}\int_{\mathbb T}
\hat{\varphi}^*(p)\widehat{\mathcal R}_N^{(0)}
\frac{\gamma}{D_N(p,k)}\,dk\,dp
 \\&=&
\int_{\mathbb R}\int_{\mathbb T}
\hat{\varphi}^*(p)\widehat{\mathcal R}_N^{(0)}
\left(\frac{\gamma}{D_N(p,k)}-1\right)\,dk\,dp
+C_\varphi O\left(\frac{e^{-b_0N\varepsilon_N}}{\delta_N\varepsilon_N}\right)\\
&=&
I_{5,1}^N+I_{5,2}^N++C_\varphi O\left(\frac{e^{-b_0N\varepsilon_N}}{\delta_N\varepsilon_N}\right)
\end{eqnarray*}
with 
\begin{eqnarray*}
I_{5,1}^N&:=&
\int_{\mathbb R}\int_{\mathbb T}
\hat{\varphi}^*(p)\widehat{\mathcal R}_N^{(0)}
\left(\frac{\gamma}{D_N(p,k)}-1\right)
\left(1-\chi\left(\frac{p}{P_N}\right)\right)\,dk\,dp, \\
I_{5,2}^N
&:=&
\int_{\mathbb R}\int_{\mathbb T}
\hat{\varphi}^*(p)\widehat{\mathcal R}_N^{(0)}
\left(\frac{\gamma}{D_N(p,k)}-1\right)
\chi\left(\frac{p}{P_N}\right)\,dk\,dp .
\end{eqnarray*}
Here $\chi\in C_c^\infty(\mathbb R)$ is a smooth cutoff function such that
$0\le\chi\le1$, $\chi(p)=1$ for $|p|\le1$, $\chi(p)=0$ for $|p|\ge2$, and for any $\ell\ge0,\ \sup_p\left|\frac{d^\ell}{dp^\ell}\chi(p)\right|\le1.$

By Remark \ref{remark:粗略估计}, 
\begin{align*}
    \sup_{t,p}\int_\mathbb{T} \left| \widehat{\mathcal R}_N^{(0)}(t,p,k)\right|dk=O(a_NN\varepsilon_N)=O\left(\frac{1}{\delta_N \varepsilon_N}\right).
\end{align*}
Using \eqref{le:DNbounded}, and the rapid decay of $\hat{\varphi}^*$, we get
\[
|I_{5,1}^N|
\lesssim
\frac{1}{\delta_N \varepsilon_N}
\int_{|p|>P_N}|\hat{\varphi}^*(p)|\,dp
\lesssim
\frac{C_\varphi P_N^{-m_1} }{\delta_N\varepsilon_N}.
\]
By the choice of $P_N$, this term tends to $0$.\par
It remains to estimate $I_{5,2}^N$. Expanding the Fourier transforms gives
\begin{align} \label{1}
\widehat\psi_{k+\frac{\varepsilon_N p}{2}}
\widehat{\eta^Nq}\left(-k+\frac{\varepsilon_N p}{2}\right)
=
\sum_{x' \in\mathbb Z}\sum_{x\in J_N}
\eta_x^Nq_x\psi_{x'}
e^{2\pi ik(x-x')}
e^{-\pi i\varepsilon_N p(x+x')}.
\end{align}
Hence
\[
I_{5,2}^N
=
a_N\varepsilon_N
\sum_{x'\in\mathbb Z}\sum_{x\in J_N}
\mathbb E[\eta_x^Nq_x\psi_{x'}]\,
K_N(x,x'),
\]
where
\[
K_N(x,x')
:=
\int_{\mathbb R}\int_{\mathbb T}
e^{2\pi ik(x-x')}
e^{-\pi i\varepsilon_N p(x+x')}
\hat{\varphi}^*(p)
\left(\frac{\gamma}{D_N(p,k)}-1\right)
\chi\left(\frac{p}{P_N}\right)\,dk\,dp .
\]
Note that for any $\ell \ge 0$, $\frac{d^\ell}{dp^\ell}\chi$ is supported in $|p| \leq 2P_N$, and $P_N\delta_NN\varepsilon_N \to 0$, so we only need to consider the estimates in this region.
\par
We now split the estimate of $K_N(x,x')$ into two cases.
First, assume that
\[
|x+x'|>N-M\delta_N(2N+1).
\]
By Lemma \ref{lem:DN higher-order derivative}~(i) and (ii), applying the generalized Leibniz rule, we bound the $n$-th derivative as 
\begin{eqnarray*}
    &&    \int_{\mathbb{R}}  \Bigg| \partial_p^n \bigg[ \hat{\varphi}^*(p) \bigg( \frac{\gamma}{D_N(p,k)} - 1 \bigg) \chi\Big(\frac{p}{P_N}\Big) \bigg] \Bigg|\, dp  \\
     &=& \int_{\mathbb{R}}\bigg| \sum_{j+k+l=n} \frac{n!}{j!k!l!} \partial_p^j \hat{\varphi}^*(p) \cdot \partial_p^k \bigg( \frac{\gamma}{D_N(p,k)} - 1 \bigg) \cdot \partial_p^l \chi\Big(\frac{p}{P_N}\Big) \bigg| \, dp\\
&\lesssim&  \sum_{ \substack{ j+k+l=n \\ k\neq 0} } \frac{n!}{j!k!l!} \int_{\mathbb{R}}|\partial_p^j \hat{\varphi}^*(p)| \, dp\cdot (C_n \delta_N N \varepsilon_N)^k  \\
&&{}+ \sum_{ \substack{ j+l=n \\ } } \frac{n!}{j!l!} \int_{\mathbb{R}}|\partial_p^j \hat{\varphi}^*(p)| \, dp\cdot (C_n \delta_N N \varepsilon_N P_N)  \\
&\lesssim& C_{\varphi,n} \delta_N N \varepsilon_N P_N.
\end{eqnarray*}
Since the boundary terms vanish at infinity, that is
$$
\lim_{|p| \to \infty} \partial_p^n \bigg[ \hat{\varphi}^*(p) \bigg( \frac{\gamma}{D_N(p,k)} - 1 \bigg) \chi\Big(\frac{p}{P_N}\Big) \bigg] = 0,
$$
integrating by parts $n$ times with respect to $p$ gives
\begin{align*}
    |K_N(x,x')| &\lesssim \frac{1}{\varepsilon_N^n |x+x'|^n} \\
    &\quad \times \left| \int_{\mathbb{R}} e^{-2\pi i \varepsilon_N p(x+x')} \partial_p^n \left[ \hat{\varphi}^*(p) \left( \frac{\gamma}{D_N(p,k)} - 1 \right) \chi\left(\frac{p}{P_N}\right) \right] \, dp \right| \\
    &\lesssim \frac{C_{\varphi,n}}{\varepsilon_N^n |x+x'|^n} \delta_N N \varepsilon_N P_N.
\end{align*}
Consequently, by the Cauchy--Schwarz inequality, for every $n\ge2$,
\begin{eqnarray*}
&&
\sum_{\substack{|x+x'|>N-M\delta_N(2N+1)\\ x\in J_N}}
|\eta_x^Nq_x\psi_{x'}K_N(x,x')| \notag \\
&\lesssim&
\left(\sum_{\substack{|x+x'|>N-M\delta_N(2N+1)\\ x\in J_N}}
|\psi_{x'}|^2\right)^{\frac{1}{2}}
\left(\sum_{\substack{|x+x'|>N-M\delta_N(2N+1)\\ x\in J_N}}
|\eta_x^Nq_xK_N(x,x')|^2\right)^{\frac{1}{2}}
\notag \\
&\lesssim&
C_{\varphi,n}|J_N|^{\frac{1}{2}}
\left(\sum_{x'\in\mathbb Z}|\psi_{x'}|^2\right)^{\frac{1}{2}}
\frac{\delta_NN\varepsilon_N P_N}{\varepsilon_N^n}
\left(\sum_{\substack{|x+x'|>N-M\delta_N(2N+1)\\ x\in J_N}}\frac{|\eta_x^Nq_x|^2}{ |x+x'|^{2n}}\right)^{\frac{1}{2}}
\notag \\
&\lesssim&
C_{\varphi,n}(\delta_NN)^{\frac12}
\left(\sum_{x'\in\mathbb Z}|\psi_{x'}|^2\right)^{\frac12}
\frac{\delta_NN\varepsilon_N P_N}{\varepsilon_N^n}
\left( \sum_{|x+x'|>N-M\delta_N(2N+1)}\frac{1}{ |x+x'|^{2n}} \right)^{\frac12}
\\ &&\times
\left( \sum_{x\in J_N}|\eta_x^Nq_x|^2 \right)^{\frac12}. 
\end{eqnarray*}

Notice the fact that  
\begin{eqnarray*}
    &&\left( \sum_{|x+x'|>N-M\delta_N(2N+1)}\frac{1}{ |x+x'|^{2n}} \right)^{\frac12}
    \lesssim C_n \left( \sum_{|y|>N}\frac{1}{ |y|^{2n}} \right)^{\frac12} \\
    &\leq& C_n \left( \sum_{k=1}^{\infty}  \int_{N+k-1}^{N+k}\frac{1}{ |y|^{2n}}
    \, dy \right)^{\frac12}
    =C_n \int_N^\infty \frac{1}{ |y|^{2n}} \, dy
    =C_n N^{\frac12-n}\,, 
\end{eqnarray*}
combining this estimate with \ref{H1}, Lemma \ref{lem:conservation of energy} and Lemma \ref{lem:eta}, we then obtain
\[
\mathbb E\left[
\sum_{\substack{|x+x'|>N-M\delta_N(2N+1)\\ x\in J_N}}
|\eta_x^Nq_x\psi_{x'}K_N(x,x')|
\right]
=
C_{\varphi,n}O\left(\delta_N^{\frac{3}{2}}N^{2}P_N(N\varepsilon_N)^{-n}\right).
\]
Now for 
\[
|x+x'|\le N-M\delta_N(2N+1)\,,
\]
since $x\in J_N$, we have
\[
|x|\ge N-M\delta_N(2N+1).
\]
Hence
\[
|x-x'|
\ge
2|x|-|x+x'|
\ge
N-M\delta_N(2N+1).
\]
For any $m \ge 2$, combining the estimate in Lemma \ref{lem:DN higher-order derivative}~(iii) with integration by parts with respect to $k$ yields
\begin{align*}
|K_N(x,x')| &\lesssim \frac{1}{|x-x'|^m} \int_{\{|p| \le 2P_N\}} |\hat{\varphi}^*(p)| \int_{\mathbb{T}} \bigg| e^{2\pi i k(x-x')} \partial_k^m \bigg( \frac{\gamma}{D_N(p,k)} - 1 \bigg) \bigg| \, dk \, dp \\
&\lesssim \frac{1}{|x-x'|^m} \int_{\{|p| \le 2P_N\}} |\hat{\varphi}^*(p)| \, dp \cdot \delta_N^{m} N^{m}  \\
&\lesssim \frac{C_{\varphi,m} }{|x-x'|^m} (\delta_N N)^m.
\end{align*}
By a similar argument, we have
\[
\begin{aligned}
&\mathbb E\left[
\sum_{\substack{|x+x'|\le N-M\delta_N(2N+1)\\ x\in J_N}}
|\eta_x^Nq_x\psi_{x'}K_N(x,x')|
\right]  \\
\lesssim\quad& 
C_{\varphi,m} (\delta_NN)^{\frac{1}{2}}
\mathbb E\left[\sum_{x'\in\mathbb Z}|\psi_{x'}|^2\right]^{\frac{1}{2}}
(\delta_N N)^m N^{\frac12-m}
\mathbb E\left[\sum_{x\in J_N}|\eta_x^Nq_x|^2\right]^{\frac{1}{2}} \\
\lesssim\quad&
C_{\varphi,m}  \delta_N^{m+\frac12}N^{2}.
\end{aligned}
\]
Combining the above, for every $m\geq 2,\ n\geq2,$ we obtain
\begin{align*}
I_{5,2}^N&=
a_N\varepsilon_N
\sum_{x'\in\mathbb Z}\sum_{x\in J_N}
\mathbb E[\eta_x^Nq_x\psi_{x'}]  K_N(x,x')
\\&=C_{\varphi,n} O\left(\delta_N^{\frac{1}{2}}NP_N(N\varepsilon_N)^{-n}\varepsilon_N^{-1}\right)
+C_{\varphi,m} O\left(\delta_N^{m} N\varepsilon_N^{-1}\right).
\end{align*}
Therefore
\begin{eqnarray*}
&&\int_{\mathbb R}\int_{\mathbb T}
\hat{\varphi}^*(p)\widehat{\mathcal R}_N^{(0)}
\frac{\gamma}{D_N(p,k)}\,dk\,dp
=I_5^N+C_{\varphi} O\left(\frac{e^{-b_0N\varepsilon_N}}{\delta_N\varepsilon_N}\right)
\\&=&C_{\varphi} O\left(\frac{P_N^{-m_1}}{\delta_N\varepsilon_N}\right)
+C_{\varphi,n}O\left(\delta_N^{\frac{1}{2}}NP_N(N\varepsilon_N)^{-n}\varepsilon_N^{-1}\right)
+C_{\varphi,m}O\left(\delta_N^{m} N\varepsilon_N^{-1}\right)
\\&&{}+C_{\varphi} O\left(\frac{e^{-b_0N\varepsilon_N}}{\delta_N\varepsilon_N}\right).
\end{eqnarray*}
By \ref{H2}, we choose suitable $n$ and $m$ such that it tends to zero, which completes the proof of (ii).
\par
\noindent(iii). 
Take the same cutoff function $\chi$ as in (ii). Arguing similarly as  that in (ii), we have 
\begin{eqnarray*}
&&\int_{\mathbb{R}}\int_{\mathbb{T}} \hat{\varphi}^*(p)\widehat{R}_N^{(0)}(p,k)
\int_{\mathbb{T}} \frac{\gamma}{D_N(p,k')} \, dk' \, dk \, dp \\
&=& I_6^N
+ O\left(\frac{e^{-b_0N\varepsilon_N}}{\delta_N\varepsilon_N}\right)
+ O\left(\frac{C_\varphi P_N^{-m_1}}{\delta_N\varepsilon_N}\right),
\end{eqnarray*}
where
\begin{align*}
I_6^N
&= \int_{\mathbb{R}}\int_{\mathbb{T}} \hat{\varphi}^*(p)\widehat{R}_N^{(0)}(p,k)
\int_{\mathbb{T}} \left( \frac{\gamma}{D_N(p,k')} - 1 \right)
\chi\left(\frac{p}{P_N}\right) \, dk' \, dk \, dp .
\end{align*}
Using \eqref{1},
\begin{align*}
\int_{\mathbb{T}} \widehat{R}_N^{(0)}(p,k) \, dk
&= a_N\varepsilon_N \sum_{x'\in\mathbb{Z}}\sum_{x\in J_N}
\mathbb{E}\left[\eta_x^N q_x \psi_{x'}\right]
e^{-\pi i\varepsilon_N p(x+x')}
\int_{\mathbb{T}} e^{2\pi i k(x-x')} \, dk \\
&= a_N\varepsilon_N \sum_{x\in J_N}
\mathbb{E}\left[\eta_x^N q_x \psi_x\right] e^{-2\pi i\varepsilon_N p x}.
\end{align*}
For any $n\ge 2$, integrating by parts $n$ times in $p$ gives

\begin{align*}
|I_6^N| &\le a_N\varepsilon_N \sum_{x\in J_N} \mathbb{E}\left[\left|\eta_x^N q_x \psi_x\right|\right] \\
&\quad \times \int_{\mathbb{T}}\int_{\mathbb{R}} \frac{1}{|\varepsilon_N x|^n} \left| \partial_p^n\left[ \hat{\varphi}^*(p)\left( \frac{\gamma}{D_N(p,k')} - 1 \right) \chi\left(\frac{p}{P_N}\right) \right] \right| \, dp \, dk' \\
&\lesssim a_N\varepsilon_N \sum_{x\in J_N} \left( \mathbb{E}\left[\left|\eta_x^N q_x^2\right|\right] + \mathbb{E}\left[\left|\psi_x\right|^2\right] \right) \frac{1}{|\varepsilon_N x|^n} \delta_N N\varepsilon_N P_N \\
&\lesssim \frac{C_{\varphi,n} NP_N}{(N\varepsilon_N)^{n}} .
\end{align*}
By the choice of $P_N$ and \ref{H2}, the right-hand side tends to $0$.
\end{proof}

It remains to control the term involving the auxiliary field $\bar u_N$ in \eqref{eq:wn}. The following proposition shows that this contribution vanishes in the limit.

\begin{proposition}
For each $T>0$ and each $\varphi \in \mathcal C$, 
\begin{equation}\label{guji5}
\lim_{N\to \infty}\sup_{0\le t\le T}
\left|
a_N\gamma
\int_{\mathbb R}\int_{\mathbb T}
\hat{\varphi}^*(p)\bar u_N(t,p,k)\mathcal L\left(\frac{\gamma}{D_N}\right)(p,k) \,dk\,dp
\right|
= 0 .
\end{equation}
\end{proposition}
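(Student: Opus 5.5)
The plan is to exploit the fact that $\widehat U_N$ oscillates at frequency $a_N\bar\omega_N\asymp a_N\omega_0$, which is huge, so its time integral $\bar u_N$ is small. At the same time, the weight $\mathcal L(\gamma/D_N)$ is small for moderate $p$.

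\textbf{Step 1: size of the weight.} By \eqref{eq:DNtalor} and $\int_{\mathbb T}\omega_N'\,dk=0$, for $|p|\le P_N$ we have
\[
\mathcal L\left(\frac{\gamma}{D_N}\right)(p,k)=\frac{i\varepsilon_N p\,\omega_N'(k)}{\gamma}+O\bigl((\delta_N N\varepsilon_N|p|)^2\bigr)=O(\delta_N N\varepsilon_N|p|),
\]
using Lemma \ref{lem:omega_N_properties}(iii). For $|p|>P_N$ we use only $|\mathcal L(\gamma/D_N)|\le 2$ together with the decay of $\hat\varphi$. The tail $|p|>P_N$ is then handled exactly as in Proposition \ref{prop:main dNp}: the bound $\|\widehat U_N\|_{L^1(\mathbb T)}=O(N\varepsilon_N)$ (Remark \ref{remark:粗略估计}) gives $|\bar u_N|\lesssim TN\varepsilon_N$, so this part is $O(T a_N N\varepsilon_N P_N^{-m_1})=O(T P_N^{-m_1}/(\delta_N\varepsilon_N))\to0$. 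A naive bound on the main part $|p|\le P_N$ gives only $a_N\cdot TN\varepsilon_N\cdot\delta_N N\varepsilon_N=O(TN)$, which diverges. So an extra factor $a_N^{-1}$ must be gained from the time integration.

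\textbf{Step 2: gain $a_N^{-1}$ from the oscillation.} Solve \eqref{piandao uN-} for $\bar u_N$:
\[
a_N\bar u_N=\frac{1}{\bar\omega_N}\Bigl(\widehat U_{N,-}^{(0)}-\widehat U_{N,-}-a_N\gamma\bar u_{N,-}+\bar r_N^{(4)}\Bigr).
\]
Since $\bar\omega_N\ge 2\min\omega_N\gtrsim\omega_0$ by Lemma \ref{lem:omega_N_properties}, the factor $1/\bar\omega_N$ is bounded and smooth, with $k$-derivatives of order $(\delta_N N)^n$. The term $a_N\gamma\bar u_{N,-}$ is still large, so we next use \eqref{piandao uN} to write
\[
a_N\bar u_{N,-}=\frac{1}{\bar\omega_N}\Bigl(\widehat U_N-\widehat U_N^{(0)}-a_N\gamma\mathcal L\bigl(\bar u_N-\tfrac12(\bar w_N+\bar w_{N,-})\bigr)-\bar r_N^{(3)}\Bigr).
\]
Substituting this back gives a linear relation of the form
\[
a_N\bar u_N\Bigl(1+O(\gamma^2/\bar\omega_N^2)\Bigr)=(\text{bounded terms})+\frac{\gamma^2}{\bar\omega_N^2}\,a_N\,\mathcal L\bigl(\cdot\bigr)\text{-terms}+\text{remainders}.
\]
This is the delicate point, because $a_N\mathcal L(\bar w_N)$ is not small a priori. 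I would instead treat the $2\times2$ system \eqref{piandao uN}--\eqref{piandao uN-} directly. Multiply \eqref{piandao uN} by $\gamma$, \eqref{piandao uN-} by $\bar\omega_N$, and combine to obtain
\[
a_N(\bar\omega_N^2+\gamma^2)\bar u_N=(\text{boundary-in-time terms of size }N\varepsilon_N)+a_N\gamma^2\bigl(\textstyle\int_{\mathbb T}(\ldots)\,dk'\bigr)+\ldots .
\]
The nonlocal piece is $a_N\gamma^2\int_{\mathbb T}\bigl(\bar u_N-\tfrac12(\bar w_N+\bar w_{N,-})\bigr)\,dk'$, which is independent of $k$. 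When it is paired against $\mathcal L(\gamma/D_N)/(\bar\omega_N^2+\gamma^2)$, the leading odd term $i\varepsilon_N p\omega_N'(k)$ integrates to zero against any even function of $k$, and $\bar\omega_N$ is even in $k$ up to $O(\varepsilon_N p\,\delta_N N)$. This leaves an $O((\delta_N N\varepsilon_N|p|)^2)$ weight. For the $\bar w_N$ piece I would control $a_N\int\bar w_N\,dk$ through \eqref{eq:wn}: at fixed $p$ it equals $d_N(p)^{-1}$ times bounded quantities. The cleaner route is to use \eqref{eq:wn} and Proposition \ref{prop:main dNp} to bound $a_N(\delta_N N\varepsilon_N)^2\int|\hat\varphi||\int\bar w_N\,dk|\,dp$ by $O(T\,\delta_N^2N^3\varepsilon_N^2)$-type quantities, which tend to $0$ by \ref{H2}. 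The same argument applies to the $\bar u_N$ piece.

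\textbf{Step 3: remainders and time boundary terms.} The boundary-in-time terms $\widehat U^{(0)}_{N,\pm},\widehat U_{N,\pm}$ come with the weight $\mathcal L(\gamma/D_N)/(\bar\omega_N^2+\gamma^2)=O(\delta_N N\varepsilon_N|p|)$ and have $L^1$ norm $O(N\varepsilon_N)$. They therefore contribute $O(\delta_N N^2\varepsilon_N^2)\to0$. The remainders $\bar r_N^{(3)},\bar r_N^{(4)}$, paired with the smooth weight $\mathcal L(\gamma/D_N)/(\bar\omega_N^2+\gamma^2)$ (whose $p$- and $k$-derivatives obey Lemma \ref{lem:DN higher-order derivative}-type bounds), are handled by repeating the integration-by-parts argument of Lemma \ref{lem:yuxiang}(ii)--(iii) with $\gamma/D_N-1$ replaced by this weight; the time integral only adds a factor $T$.

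I expect the main obstacle to be Step 2: closing the estimate for the nonlocal term $a_N\gamma^2\int_{\mathbb T}\bar u_N\,dk'$ without circularity, and in particular exploiting the parity cancellation so that only the second-order weight survives against the $O(N\varepsilon_N)$ mass bound.
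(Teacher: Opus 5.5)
There is a genuine gap in Step 2, at exactly the point you flag. You multiply \eqref{piandao uN} by $\gamma$ and \eqref{piandao uN-} by $\bar\omega_N$, then split $\mathcal L$ to move $-a_N\gamma^2\bar u_N$ to the left. The right-hand side then contains the nonlocal piece $a_N\gamma^2\int_{\mathbb T}(\bar u_N-\frac12(\bar w_N+\bar w_{N,-}))\,dk'$. It also contains the local piece $\frac{a_N\gamma^2}{2}(\bar w_N+\bar w_{N,-})$, which your sketch never treats. Both are only known to be $O(a_NTN\varepsilon_N)$ in $L^1(\mathbb T)$. So even after parity leaves the weight $O((\delta_NN\varepsilon_N|p|)^2)$, they contribute
\[
a_NTN\varepsilon_N(\delta_NN\varepsilon_N)^2\asymp T\delta_NN^2\varepsilon_N ,
\]
which \ref{H2} does not force to zero. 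For $\delta_N=N^{-b}$, $\varepsilon_N=N^{-c}$ with $b,c<1$ and $b+c>3/2$, it equals $N^{2-b-c}\to\infty$.

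Your remedy through \eqref{eq:wn} does not close this. First, \eqref{eq:wn} contains the very term $a_N\gamma\int_{\mathbb T}\bar u_N\mathcal L(\gamma/D_N)\,dk$ you are trying to estimate, so the argument is circular. Second, even ignoring that, solving for $a_N\int_{\mathbb T}\bar w_N\,dk$ means dividing by $d_N(p)\approx\hat c p^2/\gamma$. This turns the weight $a_N(\delta_NN\varepsilon_N|p|)^2$ into something of order $\delta_NN\to\infty$, multiplying $O(N\varepsilon_N)$ terms. So the claimed $O(T\delta_N^2N^3\varepsilon_N^2)$ bound is not supported.

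The missing idea is to keep the combination $\mathcal L\bigl(\bar u_N-\frac12(\bar w_N+\bar w_{N,-})\bigr)$ intact and eliminate it with the \emph{sum} of the two $W$-equations \eqref{piandao wN} and \eqref{piandao wN-}:
\[
2a_N\gamma\mathcal L\Bigl(\bar u_N-\tfrac12(\bar w_N+\bar w_{N,-})\Bigr)=-ia_N\Delta_N\omega\,(\bar w_N-\bar w_{N,-})+\widehat W_N^{(0)}+\widehat W_{N,-}^{(0)}-\widehat W_N-\widehat W_{N,-}+\bar r_N^{(1)}+\bar r_N^{(2)}.
\]
Substitute this into your own elimination, but without splitting $\mathcal L$:
\[
a_N\bar\omega_N^2\bar u_N=\gamma(\widehat U_N^{(0)}-\widehat U_N)+\bar\omega_N(\widehat U_{N,-}^{(0)}-\widehat U_{N,-})+a_N\gamma^2\mathcal L(\cdots)+\gamma\bar r_N^{(3)}+\bar\omega_N\bar r_N^{(4)} .
\]
The large factor $a_N$ is now paid for by $\Delta_N\omega=O(\delta_NN\varepsilon_N|p|)$. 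Since $a_N\delta_NN\varepsilon_N^2\asymp1$, you get $\|a_N\bar u_N-(\text{remainder terms})\|_{L^1(\mathbb T)}\lesssim N\varepsilon_N+TN|p|$.

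This combination is even in $k$, so the odd first-order part $i\varepsilon_Np\omega_N'/\gamma$ of $\mathcal L(\gamma/D_N)$ drops out. The remaining $O((\delta_NN\varepsilon_N|p|)^2)$ weight then gives $O(T\delta_N^2N^3\varepsilon_N^2)\to0$. What makes the estimate work is this gain of a factor $\delta_NN\varepsilon_N|p|$, not a full factor $a_N^{-1}$ from time oscillation. Your tail estimate, the parity observation, and the treatment of time-boundary terms and remainders are otherwise in line with the paper.
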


\begin{proof}
Set
\begin{equation*}
I_7^N(t):=
a_N\gamma
\int_{\mathbb R}\int_{\mathbb T}
\hat{\varphi}^*(p)\bar u_N(t,p,k)\mathcal L\left(\frac{\gamma}{D_N}\right)(p,k) \,dk\,dp.
\end{equation*}
Combining \eqref{piandao uN} and \eqref{piandao uN-}, we obtain
\begin{align}\label{eq:ubar-identity-1}
a_N|\bar\omega_N|^2\bar u_N
&=
\gamma\left(\widehat U_N^{(0)}-\widehat U_N\right)
+\bar\omega_N\left(\widehat U_{N,-}^{(0)}-\widehat U_{N,-}\right) \notag\\
&\quad
+a_N\gamma^2
\mathcal L\left(
\bar u_N-\frac12\left(\bar w_N+\bar w_{N,-}\right)
\right)
+\gamma\bar r_N^{(3)}
+\bar\omega_N\bar r_N^{(4)} .
\end{align}
Moreover, adding the equations for \eqref{piandao wN} and \eqref{piandao wN-}, gives
\begin{eqnarray} 
&&2a_N\gamma
\mathcal L\left(
\bar u_N-\frac12\left(\bar w_N+\bar w_{N,-}\right)
\right)\nonumber\\
&=&
-i a_N\Delta_N\omega\left(\bar w_N-\bar w_{N,-}\right) \notag\\
&&\quad 
+\widehat W_N^{(0)}
+\widehat W_{N,-}^{(0)}
-\left(\widehat W_N+\widehat W_{N,-}\right)
+\bar r_N^{(1)}+\bar r_N^{(2)}. \label{eq:ubar-identity-3}
\end{eqnarray}
Substituting \eqref{eq:ubar-identity-3} into \eqref{eq:ubar-identity-1}, we get
\begin{eqnarray}\label{eq:ubar-expanded}
&&a_N\bar u_N -
\frac{1}{|\bar\omega_N|^2}
\left(
\gamma\bar r_N^{(3)}
+\bar\omega_N\bar r_N^{(4)}
+\frac{\gamma}{2}\bar r_N^{(1)}
+\frac{\gamma}{2}\bar r_N^{(2)}
\right) \notag\\
&=&
\frac{1}{|\bar\omega_N|^2}
\Big[
\gamma\left(\widehat U_N^{(0)}-\widehat U_N\right)
+\bar\omega_N\left(\widehat U_{N,-}^{(0)}-\widehat U_{N,-}\right)
+\frac{\gamma}{2}
\Big(
\widehat W_N^{(0)}
+\widehat W_{N,-}^{(0)}
\notag\\
&&\quad-\widehat W_N
-\widehat W_{N,-}
\Big)
\Big] 
-\frac{i\gamma}{2|\bar\omega_N|^2}
a_N\Delta_N\omega\left(\bar w_N-\bar w_{N,-}\right).
\end{eqnarray}
By Lemma \ref{lem:omega_N_properties}, we have
\begin{equation*}
2\omega_0\le \bar\omega_N(p,k)\lesssim 1 .
\end{equation*}
By Lemma \ref{lem:yuxiang}, the contribution of the terms containing $\bar r_N^{(i)}$ is negligible. More precisely,
\begin{align}
&\sup_{0\le t\le T}
\left|
\int_{\mathbb R}\int_{\mathbb T}
\hat{\varphi}^*(p)
\bar r_N^{(i)}(t,p,k)
\mathcal L\left(\frac{\gamma}{D_N}\right)(p,k)\,dk\,dp
\right| \lesssim TC_\varphi o(1),
\end{align}
for $i\in\{1,2,3,4\}$. \par
We first consider the region $\{|p|> P_N\}$.
By \eqref{def:花体L} and \eqref{le:DNbounded},  
\begin{equation*}
\left|
\mathcal L\left(\frac{\gamma}{D_N}\right)(p,k)
\right|
\lesssim 1
\end{equation*}
and by Remark \ref{remark:粗略估计}, 
\begin{equation}
\sup_{0\le t\le T}\sup_{p\in\mathbb R}
\int_{\mathbb T}\left|\bar u_N(t,p,k)\right|+\frac{1}{a_N}\left|\bar r_N^{(i)}(t,p,k) \right|\,dk
\lesssim T N\varepsilon_N.
\end{equation}
Then we have 
\begin{eqnarray*}
&&\sup_{0\le t\le T}
\left|
\gamma
\int_{\{|p|>P_N\}}\int_{\mathbb T}
\hat{\varphi}^*(p)\left(\left| a_N\bar u_N(t,p,k)\right|+\left|\bar r_N^{(i)}(t,p,k) \right|\right)
\mathcal L\left(\frac{\gamma}{D_N}\right)(p,k)\,dk\,dp
\right| \notag\\
& \lesssim&
T C_\varphi a_N N\varepsilon_N P_N^{-m_1}
= T C_\varphi O\left(\frac{P_N^{-m_1}}{\delta_N\varepsilon_N}\right)\rightarrow 0\,,\quad N\rightarrow\infty.
\end{eqnarray*}
Therefore, it is enough to estimate
\begin{align*}
I_7^N(t)
&:=
-\gamma
\int_{\{|p|\le P_N\}}\int_{\mathbb T}
\hat{\varphi}^*(p)
\left[
a_N\bar u_N
-
\frac{1}{|\bar\omega_N|^2}
\left(
\gamma\bar r_N^{(3)}
+\bar\omega_N\bar r_N^{(4)}
+\frac{\gamma}{2}\bar r_N^{(1)}
+\frac{\gamma}{2}\bar r_N^{(2)}
\right)
\right] \notag\\
&\quad\ 
\times
\mathcal L\left(\frac{\gamma}{D_N}\right)(p,k)\,dk\,dp .
\end{align*}

For $|p|\le P_N$, by \eqref{eq:DNtalor}, we have
\begin{equation}\label{eq:L-D-expansion}
\mathcal L\left(\frac{\gamma}{D_N}\right)(p,k)
=
\frac{i\varepsilon_Np\omega_N'(k)}{\gamma}
+
O\left(\left(\delta_NN\varepsilon_N|p|\right)^2\right).
\end{equation}
On the other hand, by \eqref{def:UN}, \eqref{eq:R1}--\eqref{eq:R4}, the expression
\begin{equation}
a_N\bar u_N
-
\frac{1}{|\bar\omega_N|^2}
\left(
\gamma\bar r_N^{(3)}
+\bar\omega_N\bar r_N^{(4)}
+\frac{\gamma}{2}\bar r_N^{(1)}
+\frac{\gamma}{2}\bar r_N^{(2)}
\right)
\end{equation}
is an even function of $k$. Since $\omega_N'(k)$ is odd, we have
\begin{align*}
&\int_{\mathbb T}
\left[
a_N\bar u_N
-
\frac{1}{|\bar\omega_N|^2}
\left(
\gamma\bar r_N^{(3)}
+\bar\omega_N\bar r_N^{(4)}
+\frac{\gamma}{2}\bar r_N^{(1)}
+\frac{\gamma}{2}\bar r_N^{(2)}
\right)
\right]
\frac{i\varepsilon_Np\omega_N'(k)}{\gamma}\,dk
=0 .
\end{align*}
Consequently,
\begin{align*}
I_7^N(t)
&=
-\gamma
\int_{\{|p|\le P_N\}}\int_{\mathbb T}
\hat{\varphi}^*(p)
\left[
a_N\bar u_N
-
\frac{1}{|\bar\omega_N|^2}
\left(
\gamma\bar r_N^{(3)}
+\bar\omega_N\bar r_N^{(4)}
+\frac{\gamma}{2}\bar r_N^{(1)}
+\frac{\gamma}{2}\bar r_N^{(2)}
\right)
\right] \notag\\
&\quad
\times
\left[
\mathcal L\left(\frac{\gamma}{D_N}\right)(p,k)
-\frac{i\varepsilon_Np\omega_N'(k)}{\gamma}
\right]\,dk\,dp .
\end{align*}

By Remark \ref{remark:粗略估计}, we have
\begin{align*}
&\left\|\widehat W_N(t,p,\cdot)\right\|_{L^1(\mathbb T)}
+\left\|\widehat W_{N,-}(t,p,\cdot)\right\|_{L^1(\mathbb T)}
+\left\|\widehat U_N(t,p,\cdot)\right\|_{L^1(\mathbb T)}
+\left\|\widehat U_{N,-}(t,p,\cdot)\right\|_{L^1(\mathbb T)}
\notag\\
&
+\left\|\bar w_N(t,p,\cdot)\right\|_{L^1(\mathbb T)}
+\left\|\bar w_{N,-}(t,p,\cdot)\right\|_{L^1(\mathbb T)}
\lesssim T N\varepsilon_N .
\end{align*}
Moreover,
\begin{equation*}
\Delta_N\omega(p,k)=O\left(\delta_NN\varepsilon_N|p|\right).
\end{equation*}
Using \eqref{eq:ubar-expanded}, we therefore get
\begin{align*}
&\sup_{0\le t\le T}
\int_{\mathbb T}
\left|
a_N\bar u_N
-
\frac{1}{|\bar\omega_N|^2}
\left(
\gamma\bar r_N^{(3)}
+\bar\omega_N\bar r_N^{(4)}
+\frac{\gamma}{2}\bar r_N^{(1)}
+\frac{\gamma}{2}\bar r_N^{(2)}
\right)
\right|\,dk
\notag\\
\lesssim&
N\varepsilon_N+TN|p|.
\end{align*}
Combining this bound with \eqref{eq:L-D-expansion}, we obtain
\begin{align*}
\sup_{0\le t\le T}\left|I_7^N(t)\right|
\quad&\lesssim\quad
\int_{\{|p|\le P_N\}}
|\hat{\varphi}(p)|
\left(N\varepsilon_N+TN|p|\right)
\left(\delta_NN\varepsilon_N|p|\right)^2\,dp \notag\\
&\lesssim\quad
\delta_N^2N^2\varepsilon_N^2
\left[
N\varepsilon_N
\int_{\mathbb R}|\hat{\varphi}(p)||p|^2\,dp
+
TN
\int_{\mathbb R}|\hat{\varphi}(p)||p|^3\,dp
\right] \notag\\
&\le\quad
T C_\varphi O\left(\delta_N^2N^3\varepsilon_N^2\right).
\end{align*}
By \ref{H2}, this term tends to $0$.

Combining the estimate of the tail region, the negligible remainder terms, and the estimate of $I_7^N(t)$, we conclude that
\begin{align*}
\sup_{0\le t\le T}\left|I_7^N(t)\right|
&\le
T C_\varphi O\left(\frac{P_N^{-m_1}}{\delta_N\varepsilon_N}\right)
+
T C_\varphi O\left(\delta_N^2N^3\varepsilon_N^2\right)
+T C_\varphi o(1).
\end{align*}
It tends to zero as $N \to \infty$, which proves \eqref{guji5}.
\end{proof}

For any $\varphi \in \mathcal{C}$, by multiplying both sides of \eqref{eq:wn} by a test function $\hat{\varphi}^*$ and integrating over $p$, we can consolidate the results from \eqref{guji1}, \eqref{guji2}, \eqref{guji3}, \eqref{guji4}, \eqref{guji5} and \ref{H2} to rewrite \eqref{eq:wn}  into the following equivalent weak form.
\begin{align} \label{eq:wnlast}
   \langle \widehat{W}_N(t), \hat{\varphi} \rangle &= \langle \mu_0, \varphi \rangle - \frac{\hat{c}}{\gamma} \int_0^t \int_{\mathbb{R}} \int_{\mathbb{T}} p^2 \hat{\varphi}^*(p) \widehat{W}_N(s,p,k) \, dk \, dp \, ds + T R_N^\varphi,
\end{align}
where $R_N^\varphi$ denotes a remainder term depending on $\varphi$ that tends to zero.
Using the property of the Fourier transform 
\begin{align*}
p^2 \hat{\varphi}^*(p) &= p^2 \int_{\mathbb{R}} e^{2\pi i p x} \varphi(x) \, dx 
= -\frac{1}{4\pi^2} \int_{\mathbb{R}} e^{2\pi i p x} \varphi''(x) \, dx 
= -\frac{1}{4\pi^2} \widehat{\varphi''}^*(p).
\end{align*}
Substituting this back yields the fundamental relation.
\begin{equation} \label{eq:star2}
\langle W_N(t), \varphi \rangle - \langle \mu_0, \varphi \rangle - \frac{\hat{c}}{4\pi^2\gamma} \int_0^t \langle W_N(s), \varphi'' \rangle \, ds = TR_N^\varphi.
\end{equation}

\section{Proof of Theorem \ref{mainthm}}
\begin{proof}
For any $t \ge 0$, we define the empirical measure $\mu_N(t) \in \mathcal{M}_+(\mathbb{R})$ as
$$
\mu_N(t) = \frac{\varepsilon_N}{2} \sum_{x \in \mathbb{Z}} \mathbb{E}\left[|\psi_x(a_N t)|^2\right] \delta_{\varepsilon_N x}.
$$
By the Plancherel theorem, we have $\langle \mu_N(t), \varphi \rangle = \langle \widehat{W}_N(t), \hat{\varphi} \rangle = \langle W_N(t), \varphi \rangle$. 

We endow $\mathcal M_+(\mathbb R)$ with the vague topology. By the Weierstrass approximation theorem and smooth truncation via cutoff functions, there exists a countable family $\{\varphi_m\} \subset C_c^\infty(\mathbb{R})$ satisfying the following property: for any $\varphi \in C_c(\mathbb{R})$, there exist a compact set $K'$ and a subsequence $\{\varphi_{k_j}\}$ such that $\operatorname{supp}(\varphi_{k_j}) \subset K'$, $\operatorname{supp}(\varphi) \subset K'$, and
$ \lim_{j \to \infty} \|\varphi_{k_j} - \varphi\|_\infty = 0$ (see \cite[Section~A.10]{seppalainen2008translation}).
For any $\mu,\mu'\in\mathcal M_+(\mathbb R),$ we define the metric $d(\mu, \mu')$ as
\begin{align}
    d(\mu, \mu') = \sum_{m=1}^\infty 2^{-m} \min\left\{1, \left|\mu(\varphi_m) - \mu'(\varphi_m)\right|\right\}.
\end{align}
It can be proved that $(\mathcal{M}_+(\mathbb{R}), d)$ is a Polish space and that the topology induced by the metric $d$ coincides with the vague topology. In other words, $d(\mu_n,\mu)\to0$ if and only if $\mu_n(\varphi)\to\mu(\varphi)$ for every $\varphi\in C_c(\mathbb{R})$ (see \cite[Section~A.10]{seppalainen2008translation}).
\par
Next, we apply the Arzelà--Ascoli Theorem~(\cite[Theorem~6.3.1, e.g.]{dixmierGeneralTopology1984} )
to prove that $\{\mu_N\}$ is relatively compact in $C([0,T], \mathcal{M}_+(\mathbb{R}))$ under the vague topology.

First, we prove pointwise relative compactness. For any $T > 0$ and any compact set $K \subset \mathbb{R}$, suppose $K \subset [-R_0, R_0]$. Let $\phi(x) = e^{-b_0\sqrt{1+x^2}} \in \mathcal{C}$. By direct differentiation, we obtain the bound $|\phi''| \le C_{\phi}\phi(x)$. 

Using \eqref{eq:star2}, we can estimate
\begin{align*}
\langle \mu_N(t), \phi \rangle &= \langle \mu_0, \phi \rangle + \frac{\hat{c}}{4\pi^2\gamma} \int_0^t \langle \mu_N(s), \phi'' \rangle \, ds + T R_N^\phi \\
&\le \frac{\hat{c}C_{\phi}}{4\pi^2\gamma} \int_0^t \langle \mu_N(s), \phi \rangle \, ds + \langle \mu_0, \phi \rangle + T R_N^\phi.
\end{align*}
By Gronwall's inequality, we deduce
$$
\langle \mu_N(t), \phi \rangle \le \left( \langle \mu_0, \phi \rangle + T R_N^\phi \right) e^{\frac{\hat{c}C_{\phi}}{4\pi^2\gamma} T}.
$$
Notice that (H3) ensures that $\langle \mu_0, \phi \rangle < \infty$ and  since
$$
\mu_N(t)(K) \le e^{b_0\sqrt{1+R_0^2}} \langle \mu_N(t), \phi \rangle,
$$
we conclude that $\sup_N \sup_{0 \le t \le T} \mu_N(t)(K) < \infty$. By the compactness theorem for positive Radon measures, the uniform local mass bound ensures that for each fixed $t \in [0,T]$, $\{\mu_N(t)\}$ is relatively compact in $\mathcal{M}_+(\mathbb{R})$.

Next, we prove equicontinuity. For any $T > 0$, $0 \le s, t \le T$, \eqref{eq:star2} yields
\begin{align*}
\left| \langle \mu_N(t) - \mu_N(s), \varphi_m \rangle \right| &= \left| \frac{\hat{c}}{4\pi^2\gamma} \int_s^t \langle \mu_N(u), \varphi_m'' \rangle \, du + T R_N^{\varphi_m} \right| \\
&\le |t-s| \frac{\hat{c}}{4\pi^2\gamma} \sup_N \sup_{0 \le t \le T} |\langle \mu_N(t), \varphi_m'' \rangle| + T R_N^{\varphi_m}.
\end{align*}
Since $\varphi_m'' \in C_c^\infty(\mathbb{R})$, the uniform boundedness implies
\begin{align*}
    \sup_N \sup_{0 \le t \le T} |\langle \mu_N(t), \varphi_m'' \rangle|
    =\widehat C_m<\infty.
\end{align*}
For any given $\varepsilon' > 0$, we can choose an integer $m_2 > 1$ such that $\sum_{m > m_2} 2^{-m} < \varepsilon'$. Then, we have
\begin{align*}
    d(\mu_{N(t)}, \mu_{N(s)}) 
    &= \sum_{m=1}^\infty 2^{-m} \min\left\{1, \left|\langle \mu_{N(t)} - \mu_{N(s)}, \varphi_m \rangle\right|\right\} \\
    &\le \sum_{m=1}^{m_2} 2^{-m} \left( |t-s| \frac{\hat{c}}{4\pi^2\gamma} \widehat{C}_m + T R_N^{\varphi_m} \right) + \sum_{m > m_2} 2^{-m} \\
    &\le \frac{\hat{c}}{4\pi^2\gamma} (\widehat{C}_1 + \dots + \widehat{C}_{m_2}) |t-s| + T \left(R_N^{\varphi_1} + \dots + R_N^{\varphi_{m_2}}\right) + \varepsilon'.
\end{align*}
Therefore, 
\begin{align}
    \lim_{\delta \to 0} \limsup_{N \to \infty} \sup_{\substack{0 \le s, t \le T \\ |t-s| \le \delta}} d(\mu_{N(t)}, \mu_{N(s)}) \le \varepsilon'.
\end{align}
Since $\varepsilon' > 0$ is arbitrary, we conclude that the sequence $\{\mu_N\}$ is equicontinuous.

By the Arzelà--Ascoli Theorem, $\{\mu_N\}$ is relatively compact in $C([0,T], \mathcal{M}_+(\mathbb{R}))$. Therefore, there exists a subsequence $\{\mu_{N_j}\}$ and  $\mu \in C([0,T], \mathcal{M}_+(\mathbb{R}))$ such that for any $\varphi \in C_c(\mathbb{R})$,
$$
\lim_{j \to \infty} \sup_{0 \le t \le T} \left| \langle \mu_{N_j}(t) - \mu(t), \varphi \rangle \right| = 0.
$$
For any $\varphi\in C_c^2(\mathbb R)$, passing to the limit $N_j \to \infty$ along the subsequence in \eqref{eq:star2}, we obtain
\begin{align}
    \langle \mu(t), \varphi \rangle = \langle \mu_0, \varphi \rangle + \frac{\hat{c}}{4\pi^2\gamma} \int_0^t \langle \mu(s), \varphi'' \rangle \, ds,
\end{align}
which means $\mu(t)$ is a weak solution of \eqref{measure heat equation}.

Due to the uniqueness of the weak solution to the heat equation, the entire sequence converges. That is, for any $\varphi \in C_c(\mathbb{R})$,
$$
\lim_{N \to \infty} \sup_{0 \le t \le T} \left| \langle \mu_N(t) - \mu(t), \varphi \rangle \right| = 0.
$$

Finally, applying Proposition \ref{thm:wavebijin}, we conclude that for any $\varphi \in C_c(\mathbb{R})$,
$$
\lim_{N \to \infty} \sup_{0 \le t \le T} \left| \varepsilon_N \sum_{x \in \mathbb{Z}} \varphi(\varepsilon_N x) \mathbb{E}[e_x(a_N t)] - \int_{\mathbb{R}} \varphi(y) \mu(t, dy) \right| = 0,
$$
where $\mu(t, dy)$ is the weak solution of \eqref{measure heat equation}.
\end{proof}

\appendix

\section{Schwartz Function and Fourier Transform}
In this appendix, we introduce some definitions and properties of Schwartz functions and Fourier transforms. The Schwartz space $\mathcal{S}(\mathbb{R})$ is defined as the set of all smooth functions whose derivatives of all orders decay faster than the inverse of any polynomial, that is 
\begin{align}
    \mathcal{S}(\mathbb{R}) := \bigg\{ \varphi \in C^\infty(\mathbb{R}) : \sup_{x \in \mathbb{R}} (1+|x|)^m |\partial^n \varphi(x)| < \infty, \quad \text{for   every} \ m, n \in \mathbb{N} \bigg\}.
\end{align}
For any $\varphi \in \mathcal{S}(\mathbb{R})$, we denote its associated seminorms by
\begin{equation}
    \|\varphi\|_{m, n} := \sup_{x \in \mathbb{R}} (1+|x|)^m |\partial^n \varphi(x)|.
\end{equation}
So for any $\varphi\in \mathcal{S}(\mathbb{R}),\ m \in \mathbb{N},\ y\in\mathbb{R},$ we have $|\varphi(y)| \lesssim C_{\varphi,m} (1+|y|)^{-m}.$
\par
Let $\mathbb{T} = \mathbb{R}/\mathbb{Z}$ be the one-dimensional torus, which can also be identified with $\left[-\frac{1}{2}, \frac{1}{2}\right]$. Let $\mathcal{F}$ and $\mathcal{F}^{-1}$ denote the Fourier transform and its inverse, respectively.
For functions $f\in \ell^2(\mathbb{Z})$, 
 and $\varphi \in \mathcal{S}(\mathbb{R})$, we define the Fourier transform and its inverse
 \begin{equation}
     \begin{aligned}
    \hat{f}(k) &:= (\mathcal{F}f)(k)=\sum_{x \in \mathbb{Z}} f_x e^{-2\pi i k x}, \quad k \in \mathbb{T}, \\
    (\mathcal{F}^{-1}\hat{f})_x&:= \int_{\mathbb{T}} \hat{f}(k) e^{2\pi i k x} \, dk, \quad x \in \mathbb{Z}, \\
    \hat{\varphi}(p) &:=(\mathcal{F}\varphi)(p)= \int_{\mathbb{R}} \varphi(x) e^{-2\pi i p x} \, dx, \quad p \in \mathbb{R}, \\
     (\mathcal{F}^{-1}\hat\varphi)(y)&:=\int_{\mathbb{R}} \hat\varphi(p) e^{2\pi i p y} \, dp, \quad y \in \mathbb{R}.
\end{aligned}
 \end{equation}
The Fourier transform is an automorphism on $\mathcal{S}(\mathbb{R})$, and $\varphi=\mathcal{F}^{-1}\hat\varphi,\ f=\mathcal{F}^{-1}\hat{f}$.

For $f, g \in \ell^2(\mathbb{Z})$, the discrete convolution is defined by
\begin{align}
    (f * g)_x := \sum_{y \in \mathbb{Z}} f_{x-y} g_y.
\end{align}
 Direct computation yields $\widehat{f \ast g} = \hat{f} \hat{g}$ and $\widehat{fg} = \hat{f} \ast \hat{g}$ and for $f, g \in \ell^2(\mathbb{Z})$, the Plancherel equality  holds
\begin{align}
    \sum_{x \in \mathbb{Z}} f_x g_x^* = \int_{\mathbb{T}} \hat{f}_k \hat{g}_k^* \, dk.
\end{align}
In particular, when $f = g$, we have $\|f\|_{\ell^2(\mathbb{Z})} = \|\hat{f}\|_{L^2(\mathbb{T})}$.

For Schwartz function $f \in \mathcal{S}(\mathbb{R})$ and  $p \in \mathbb{R}$, the Poisson summation formula holds~\cite[Theorem~3.1.17]{grafakosClassicalFourierAnalysis2008}
\begin{align}
    \sum_{x \in \mathbb{Z}} \hat{f}(x) e^{2\pi i p x} = \sum_{x \in \mathbb{Z}} f(p + x).
\end{align}
Note that both sides of the identity above are periodic functions of $p$ with period~$1$. By replacing $f$ with its inverse Fourier transform $\check{f}$, we obtain for any $k\in\mathbb{T},$
\begin{align} \label{eq:Poisson summation formula2}
    \sum_{x \in \mathbb{Z}} f(x) e^{2\pi i k x} = \sum_{x \in \mathbb{Z}} \hat{f}(k + x).
\end{align}

\begin{lemma} \label{lem:Fourier Transform}
For all real-valued function 
$\varphi \in \mathcal{C}$ and  $f, g \in \ell^2(\mathbb{Z})$
\begin{align}
    \sum_{x \in \mathbb{Z}} \varphi(\varepsilon_N x) f_x g_x^* = \int_{\mathbb{R} \times \mathbb{T}} \hat{\varphi}^*(p) \hat{f}\Big(k + \frac{\varepsilon_N p}{2}\Big) \hat{g}^*\Big(k - \frac{\varepsilon_N p}{2}\Big) \, dp \, dk.
\end{align}
\end{lemma}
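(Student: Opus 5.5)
The plan is to expand the right-hand side by Fourier inversion in both variables and reduce it to the left-hand side through a Poisson summation step. Write $\hat f(k_1)=\sum_{x}f_xe^{-2\pi i k_1x}$ and $\hat g^*(k_2)=\sum_{y}g_y^*e^{2\pi i k_2y}$. Substituting $k_1=k+\varepsilon_Np/2$ and $k_2=k-\varepsilon_Np/2$, the product becomes
\begin{equation*}
\sum_{x,y}f_xg_y^*\,e^{-2\pi i k(x-y)}\,e^{-\pi i\varepsilon_N p(x+y)}.
\end{equation*}
First I integrate in $k$ over $\mathbb T$. Orthogonality of characters kills every term with $x\neq y$, leaving $\sum_x f_xg_x^*e^{-2\pi i\varepsilon_N p x}$. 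Then I integrate against $\hat\varphi^*(p)$ in $p$. Since $\varphi$ is real, $\hat\varphi^*(p)=\int\varphi(u)e^{2\pi i pu}\,du$, and so
\begin{equation*}
\int_{\mathbb R}\hat\varphi^*(p)e^{-2\pi i\varepsilon_N px}\,dp=\varphi(\varepsilon_Nx)
\end{equation*}
by Fourier inversion on $\mathcal S(\mathbb R)$, using $\mathcal C\subset\mathcal S$. This yields exactly the left-hand side. Note that $\overline{\hat\varphi(p)}=\hat\varphi(-p)$, and the sign change is absorbed by the substitution $p\mapsto -p$ or, equivalently, by the inversion formula as stated.

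The only real issue is justifying the interchange of sums and integrals, since $f,g$ are merely in $\ell^2$ and the double series need not converge absolutely. I will handle this in two stages.

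\emph{Finitely supported case.} If $f$ and $g$ are finitely supported, all sums are finite. The integrand is then bounded by $\|f\|_{\ell^1}\|g\|_{\ell^1}|\hat\varphi(p)|$, which is integrable on $\mathbb R\times\mathbb T$, so Fubini applies and the computation above is rigorous.

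\emph{General case.} For general $f,g\in\ell^2(\mathbb Z)$, I approximate by truncations $f^{(L)}=f\mathbf 1_{|x|\le L}$, and similarly $g^{(L)}$. On the left-hand side, $|\varphi(\varepsilon_Nx)|\le\|\varphi\|_\infty$, so it converges by Cauchy--Schwarz. On the right-hand side I use, for each fixed $p$, Cauchy--Schwarz in $k$ together with translation invariance on $\mathbb T$:
\begin{equation*}
\int_{\mathbb T}\bigl|\hat f(k+\tfrac{\varepsilon_Np}{2})\hat g^*(k-\tfrac{\varepsilon_Np}{2})\bigr|\,dk\le\|f\|_{\ell^2}\|g\|_{\ell^2}.
\end{equation*}
This bound also holds for the difference with the truncations, giving an error of at most $\|\hat\varphi\|_{L^1}\bigl(\|f-f^{(L)}\|\,\|g\|+\|f^{(L)}\|\,\|g-g^{(L)}\|\bigr)\to0$. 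The same bound shows that the right-hand side is absolutely integrable for general $f,g$, so the identity passes to the limit.

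I expect no serious obstacle; the only care needed is the density and Fubini bookkeeping, together with checking the conjugation and sign conventions of $\hat\varphi^*$.
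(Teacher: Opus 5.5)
Your proof is correct and follows the same route as the paper. You expand both Fourier series, use orthogonality on $\mathbb T$ to force $x=x'$, and recover $\varphi(\varepsilon_N x)$ by Fourier inversion, using that $\varphi$ is real. Your truncation argument supplies a justification the paper skips: the paper only cites Fubini, which by itself covers $f,g\in\ell^1$ (enough for the paper's uses) but not general $f,g\in\ell^2(\mathbb Z)$.
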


\begin{proof}
By the definition of the Fourier transform and Fubini's Theorem, we expand the right-hand side as  
\begin{align*}
     &\int_{\mathbb{R} \times \mathbb{T}} \hat{\varphi}^*(p) \hat{f}\Big(k + \frac{\varepsilon_N p}{2}\Big) \hat{g}^*\Big(k - \frac{\varepsilon_N p}{2}\Big) \, dp \, dk \\
    &= \int_{\mathbb{R} \times \mathbb{T}} \hat{\varphi}^*(p) \bigg( \sum_{x \in \mathbb{Z}} f_x e^{-2\pi i (k + \frac{\varepsilon_N p}{2}) x} \bigg) \bigg( \sum_{x' \in \mathbb{Z}} g_{x'}^* e^{2\pi i (k - \frac{\varepsilon_N p}{2}) x'} \bigg) \, dp \, dk \\
    &= \sum_{x, x' \in \mathbb{Z}} f_x g_{x'}^* \bigg( \int_{\mathbb{R}} \hat{\varphi}^*(p) e^{-\pi i \varepsilon_N p (x + x')} \, dp \bigg) \bigg( \int_{\mathbb{T}} e^{2\pi i k (x' - x)} \, dk \bigg).\\
    &= \sum_{x, x' \in \mathbb{Z}} f_x g_{x'}^*  \varphi^*\bigg(\frac{\varepsilon_N(x+x')}{2}\bigg) \delta_{x, x'} \\
    &=\sum_{x \in \mathbb{Z}} \varphi^*(\varepsilon_N x)f_x g_x^*
    = \sum_{x \in \mathbb{Z}} \varphi(\varepsilon_N x)f_x g_x^*.
\end{align*}
\end{proof}

\section{Auxiliary Results}

\begin{lemma} \label{lem:eta}
The quantity $\eta_x^N$ defined by \eqref{def:etaN} satisfies  
\begin{enumerate}[label=\textup{(\roman*)}, align=left]
    \item  $\ \sup_{x,N} \eta_x^N<\infty$.
    \item If $|x|\leq N-M\delta_N(2N+1)$, then $\eta_x^N=0$\,.
\end{enumerate}
    \begin{proof}
Part (i) follows immediately from the definition of $\eta_x^N$ and Lemma \ref{lem:alpha_properties}~(iii) with $m=0$. For part (ii), since $\alpha^N$ is even, we have
\begin{align}
    \eta_x^N = \sum_{y=-N}^{N} \alpha_{x-y}^N - \omega_0^2 = \sum_{y=-N-x}^{N-x} \alpha_y^N - \omega_0^2. 
\end{align}
For $|x| \leq N - M \delta_N(2N+1)$, the summation limits satisfy $N-x \geq M \delta_N(2N+1)$ and $-N-x \leq -M \delta_N(2N+1)$. Since the support of $\alpha^N$ satisfies $\operatorname{supp}(\alpha^N) \subset [-M \delta_N(2N+1), M \delta_N(2N+1)]$, the summation covers the entire support of $\alpha^N$. Recalling that $\sum_{y \in \mathbb{Z}} \alpha_y^N = \omega_0^2$, we conclude that $\eta_x^N = 0$ in this region.
    \end{proof}
\end{lemma}
Now let 
\begin{equation} \label{def:JN}
    J_N := \{ x \in \{-N, \dots, N\} : |x| > N - M \delta_N (2N+1) \}.
\end{equation}
Then  $\eta_x^N$ is non-zero just for $x \in J_N$.

\begin{lemma} \label{lem:alpha_properties}
The sequence $\alpha^N$ defined by \eqref{def:alphaN} satisfy  
\begin{enumerate}[label=\textup{(\roman*)}, leftmargin=*, itemsep=0.3em]
  \item For any $k\in \mathbb{T},\ $
  $\displaystyle \widehat{\alpha^N}(k)\in\mathbb{R},\ 
  \widehat{\alpha^N}(k)=\widehat{\alpha^N}(-k)\geq \omega_0^2$.
 \item  As $N \to \infty$,  
\[
 \sum_{x \neq 0} |\alpha_x^N| = -\sum_{x \neq 0} \alpha_x^N 
 \to \lVert\alpha\rVert_{L^1(\mathbb{R})}.
\]
  \item For each $m \in \mathbb{N}$, there exist constants $C_m > 0$ independent of $N$ such that
  \[
 \sum_{x \in \mathbb{Z}} |x|^m |\alpha_x^N|  =  C_m(\delta_N N)^m+O((\delta_N N)^{m-1}).
\]
\item  
$\displaystyle \int_{\mathbb{T}} \left| (\widehat{\alpha^N})'(k)
 \right|^2\, dk 
\asymp \delta_N N$.
\end{enumerate}
\end{lemma}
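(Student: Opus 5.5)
We treat each part of Lemma~\ref{lem:alpha_properties} as a statement about the Riemann sums $\sum_x L_N^{-1}\alpha(x/L_N)\,g(x/L_N)$, where $L_N:=\delta_N(2N+1)\to\infty$. Since $\alpha\in C_c^\infty$, Poisson summation \eqref{eq:Poisson summation formula2} shows that such sums equal $\int\alpha g$ up to errors that are either $O(L_N^{-\infty})$ (for smooth $g$) or $O(L_N^{-1})$ (for merely Lipschitz-type $g$).

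\emph{Part (i).} By \ref{a2}, $\alpha^N_{x}=\alpha^N_{-x}$, so
\[
\widehat{\alpha^N}(k)=\sum_x\alpha^N_x\cos(2\pi kx),
\]
which is real and even in $k$. Substituting $\alpha^N_0=\omega_0^2-\sum_{y\ne0}\alpha^N_y$ gives
\[
\widehat{\alpha^N}(k)=\omega_0^2+\sum_{x\neq0}(-\alpha^N_x)\bigl(1-\cos(2\pi kx)\bigr)\ge\omega_0^2,
\]
because $-\alpha^N_x\ge0$ by \ref{a3}.

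\emph{Part (ii).} Since $\alpha\le0$, we have $\sum_{x\ne0}|\alpha^N_x|=-\sum_{x\ne0}L_N^{-1}\alpha(x/L_N)$. This is a Riemann sum of $|\alpha|$ with mesh $L_N^{-1}$, minus the single term $L_N^{-1}\alpha(0)=O(L_N^{-1})$. It therefore converges to $\int|\alpha|=\|\alpha\|_{L^1}$. More precisely, Poisson summation gives $\sum_x L_N^{-1}\alpha(x/L_N)=\sum_j\hat\alpha(jL_N)=\hat\alpha(0)+O(L_N^{-\infty})$.

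\emph{Part (iii).} For $m\ge1$ the $x=0$ term vanishes, so
\[
\sum_x|x|^m|\alpha^N_x|=L_N^m\sum_{x\ne0}L_N^{-1}|u|^m|\alpha(u)|\Big|_{u=x/L_N}.
\]
The function $|u|^m|\alpha(u)|$ is compactly supported and Lipschitz, so the Riemann sum equals $\int|u|^m|\alpha|\,du+O(L_N^{-1})$. This yields $C_m(\delta_NN)^m+O((\delta_NN)^{m-1})$ after absorbing the factor $2^m$ from $L_N=\delta_N(2N+1)\asymp\delta_NN$ into $C_m$. Note that $\delta_N(2N+1)$ differs from $2\delta_NN$ only by $\delta_N$, which is also absorbed in the error term. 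For $m=0$ the bound reads $\sum|\alpha^N_x|=|\alpha_0^N|+\sum_{x\neq 0}|\alpha^N_x|$, which is bounded by part (ii); in this case the statement is read as $C_0+O((\delta_NN)^{-1})$, or simply as boundedness, which is all that Lemma~\ref{lem:eta} uses.

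\emph{Part (iv).} This is the main step. Differentiating termwise,
\[
(\widehat{\alpha^N})'(k)=-2\pi i\sum_x x\alpha^N_xe^{-2\pi ikx},
\]
so by Plancherel
\[
\int_{\mathbb T}|(\widehat{\alpha^N})'|^2\,dk=4\pi^2\sum_x x^2(\alpha^N_x)^2=4\pi^2L_N\sum_x L_N^{-1}u^2\alpha(u)^2\Big|_{u=x/L_N}.
\]
The Riemann-sum argument again gives $L_N\bigl(\int u^2\alpha^2\,du+O(L_N^{-1})\bigr)$. Since $\alpha\not\equiv0$, the integral $\int u^2\alpha^2\,du$ is strictly positive, so the whole expression is $\asymp\delta_NN$ once $N$ is large. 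For the finitely many small $N$, positivity still holds whenever some $\alpha^N_x\neq0$ with $x\neq0$, and the constants can be adjusted accordingly.

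The only delicate points are these. In part (iv) one must use $\alpha\not\equiv0$ to obtain the lower bound. In part (iii) one must check that $|u|^m|\alpha|$ is Lipschitz, which holds for $m\ge1$ because $\alpha$ is smooth; this is needed for the $O(L_N^{-1})$ Riemann-sum error.
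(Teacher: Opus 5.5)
Your proposal is correct and follows the paper's proof essentially step for step. It uses the same rewriting $\widehat{\alpha^N}(k)=\omega_0^2+\sum_{x\neq0}(-\alpha^N_x)(1-\cos(2\pi kx))$ for (i), Riemann sums in $x/L_N$ for (ii)--(iii), and Plancherel followed by a Riemann sum for (iv). You also fix points the paper glosses over: absorbing the factor $2^m$ from $L_N=\delta_N(2N+1)$ into $C_m$, keeping the $x=0$ term when $m=0$, justifying the $O(L_N^{-1})$ error by Lipschitz regularity, and noting that $\alpha\not\equiv0$ is needed for the lower bound in (iv).
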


\begin{proof}
Let $L_N=\delta_N(2N+1)$.\\
\noindent (i). By the definition of the Fourier transform, we have
\[ 
\widehat{\alpha^N}(k) = \sum_{x \in \mathbb{Z}} \alpha_x^N e^{-2\pi i k x} = \sum_{x \in \mathbb{Z}} \alpha_x^N (\cos(2\pi k x) - i \sin(2\pi k x)). 
\]
Since the sequence $\alpha_x^N$ is even, the imaginary part vanishes which implies that $\widehat{\alpha^N}(k) \in \mathbb{R}$ and $\widehat{\alpha^N}(k) = \widehat{\alpha^N}(-k)$.

Specifically, separating the term at $x=0$, we obtain
\[ 
\widehat{\alpha^N}(k) = \omega_0^2 + \frac{1}{L_N} \sum_{x \neq 0} \alpha\left(\frac{x}{L_N}\right) (\cos(2\pi k x) - 1). 
\]
Since $\alpha(y) \leq 0$ and $\cos(2\pi k x) - 1 \leq 0$, each term in the sum is nonnegative. Therefore, we conclude that $\widehat{\alpha^N}(k) \geq \omega_0^2$.

\noindent (ii). For $x \neq 0$, by definition $\alpha_x^N \leq 0$, which gives $|\alpha_x^N| = -\alpha_x^N$. Thus,
\[ 
\sum_{x \neq 0} |\alpha_x^N| = -\sum_{x \neq 0} \alpha_x^N = -\frac{1}{L_N} \sum_{x \neq 0} \alpha\left(\frac{x}{L_N}\right). 
\]
Recognizing this as a Riemann sum, as $N \to \infty$, it converges to the integral
\[ 
-\int_{\mathbb{R}} \alpha(x) dx = \int_{\mathbb{R}} |\alpha(x)| dx = \lVert\alpha\rVert_{L^1(\mathbb{R})}. 
\]

\noindent (iii).  For $m \in \mathbb{N}$, by a Riemann sum approximation, we have
\begin{align*} 
\sum_{x \in \mathbb{Z}} |x|^m |\alpha_x^N| &= (L_N)^m \left[ \frac{1}{L_N} \sum_{x \neq 0} \left| \frac{x}{L_N} \right|^m \left| \alpha\left(\frac{x}{L_N}\right) \right| \right] \\ 
&= (L_N)^m \left( \int_{\mathbb{R}} |x|^m |\alpha(x)| dx + O\left(\frac{1}{\delta_N N}\right) \right) \\ 
&= C_m (\delta_N N)^m + O((\delta_N N)^{m-1}), 
\end{align*}
where $C_m := \int_{\mathbb{R}} |x|^m |\alpha(x)| dx > 0$.

\noindent (iv). Direct  computation yields 
\[ 
\int_{\mathbb{T}} \left| (\widehat{\alpha^N})'(k) \right|^2 dk 
= 4\pi^2 \sum_{x \in \mathbb{Z}} |x|^2 |\alpha_x^N|^2. 
\]
Again, recognizing the Riemann sum, we have
\begin{align*}
    \sum_{x \in \mathbb{Z}} |x|^2 |\alpha_x^N|^2 
    &= \frac{1}{(\delta_N L_N)^2} \sum_{x \neq 0} |x|^2 \left| \alpha\left(\frac{x}{\delta_N L_N}\right) \right|^2 
    \\&= \delta_N L_N \left[ \frac{1}{\delta_N L_N} \sum_{x \neq 0} \left| \frac{x}{\delta_N L_N} \right|^2 \left| \alpha\left(\frac{x}{\delta_N L_N}\right) \right|^2 \right]
 \\&=\delta_N L_N\int_{\mathbb{R}} |x|^2 |\alpha(x)|^2 dx+O\left(1\right)
 \asymp \delta_N N.
\end{align*}
\end{proof}

\begin{remark} \label{remark:alpha}
  Regarding the previous lemma we make the following observations.
\begin{enumerate}[label=\textup{(\arabic*)}, leftmargin=*, itemsep=0.3em]
    \item By (ii), $\alpha_0^N \to \omega_0^2 + \lVert\alpha\rVert_{L^1(\mathbb{R})} $.
    \item By (i) and (ii), for any $k\in \mathbb{T},\ \widehat{\alpha^N}(k)\asymp 1$.
    \item By (iii), 
    \[
   \sup_{k\in\mathbb{T}} \left| \frac{d^m}{dk^m} \widehat{\alpha^N}(k)\right| \lesssim C_m (\delta_N N)^m .
    \]
\end{enumerate}
\end{remark}

\begin{lemma} \label{lem:omega_N_properties}
For all $k\in \mathbb{T},\ $ $\omega_N(k)$ defined by \eqref{def:omega} satisfies 
\begin{enumerate}[label=\textup{(\roman*)}, leftmargin=*, itemsep=0.5em]
    \item  $\omega_N(k)\in\mathbb{R}$,\ 
    $\omega_N(k) = \omega_N(-k)$ and $\check{\omega}_N(-x) = \check{\omega}_N(x)$.
    
    \item 
    $\displaystyle   \int_{\mathbb{T}} |\omega_N'(k)|^2 \,dk \asymp  \delta_N N $.
    \item  For  $m \in \mathbb{N}$,
    \begin{align*}
         \sup_{k\in\mathbb{T}}\left| \frac{d^m}{dk^m} \omega_N(k) \right| =C_mO( (\delta_N N)^m). 
    \end{align*}
\end{enumerate}
\end{lemma}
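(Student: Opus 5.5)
The three items all follow from corresponding properties of $\widehat{\alpha^N}$ in Lemma~\ref{lem:alpha_properties} and Remark~\ref{remark:alpha}. The key fact is that $\omega_N=\sqrt{\widehat{\alpha^N}}$ with $\widehat{\alpha^N}$ real, even, and bounded above and below uniformly in $N$ and $k$: it is bounded below by $\omega_0^2$ and above by $\alpha_0^N+\sum_{x\neq0}|\alpha_x^N|\lesssim1$.

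\textbf{Part (i).} Lemma~\ref{lem:alpha_properties}(i) gives $\widehat{\alpha^N}(k)\ge\omega_0^2>0$, so $\omega_N(k)$ is a well-defined positive real number. Evenness of $\widehat{\alpha^N}$ then gives $\omega_N(k)=\omega_N(-k)$. Finally,
\[
\check\omega_N(-x)=\int_{\mathbb T}\omega_N(k)e^{-2\pi ikx}\,dk,
\]
and the change of variables $k\mapsto-k$ on $\mathbb T$ turns this into $\check\omega_N(x)$.

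\textbf{Part (iii).} Write $\omega_N=g\circ\widehat{\alpha^N}$ with $g(s)=\sqrt s$, which is smooth on the compact interval $[\omega_0^2,C]$ containing the range of $\widehat{\alpha^N}$. Faà di Bruno's formula expresses $\omega_N^{(m)}$ as a finite sum of terms
\[
g^{(j)}(\widehat{\alpha^N})\prod_{i=1}^{j}(\widehat{\alpha^N})^{(m_i)},\qquad m_1+\dots+m_j=m,\ m_i\ge1.
\]
Each $g^{(j)}$ is bounded on $[\omega_0^2,C]$ uniformly in $N$. By Remark~\ref{remark:alpha}(3), each factor satisfies $|(\widehat{\alpha^N})^{(m_i)}|\lesssim(\delta_NN)^{m_i}$, so every term is $O((\delta_NN)^m)$. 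The $m=0$ case is just the uniform boundedness of $\omega_N$.

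\textbf{Part (ii).} Differentiating gives $\omega_N'=(\widehat{\alpha^N})'/(2\omega_N)$. Since $\omega_N\asymp1$ uniformly, this yields the pointwise comparison $|\omega_N'|^2\asymp|(\widehat{\alpha^N})'|^2$ with constants independent of $N$ and $k$. Integrating over $\mathbb T$ and invoking Lemma~\ref{lem:alpha_properties}(iv) gives $\int_{\mathbb T}|\omega_N'|^2\,dk\asymp\delta_NN$.

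\textbf{Main obstacle.} The only delicate point is the uniform upper bound on $\widehat{\alpha^N}$, and through it the uniformity in $N$ of the Faà di Bruno constants. This bound comes from Lemma~\ref{lem:alpha_properties}(ii), which shows $\sum_{x}|\alpha_x^N|\to\omega_0^2+2\|\alpha\|_{L^1}$ and hence is bounded for all large $N$. I will record this at the outset so that every subsequent constant is visibly independent of $N$.
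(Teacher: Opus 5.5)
Your proposal is correct and follows the paper's argument. For (i) and (ii) it uses the same reasoning: evenness of $\widehat{\alpha^N}$, and $|\omega_N'|^2=|(\widehat{\alpha^N})'|^2/(4\widehat{\alpha^N})$ combined with $\widehat{\alpha^N}\asymp1$ and Lemma~\ref{lem:alpha_properties}(iv). The only variation is in (iii): you apply Fa\`a di Bruno to $\sqrt{\cdot}$, whereas the paper inducts on $m$ via the Leibniz expansion of $\omega_N^2=\widehat{\alpha^N}$ and divides by $2\omega_N$. Both routes rest on the same inputs, namely $\omega_N\ge\omega_0$ and Remark~\ref{remark:alpha}(3).
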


\begin{proof}
\noindent \textbf{(i)} By Lemma \ref{lem:alpha_properties}, $\widehat{\alpha^N}(k) = \widehat{\alpha^N}(-k)$. It immediately follows that $\omega_N(k) = \sqrt{\widehat{\alpha^N}(k)} = \sqrt{\widehat{\alpha^N}(-k)} = \omega_N(-k)$. Consequently, the inverse Fourier transform is symmetric, yielding $\check{\omega}_N(-x) = \check{\omega}_N(x)$.

\vspace{0.5em}
\noindent \textbf{(ii)}
$
|\omega_N'(k)|^2 = \frac{\left| (\widehat{\alpha^N})'(k) \right|^2}{4 \widehat{\alpha^N}(k)}. $
From Lemma \ref{lem:alpha_properties} (iv) and  Remark \ref{remark:alpha} (2).

\vspace{0.5em}
\noindent \textbf{(iii)} We proceed by mathematical induction on $m$. For $m=0$, the bound easily follows from the fact that $\omega_N(k)$ is bounded below and $\widehat{\alpha^N}(k) = O(1)$. 

Assume that the bound holds for all derivatives up to order $m-1$. Since $\widehat{\alpha^N}(k) = \omega_N^2(k)$,   the general Leibniz rule  yields 
\[ 
\frac{d^m}{dk^m} \widehat{\alpha^N}(k) = \sum_{j=0}^{m} \binom{m}{j} \left( \frac{d^j}{dk^j} \omega_N(k) \right) \left( \frac{d^{m-j}}{dk^{m-j}} \omega_N(k) \right). 
\]
Isolating the terms involving the $m$-th derivative (i.e., $j=0$ and $j=m$), we get
\[ 
2 \omega_N(k) \frac{d^m}{dk^m} \omega_N(k) = \frac{d^m}{dk^m} \widehat{\alpha^N}(k) - \sum_{j=1}^{m-1} \binom{m}{j} \left( \frac{d^j}{dk^j} \omega_N(k) \right) \left( \frac{d^{m-j}}{dk^{m-j}} \omega_N(k) \right). 
\]
Taking the absolute value and applying the triangle inequality, along with the bound for $\frac{d^m}{dk^m} \widehat{\alpha^N}(k)$ and the induction hypothesis, we deduce

$$
\left| 2 \omega_N(k) \frac{d^m}{dk^m} \omega_N(k) \right| \leq C_m (\delta_N N)^m + \sum_{j=1}^{m-1} C_{m,j} (\delta_N N)^j (\delta_N N)^{m-j} \leq \widetilde{C}_m (\delta_N N)^m. 
$$
Since $\omega_N(k) \asymp 1$, this completes the proof.  
\end{proof}
Actually, we have a more precise estimate for (ii).
\begin{lemma} \label{lem:aN limit}
    For any $m \in \mathbb{N}_+$,  
    \begin{align}
        \Bigg| \frac{1}{\delta_N(2N+1)} \int_{\mathbb{T}} |\omega_N'(k)|^2 \, dk - \frac{1}{4} \int_{\mathbb{R}} \frac{|\hat{\alpha}'(p)|^2}{\omega_0^2 + \|\alpha\|_{L^1} + \hat{\alpha}(p)} \, dp \Bigg| = O((\delta_N N)^{-m}).
    \end{align}
\end{lemma}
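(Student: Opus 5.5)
We need $L_N=\delta_N(2N+1)$ and the quantity $\frac{1}{L_N}\int_{\mathbb T}|\omega_N'(k)|^2\,dk$. Since $|\omega_N'|^2=|(\widehat{\alpha^N})'|^2/(4\widehat{\alpha^N})$, the plan is to rescale $k=p/L_N$ and compare $\widehat{\alpha^N}(p/L_N)$ with its continuum analogue $\omega_0^2+\|\alpha\|_{L^1}+\hat\alpha(p)$, up to errors of order $L_N^{-m}$.

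\textbf{Step 1: Poisson summation for $\widehat{\alpha^N}$.} For $k\in\mathbb T$, write
\[
\widehat{\alpha^N}(k)=\alpha_0^N+\sum_{x\ne0}\alpha_x^Ne^{-2\pi ikx}=\alpha_0^N-\frac{\alpha(0)}{L_N}+\frac1{L_N}\sum_{x\in\mathbb Z}\alpha\!\left(\tfrac{x}{L_N}\right)e^{-2\pi ikx}.
\]
By \eqref{eq:Poisson summation formula2} applied to $f(y)=\alpha(y/L_N)/L_N$, whose Fourier transform is $\hat\alpha(L_N\,\cdot)$, the last sum equals $\sum_{j\in\mathbb Z}\hat\alpha(L_N(k+j))$. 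Also $\sum_{x\ne0}\alpha_x^N=\sum_j\hat\alpha(L_Nj)-\alpha(0)/L_N$, hence $\alpha_0^N=\omega_0^2-\sum_j\hat\alpha(L_Nj)+\alpha(0)/L_N$. Since $\hat\alpha\in\mathcal S$ and $-\hat\alpha(0)=\|\alpha\|_{L^1}$ (because $\alpha\le0$), this gives
\[
\widehat{\alpha^N}(k)=\omega_0^2+\|\alpha\|_{L^1}+\sum_j\hat\alpha(L_N(k+j))+O(L_N^{-m}).
\]
In the same way,
\[
(\widehat{\alpha^N})'(k)=L_N\sum_j\hat\alpha'(L_N(k+j)).
\]
For $k\in[-\tfrac12,\tfrac12]$, only the term $j=0$ is non-negligible: the terms with $j\ne0$ are $O(L_N^{-m})$ uniformly, by the Schwartz decay of $\hat\alpha$ and $\hat\alpha'$ evaluated at arguments of size at least $L_N/2$.

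\textbf{Step 2: Rescaling.} Consequently,
\[
\frac1{L_N}\int_{\mathbb T}|\omega_N'|^2\,dk=\frac1{4L_N}\int_{-1/2}^{1/2}\frac{L_N^2|\hat\alpha'(L_Nk)+O(L_N^{-m})|^2}{\omega_0^2+\|\alpha\|_{L^1}+\hat\alpha(L_Nk)+O(L_N^{-m})}\,dk.
\]
Substitute $p=L_Nk$; the integral becomes
\[
\frac14\int_{|p|\le L_N/2}\frac{|\hat\alpha'(p)+O(L_N^{-m})|^2}{\omega_0^2+\|\alpha\|_{L^1}+\hat\alpha(p)+O(L_N^{-m})}\,dp.
\]
The denominator is bounded below by $\omega_0^2>0$, since $\hat\alpha(p)\ge\hat\alpha(0)=-\|\alpha\|_{L^1}$ (again because $\alpha\le0$). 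Therefore the perturbations in the numerator and denominator change the integrand by $O(L_N^{-m})(|\hat\alpha'(p)|+L_N^{-m})$. Integrated over $|p|\le L_N/2$, this contributes $O(L_N^{-m})+O(L_N^{1-2m})$. Since $m$ is arbitrary, we relabel it.

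\textbf{Step 3: Tail.} The omitted region $|p|>L_N/2$ contributes $O(L_N^{-m})$ by the rapid decay of $\hat\alpha'$. Since $L_N\asymp\delta_NN$, the claim follows.

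\textbf{Main obstacle.} The main care point is Step 1: bookkeeping the $x=0$ term and the exact constant $\alpha_0^N$, so that the constant part is exactly $\omega_0^2+\|\alpha\|_{L^1}$ up to $O(L_N^{-m})$ rather than merely up to the $O(L_N^{-1})$ that a naive Riemann sum would give. This is exactly what Poisson summation provides.
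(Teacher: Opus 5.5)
Your proposal is correct and follows essentially the same route as the paper. Poisson summation applied to $f_N(x)=L_N^{-1}\alpha(x/L_N)$, with the $x=0$ terms cancelling exactly, gives $\widehat{\alpha^N}(k)=\omega_0^2+\|\alpha\|_{L^1}+\hat\alpha(L_Nk)+O(L_N^{-m})$ and the analogous expansion of $(\widehat{\alpha^N})'$; then you rescale $p=L_Nk$, use $\|\alpha\|_{L^1}+\hat\alpha(p)\ge 0$ (from $\alpha\le 0$) to bound the denominator below, and discard the Schwartz tail $|p|>L_N/2$, exactly as the paper does, with your Step 2 error bookkeeping slightly cleaner than the paper's.
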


\begin{proof}
Let $L_N=\delta_N(2N+1),\ f_N(x) = \frac{1}{L_N} \alpha\big(\frac{x}{L_N}\big) \in C_c^\infty(\mathbb{R})$. A direct computation of its Fourier transform yields
$$
\widehat{f_N}(p) = \int_{\mathbb{R}} f_N(x) e^{-2\pi i p x} \, dx = \int_{\mathbb{R}} \alpha\Big(\frac{x}{L_N}\Big) e^{-2\pi i (L_N p) \frac{x}{L_N}} \, d\Big(\frac{x}{L_N}\Big) = \hat{\alpha}(L_N p).
$$

Recall the expression for $\widehat{\alpha^N}(k)$. Since $f_N$ is an even function, we can rewrite the sum over $\mathbb{Z}$ and apply the Poisson summation formula \eqref{eq:Poisson summation formula2},
\begin{align*}
    \widehat{\alpha^N}(k) &= \omega_0^2 - \sum_{x \neq 0} f_N(x) + \sum_{x \neq 0} f_N(x) e^{-2\pi i k x} \\
    &= \omega_0^2 - \sum_{x \in \mathbb{Z}} f_N(x) + \sum_{x \in \mathbb{Z}} f_N(x) e^{2\pi i k x} \\
    &= \omega_0^2 - \sum_{x \in \mathbb{Z}} \widehat{f_N}(x) + \sum_{x \in \mathbb{Z}} \widehat{f_N}(k+x) \\
    &= \omega_0^2 + \sum_{x \in \mathbb{Z}} \Big( \hat{\alpha}\big(L_N(k+x)\big) - \hat{\alpha}(L_N x) \Big) \\
    &= \omega_0^2 + \|\alpha\|_{L^1} + \hat{\alpha}(L_N k)+
    \sum_{x\in\mathbb Z \setminus\{0\}} \Big( \hat{\alpha}\big(L_N(k+x)\big) - \hat{\alpha}(L_N x) \Big) .
\end{align*}
Differentiating with respect to $k$, we obtain 
$$(\widehat{\alpha^N})'(k) = L_N \sum_{x \in \mathbb{Z}} \hat{\alpha}'\big(L_N(k+x)\big).$$
\par Since $\hat{\alpha}, \hat{\alpha}' \in \mathcal{S}(\mathbb{R})$, for any $m \in \mathbb{N}_+$, we have $|\hat{\alpha}(y)| + |\hat{\alpha}'(y)| \lesssim C_m |y|^{-m}$. Note that for $k \in \mathbb{T} \equiv [-\frac{1}{2}, \frac{1}{2}]$ and $x\in\mathbb Z \setminus\{0\}$, 
\begin{align*}
    |k+x|\ge |x|-|k|\ge |x|-\frac12\ge\frac{|x|}{2}.
\end{align*}
Therefore, for any $m \geq 2$, 
$$\sum_{x\neq 0}|\widehat{\alpha}(L_N(k+x))|
\leq C_m \sum_{x\neq0} \left( L_N\frac{|x|}{2}\right)^{-m}
\leq C_m^{'}L_N^{-m}\lesssim L_N^{-m+1}.$$
The same argument applies to $\sum_{x\neq0}|\widehat{\alpha}(L_Nx)|$, $\sum_{x\neq 0}|\hat{\alpha}'\big(L_N(k+x)\big)|$.
Choosing the decay orders in the preceding estimates sufficiently large, for any $m\in \mathbb{N_+}$, we arrive at the asymptotic expansions
\begin{align*}
    \widehat{\alpha^N}(k) &= \omega_0^2 + \|\alpha\|_{L^1} + \hat{\alpha}(L_N k) + O(L_N^{-m}), \\
    (\widehat{\alpha^N})'(k) &= L_N \hat{\alpha}'(L_N k) + O(L_N^{-2m}).
\end{align*}
Now we evaluate the integral of $|\omega_N'(k)|^2$. Substituting the expansions above, we have
\begin{align*}
   \frac{1}{L_N}\int_{\mathbb{T}} |\omega_N'(k)|^2 \, dk 
    &= \frac{1}{4} \int_{\mathbb{T}} \frac{|(\widehat{\alpha^N})'(k)|^2}{L_N\widehat{\alpha^N}(k)} \, dk \\
    &= \frac{1}{4} \int_{-\frac{1}{2}}^{\frac{1}{2}} \frac{L_N |\hat{\alpha}'(L_N k)|^2 + O(L_N^{-2m})}{\omega_0^2 + \|\alpha\|_{L^1} + \hat{\alpha}(L_N k) + O(L_N^{-m+1})} \, dk.
\end{align*}
Since $\alpha$ is non-positive even,
\begin{align*}
\|\alpha\|_{L^1}+\hat\alpha(L_N k)
=\int_{\mathbb R}
(-\alpha(x))
\left(1-\cos(2\pi L_N kx)\right)\,dx
\geq0.
\end{align*}
Expanding the denominator, the error term can be decoupled additively
$$
\frac{1}{L_N}\int_{\mathbb{T}} |\omega_N'(k)|^2 \, dk = \frac{1}{4} \int_{-\frac{1}{2}}^{\frac{1}{2}} \frac{L_N |\hat{\alpha}'(L_N k)|^2}{\omega_0^2 + \|\alpha\|_{L^1} + \hat{\alpha}(L_N k)} \, dk + O(L_N^{-m-1}).
$$
Performing the change of variables $p = L_N k$, we get
$$
\frac{1}{L_N}\int_{\mathbb{T}} |\omega_N'(k)|^2 \, dk = \frac{1}{4} \int_{-L_N/2}^{L_N/2} \frac{ |\hat{\alpha}'(p)|^2}{\omega_0^2 + \|\alpha\|_{L^1} + \hat{\alpha}(p)} \, dp + O(L_N^{-m-1}).
$$
Because $\hat{\alpha}' \in \mathcal{S}(\mathbb{R})$ and the denominator is uniformly bounded below by $\omega_0^2 > 0$, the tail of the integral for $|p| > L_N/2$ is bounded by $O(L_N^{-m})$. Extending the domain of integration to $\mathbb{R}$, we conclude that for any $m \in \mathbb{N}_+$,
$$
\Bigg| \frac{1}{L_N} \int_{\mathbb{T}} |\omega_N'(k)|^2 \, dk - \frac{1}{4} \int_{\mathbb{R}} \frac{|\hat{\alpha}'(p)|^2}{\omega_0^2 + \|\alpha\|_{L^1} + \hat{\alpha}(p)} \, dp \Bigg| = O(L_N^{-m}).
$$
\end{proof}

\bibliographystyle{plain}
\bibliography{references}

\end{document}